\newtheorem{theorem}{Theorem}[section]
\newtheorem{lemma}[theorem]{Lemma}
\newtheorem{corollary}[theorem]{Corollary}
\newtheorem{definition}[theorem]{Definition}
\newtheorem{remark}[theorem]{Remark}
\newtheorem{assumption}[theorem]{Assumption}
\newcommand{\st}{\mathrm{ s.t. }}
\newcommand{\R}{\mathbb{R}}
\newcommand{\C}{\mathbb{C}}
\newcommand{\LL}{\mathbb{L}}
\newcommand{\cA}{\mathcal{A}}
\newcommand{\cM}{\mathcal{M}}
\newcommand{\cV}{\mathcal{V}}
\newcommand{\cP}{\mathcal{P}}
\newcommand{\cL}{\mathcal{L}}
\newcommand{\tr}{\mathrm{tr}}
\newcommand{\Rnn}{\R^{n \times n}}
\newcommand{\Rnk}{\R^{n \times p}}
\newcommand{\Rkk}{\R^{p \times p}}
\newcommand{\diag}{{\rm diag}}
\newcommand{\sspan}{\mathrm{span}}
\newcommand{\Diag}{\mathrm{Diag}}
\newcommand{\zz}{^{\mathrm{T}}}
\newcommand{\inv}{^{-1}}
\newcommand{\ff}{_{\mathrm{F}}}
\newcommand{\mR}{\mathbb{R}}
\newcommand{\lb}{\lambda}
\newcommand{\hlb}{\hat{\lambda}}
\newcommand{\Lb}{\Lambda}
\newcommand{\half}{\frac{1}{2}}
\newcommand{\Ld}{L^\dagger}
\newcommand{\yisan}{\frac{1}{3}}
\newcommand{\ersan}{\frac{2}{3}}
\newcommand{\roX}{\rho(X^*)}
\newcommand{\roY}{\rho(Y)}
\newcommand{\ltb}{\left(}
\newcommand{\rtb}{\right)}
\newcommand{\mI}{\mathcal{I}}
\newcommand{\mIs}{\mathcal{I}^*}
\newcommand{\Pj}{\mathbf{P}^{\perp}}
\newcommand{\lmax}{\lambda_{\max}}
\newcommand{\lmin}{\lambda_{\min}}
\newcommand{\Ph}{P^{\half}}
\newcommand{\FFVS}{F_{\phi}(V^*)}
\newcommand{\iprod}[2]{\left \langle #1, #2 \right \rangle }
\newcommand{\be}{\begin{equation}}
\newcommand{\ee}{\end{equation}}
\newcommand{\bee}{\begin{equation*}}
\newcommand{\eee}{\end{equation*}}
\newcommand{\bea}{\begin{eqnarray}}
\newcommand{\eea}{\end{eqnarray}}
\newcommand{\beaa}{\begin{eqnarray*}}
\newcommand{\eeaa}{\end{eqnarray*}}
\newcommand{\minitab}[2][l]{\begin{tabular}{#1}#2\end{tabular}}
\begin{document}
%\title{On the Convergence of the Self-Consistent Field Iteration in Kohn-Sham Density Functional Theory}
\title{On the Analysis of the Discretized Kohn-Sham Density Functional Theory}

%\author{Internal Discussion}

\author{Xin Liu\thanks{State Key Laboratory of Scientific and Engineering Computing, Academy of Mathematics and Systems Science, Chinese Academy of
Sciences, CHINA (liuxin@lsec.cc.ac.cn). Research supported in part
by NSFC grants 11101409, 11331012 and 91330115, and the National Center for
Mathematics and Interdisciplinary Sciences, CAS.}
\and Zaiwen Wen\thanks{Beijing International Center for Mathematical
Research, Peking University, CHINA (wenzw@math.pku.edu.cn).
Research supported in part by NSFC grants 11101274, 11322109 and 91330202.}
\and Xiao Wang\thanks{School of Mathematical Sciences, University of Chinese
Academy of Sciences, CHINA (wangxiao@ucas.ac.cn). Research supported in part by Postdoc Grant 119103S175, UCAS president grant Y35101AY00, and NSFC grant 11301505.}
\and Michael Ulbrich\thanks{Chair of Mathematical Optimization, Department of Mathematics, Technische Universit\"at M\"unchen, Boltzmannstr. 3, 85747 Garching b. M\"unchen, Germany.
(mulbrich@ma.tum.de).}
\and  Yaxiang Yuan\thanks{State Key Laboratory of Scientific and Engineering Computing,
Academy of Mathematics and Systems Science, Chinese Academy of
Sciences, CHINA (yyx@lsec.cc.ac.cn). Research supported in part
by NSFC grant 11331012.}
}

% \date{\today}
\maketitle

{\footnotesize
\textbf{Abstract.}
In this paper, we study a few theoretical issues in the discretized Kohn-Sham
(KS) density functional theory (DFT). The equivalence 
between either a local or global minimizer of the KS total energy
 minimization problem and the solution to the KS equation is established under
 certain assumptions. The nonzero charge densities of a strong local minimizer
 are shown to be bounded below by a positive constant uniformly.
We analyze the self-consistent field (SCF) iteration by formulating the KS equation as a fixed point map
with respect to the potential. The Jacobian of these fixed point maps is derived
explicitly. Both global and local convergence of the simple mixing scheme can be
established if the gap between the occupied states and unoccupied
states is sufficiently large.  This assumption can be relaxed if the charge density is computed using
 the Fermi-Dirac distribution and it is not required if there is no exchange
correlation functional in the total energy functional. Although our assumption on the gap  is very stringent and is almost
 never satisfied in reality, our analysis is still valuable for a better
 understanding of the KS minimization problem, the KS equation and the SCF iteration.

\textbf{Key words.} Kohn-Sham total energy minimization, Kohn-Sham equation, self-consistent field iteration, nonlinear eigenvalue problem

\textbf{AMS subject classifications.} 15A18,  65F15, 47J10, 90C30 
}

\section{Introduction}\label{sec:intro}
The Kohn-Sham  density functional theory in electronic structure calculations can be
formulated as either a total energy minimization
problem or a nonlinear eigenvalue problem. Using a suitable discretization
scheme whose   spatial degree of freedom is $n$, the electron wave
functions of $p$ occupied states can be approximated by a matrix  $X=[x_1,
\ldots, x_p] \in \Rnk$. The charge density of electrons associated with the occupied states
is defined as 
 \begin{eqnarray}\label{eq:rho}
\rho(X) := \diag(XX\zz),
\end{eqnarray}
where $\diag(A)$ denotes the vector containing the diagonal elements of the
matrix $A$. Let $\tr(A)$ be the trace of $A \in
\R^{n\times n}$, i.e., the sum of the diagonal elements of $A$. A commonly used discretized KS total energy
function has the form of 
\be \label{eq:KS-energy} E(X):=\frac{1}{4}\tr(X\zz L X) + \half\tr(X\zz V_{ion}  X)
 + \frac{1}{4} \rho^\top
L^\dagger \rho + \half e\zz \epsilon_{xc}(\rho), \ee
where   $L$ is a finite dimensional representation of the Laplacian operator, 
$V_{ion}$ is the ionic
pseudopotentials sampled on a suitably chosen Cartesian grid,  $L^\dagger$
 corresponds to the pseudo-inverse of $L$, $e$ is the column vector
of all ones and $\epsilon_{xc}(\rho)$ denotes the exchange correlation
energy functional.  The four terms in $E(X)$ describe the kinetic energy, local ionic potential energy,   Hartree
potential energy and exchange correlation
energy, respectively.

 The KS total energy minimization problem  solves
\be\label{prob:minKS}  \begin{aligned} \min_{X \in \Rnk} & \quad   E(X)  \\
  \st \; \; &\quad X\zz X = I.
  \end{aligned} \ee
  The orthogonality
constraints are imposed since  the wave
functions $X$ must be orthogonal to each other due to  physical constraints.
 It can be verified that the gradient of $E(X)$ with respect to $X$ is
 $\nabla E(X) = H(X)X$, where the Hamiltonian  $H(X)\in\Rnn$ is a matrix function  
 \be \label{eq:H} 
 H(X) := \half L + V_{ion}  +
\Diag(  L^\dagger \rho) + \Diag( \mu_{xc}(\rho)\zz e),\ee
 where $ \mu_{xc}(\rho) =\frac{ \partial \epsilon_{xc}}{\partial \rho} \in \R^{n \times
 n}$ and $\Diag(x)$  denotes a diagonal matrix
with $x$ on its diagonal.  The so-called KS equation is
\be \label{eq:KS} \begin{aligned}  H(X) X &=  X \Lb, \\
  X\zz X &=  I,
  \end{aligned} \ee
where $\Lb$  is a diagonal matrix consisting of $p$ smallest eigenvalues of
$H(X)$. The KS equation \eqref{eq:KS} is closely related to 
    the first-order optimality conditions for \eqref{prob:minKS} which are  
   the same as \eqref{eq:KS} except that the diagonal  matrix $\Lb$ consists of
   any $p$ eigenvalues of $H(X)$ rather than the $p$ smallest ones.

In this paper, we first study the relationship between the KS total energy
 minimization problem \eqref{prob:minKS} and the KS equation \eqref{eq:KS} under
 certain conditions. A simple counter example is provided to demonstrate that
 the solutions of these two problems are not necessarily the same.  The second-order optimality conditions of  \eqref{prob:minKS} are examined based
 on the assumption of the existence of the second-order derivative of the exchange correlation functional \cite{GaoYangMeza2009,Wen2013}.
 For a specialized exchange correlation functional, we prove that a global
 solution of \eqref{prob:minKS}  is a solution of 
 \eqref{eq:KS} if the gap between the $p$th and $(p+1)$st eigenvalues of the
 Hamiltonian $H(X)$ is sufficiently large.  The equivalence between a local
 minimizer of \eqref{prob:minKS} and the solution 
 \eqref{eq:KS} needs an additional assumption that the corresponding charge densities are all positive.
 For a strong local minimizer $X^*$ which is defined based on the second-order
 sufficient optimality conditions of \eqref{prob:minKS}, we show that the nonzero charge densities at  $X^*$ are bounded below
 by a positive constant uniformly.

Our second purpose is the analysis of the most widely used approach, the
self-consistent field (SCF) iteration, for solving the KS equation \eqref{eq:KS}.  The SCF iteration is   
 based on computing a sequence of
linear eigenvalue problems iteratively.   It is well known that the basic version of SCF iteration often converges slowly or fails to converge \cite{KoutechyBonacic1971} even with the help of various heuristics.  A convergence analysis of the SCF
iteration for solving the Hartree-Fock equations according to the optimal
damping algorithm (ODA) is established in \cite{CancesLeBris2000} and  an analysis of gradient-based algorithms  for the Hartree-Fock equations
is proposed in \cite{Levitt2012} using Lojasiewiscz inequality. 
 The interested reader is referred to 
 \cite{LeBris2005,Cances2000,Cances2001,Cancesetal2000,Cancesetal2003,Cancesetal2008,ChenZhouArXiv,ChenZhou2013,DaiZhou2011,DaiZhou2008,Kudinetal2002,Schneideretall2009}
 for discussion on ODA, the gradient-based algorithms and numerical analysis of DFT. 
  A condition  is identified in \cite{YangGaoMeza2009}
 such that
the SCF iteration is 
a contractive fixed point iteration under a specific form of the Hamiltonian
without involving any exchange correlation term.
  Global and local
convergence of the SCF iteration for general Kohn-Sham DFT is established in \cite{LiuWangWenYuan}
 from an optimization point of view.
   Their assumptions include that the second-order derivative of
  the exchange correlation energy functional is uniformly bounded from above and the gap between the $p$th and $(p+1)$st eigenvalues of the
 Hamiltonian $H(X)$ is sufficiently large. 
%   The numerical analysis of Kohn-Sham equation is more difficult.
%   For priori and posteriori error  estimations and the
%  corresponding adaptive analysis on finite element approximations, we refer to
%  \cite{ChenZhou2013,ChenZhouArXiv}. 
%  To the best of the authors' knowledge, there is no numerical analysis on finite difference
%    approximations to the DFT models. The error analysis to the corresponding linear Schr\"{o}dinger equation can be found in \cite{DaiZhou2008,DaiZhou2011}.
%

We improve the convergence results of the SCF iteration from the following
 three perspectives.  (i) The KS equation \eqref{eq:KS} 
is formulated as a nonlinear system of equations (fixed point maps) respect to either  the charge
density or potential. Applying the differentiability of spectral operators, the
Jacobian of these fixed point map is derived explicitly and analyzed.  
 (ii) Global convergence (i.e., convergence to a stationary point from any
 initial solution) of the simple mixing scheme can be established when there
 exists a gap between $p$th and $(p+1)$st eigenvalues of the
 Hamiltonian $H(X)$.  This assumption can be relaxed for local convergence
 analysis, i.e., convergence behavior if the initial point is selected in a neighborhood sufficiently
 close to the solution of \eqref{eq:KS}. If the charge density is computed using
 the Fermi-Dirac distribution, the assumption on the gap is not needed as long
 as a suitable step size for simple mixing is chosen. Our results
 requires much weaker conditions than the previous analysis in
 \cite{LiuWangWenYuan}. (iii) We propose two approximate Newton
 methods according to
the structure of the Jacobian of the fixed point maps. The second type of our
approaches is exactly the method of elliptic
preconditioner proposed in \cite{LinYang2012}. Preliminary convergence results
are also established for them.  
 Although our assumption on the gap between eigenvalues of the
 Hamiltonian  in the above three perspectives is very stringent and is almost
 never satisfied in reality, our analysis is still valuable for a better
 understanding of the KS equation and the SCF iteration.
 
% The most widely used approach for solving \eqref{eq:KS} 
%is the self-consistent field (SCF) iteration. 
%It is well known that the basic version of the SCF iteration often converges 
%slowly or fails to converge \cite{KoutechyBonacic1971} even with the help of various heuristics for decades, yet a clear explanation is not available.
%Recently, the convergence of SCF have been discussed under different conditions
%in \cite{LeBris2005,LinYang2012,LiuWangWenYuan}. 
%However, either their assumptions are very strong or 
%a modified SCF scheme based on the so-called filter function has to be considered.

The rest of this paper is organized as follows. A counter example between the
equivalence of the KS minimization and KS equation is presented in subsection
\ref{sec:count-example}. The optimality conditions of the KS minimization
problem under smoothness assumptions
on the exchange functional is provided  in subsection \ref{sec:opt}. The
necessary conditions for the equivalence between a local minimizer of
the KS minimization and the KS equation is established in subsection
\ref{sec:cond-local}. The corresponding analysis for a global minimizer is
established in subsection \ref{sec:cond-global}.  Lower bounds for the charge
density at  local minimizers are presented in subsection \ref{sec:low-bound}. In
subsection \ref{sec:prob},
we view the KS equation as fixed point maps with respect to the charge
density or potential. The Jacobian of these fixed
 point maps is presented in subsection \ref{sec:Jac}. 
In section \ref{sec:conv}, we establish both local and global convergence for
the SCF iteration with simple mixing schemes.  Two approximate Newton 
approaches and their convergence properties are  discussed  in section \ref{sec:App-Newton}.
%Finally, we show the efficiency of our proposed approximate Newton approaches by preliminary numerical experimentations in Section \ref{sec:num}.

%% Given $X \in \R^{m\times n}$,  the operators $\bX$, $X^*$, $\Re(X)$ and $\Im(X)$ denote the complex conjugate, the complex conjugate transpose, the real and imaginary parts of $X$, respectively. 
%We use standard linear algebra notation in this paper.  The trace of $X$, i.e., the sum of the diagonal elements of $X \in \R^{n\times n}$, is denoted by $\tr(X)$.  
%The Euclidean inner product between two
%matrices $A \in \R^{m\times n}$ and $B\in \R^{m\times n}$ is defined as
%$\iprod{A}{B}:=
%\sum _{jk} A_{j,k}^* B_{j,k}=\tr(A^* B)$. The operations $A\otimes B$ and $A
%\odot B$ are the Kronecker and Hadamard products, respectively.
%    For a given $d\in \R^n$, the operator $\Diag(d)$ returns a
%square matrix in $\R^{n \times n}$ 
%with the elements of $d$ on the main diagonal, while $\diag(X)$ gives a vector
%in $\R^n$ consisting of the main
%diagonal of $X$. 
% For any square matrix $A$, $\diag(A)$ denotes a vector containing the diagonal elements of $A$, and $\Diag(x)$  denotes a diagonal matrix with $x$ on its diagonal.
%%%%%%%%%%%%%%%%%%% New Section %%%%%%%%%%%%%%%%%

\section{Equivalence Between the KS Total  Energy Minimization and the KS Equation}

\subsection{A Counter Example} \label{sec:count-example}
%Theorem \ref{thm:localcon} and theorem \ref{thm:globalcon} provide us some assumptions under which the KS equation is a necessary condition for $X$ to be a local minimizer or a global minimizer of the KS energy minimization.  A natural question is whether those assumptions are necessary or not.  Although we can not demonstrate that those assumptions \eqref{eq:cond3} and \eqref{eq:cond1} are strict, but we can show by counter example that there exists a case in which the KS equation does not hold at a global minimizer of the KS energy minimization.  At meantime, we find that the assumptions \eqref{eq:cond3} and \eqref{eq:cond1} are violated.

The following three-dimensional toy example shows that a solution of the KS equation
is not necessary a global optimal solution of the KS total  energy minimization
problem.
Let $n=3$, $p=1$ and choose \[  L=\begin{pmatrix}
1.4299 & -0.2839 &  -0.4056 \\ 
-0.2839 & 1.1874 &  0.2678 \\ 
-0.4056 & 0.2678 &  1.3826   \end{pmatrix}, \quad V_{ion}=0,  \mbox{
  and } \epsilon_{xc}(\rho) =0. \] 
  It can be verified numerically that  $X^*=\begin{pmatrix}0.3683 & -0.6188 &
    0.6939 \end{pmatrix}\zz$ is a global minimizer of \eqref{prob:minKS}.
     On the other hand,  we have 
  \[H(X^*) = \frac{1}{2}L + \Diag(  L^\dagger \rho(X^*)) = \begin{pmatrix}
0.9735 & -0.1419 &  -0.2028 \\ 
-0.1419 & 0.8955 &  0.1339 \\ 
-0.2028 & 0.1339 &  1.0569  \end{pmatrix},\]
% Hence, $X\in \mR^2$ is a 2-dimensional vector.
 and  $X^*$ is an
  eigenvector associated with the second smallest eigenvalue of $H(X^*)$. Therefore,  the equivalence between  the KS
  total  energy minimization and the KS equation only holds under certain assumptions.
For this counter example, our assumptions in subsections \ref{sec:cond-local}
and \ref{sec:cond-global} do not
hold because the gap between the eigenvalues of $H(X^*)$ is $\delta=0.046$
 and it is smaller than $||\Ld||_2 = 1$. We should point out that the above
 example may not exist in the practice of DFT.

%At this moment, it is not difficult to verify that assumption \eqref{eq:cond1} does not hold because $\delta=0.5 < 3 ||\Ld||_2 = 3$.

\subsection{Optimality Conditions Under Smoothness Assumptions on
$\epsilon_{xc}(\rho)$} \label{sec:opt}

%We study the optimality conditions for the KS energy minimization problem \eqref{prob:minKS}.

The Lagrangian function of the minimization problem \eqref{prob:minKS} is
\[ \cL(X,\Lb) := E(X) - \frac{1}{2}\tr(\Lb (X\zz X-I)).
\]
Suppose $X$ is a local minimizer of \eqref{prob:minKS}. It follows from $X\zz X
= I$ that the linear independence constraint qualification is satisfied. Hence, there
exists a Lagrange multiplier $\Lb$ such that the first-order optimality
conditions hold:
\be \label{eq:KKT}  \nabla_X \cL(X,\Lb) = H(X)X - X\Lb =0 \mbox{ and } X\zz X =
I.\ee
Multiplying both sides of the first equality in \eqref{eq:KKT} by $X\zz$ and using  $X\zz X =I$, we
have $\Lambda = X\zz H(X) X$, which is a symmetric matrix. Note that $E(XQ) = E(X)$ and $H(XQ) = H(X)$ hold 
for any orthogonal matrix $Q\in\Rkk$.
Hence, if $X$ is a stationary point, any matrix in the set  
$\{XQ\mid Q\in\Rkk \mbox{ and } Q\zz Q=I \}$ is also a stationary point, 
and their  objective values are the same. Let
$\tilde{Q}\tilde{\Lambda}\tilde{Q}\zz$ be
the eigenvalue decomposition of  $X\zz H(X)X$ and $\tilde{X}:=X\tilde{Q}$. Then
the Lagrangian multiplier $\tilde{\Lambda}=\tilde{X}\zz H(\tilde{X}) \tilde{X}$
is a diagonal matrix whose entries are  the eigenvalues of $H(X)$.
%is its diagonal.  %We call such $\tilde{X}$ a basic stationary point.

%The corollary below follows directly from Remark \ref{rmk:1}.
%Suppose $X$ is a stationary point of \eqref{prob:minKS}, then
%there exists orthogonal matrix $Q$ such that 
%\begin{eqnarray}\label{eq:KKT-1}
%H(\tX)\tX &=& H(\tX) \Lambda,\\
%\label{eq:KKT-2} \tX\zz \tX &=&  I,
%\end{eqnarray}
%where $\tX = XQ$ and $\Lambda$ is a diagonal matrix. Especially, 
%if $X$ is a basic stationary point, $Q=I$.

 Let $\cL(\Rnk,\Rnk)$ denote the space of linear operators which map $\Rnk$ to
$\Rnk$. The Fr\'echet derivative of $\nabla E(X)$ is defined as the (unique)
function $\nabla^2 E: \Rnk \to \cL(\Rnk, \Rnk)$ such that
\[
\lim_{\|S\|\ff \to 0 } \frac{\| \nabla E(X+S) -\nabla E(X) - \nabla^2 E(X)
(S)\|\ff }{\|S\|\ff } =0.
\]
The next lemma shows an explicit form of the Hessian operator,
if the exchange correlation energy is second-order differentiable.
 \begin{lemma}[Lemma 2.1 in \cite{Wen2013}] \label{lemma:Hessian}Suppose that $\epsilon_{xc}(\rho(X))$ is twice
   differentiable with respect to $\rho(X)$.
Given a direction $S \in \Rnk$, the Hessian-vector product of $E(X)$ is
%\be \label{eq:secondorderdef} \nabla^2E(X)[S] =   H(X) S  +B(X)[S], \ee
%where 
%\be\label{eq:secondpartdef}
%B(X)[S] := 2 \Diag \left(J(\rho) \diag(S  X \zz)  \right) X
%\ee
\be \label{eq:secondorderdef} \nabla^2E(X)[S] =   H(X) S  +2 \Diag \left(J(\rho) \diag(S  X \zz)  \right) X
, \ee
where 
\be \label{eq:def-J-rho} J(\rho) :=  L^\dagger +\partial \mu_{xc}(\rho) e. \ee
%$ J :=  L^\dagger + \frac{\partial^2  \epsilon_{xc}}{ \partial \rho ^2 } e$.
\end{lemma}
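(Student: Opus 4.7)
The plan is to differentiate the gradient formula $\nabla E(X)=H(X)X$ (established right before \eqref{eq:H}) once more, treating $H(X)$ as a composition of constant pieces and the nonlinear dependence on the charge density $\rho(X)=\diag(XX\zz)$. Applying the product rule to the map $X\mapsto H(X)X$ in direction $S$ gives
\[
\nabla^2 E(X)[S] \;=\; H(X)\,S \;+\; \bigl(D_X H(X)[S]\bigr)\,X,
\]
so the lemma reduces to identifying $D_X H(X)[S]$ with $2\Diag(J(\rho)\diag(SX\zz))$.

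The next step is a two-fold chain rule. First, since the kinetic and ionic parts of $H(X)$ in \eqref{eq:H} are independent of $X$, only the two $\Diag(\cdot)$ terms contribute, and both depend on $X$ solely through $\rho(X)$. I would first show that
\[
D_X\rho(X)[S] \;=\; \diag(SX\zz + XS\zz) \;=\; 2\diag(SX\zz),
\]
using that $\diag$ is invariant under transposition. Second, I would differentiate each of the two $\rho$-dependent pieces of $H$: the Hartree piece $\Diag(L^\dagger\rho)$ contributes $\Diag(L^\dagger D_X\rho[S])$, while the exchange–correlation piece $\Diag(\mu_{xc}(\rho)\zz e)$ contributes $\Diag\bigl((\partial\mu_{xc}(\rho)\,e)\, D_X\rho[S]\bigr)$ after applying the chain rule to $\rho\mapsto \mu_{xc}(\rho)\zz e$, where $\partial\mu_{xc}(\rho)\,e$ is exactly the $n\times n$ matrix obtained by contracting the derivative of the Jacobian $\mu_{xc}(\rho)$ with the all-ones vector. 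Adding these two contributions yields the factor $J(\rho)=L^\dagger+\partial\mu_{xc}(\rho)\,e$ defined in \eqref{eq:def-J-rho}, and combining with the factor of $2$ from $D_X\rho[S]$ gives $D_X H(X)[S]=2\Diag(J(\rho)\diag(SX\zz))$. Substituting into the product-rule identity above delivers \eqref{eq:secondorderdef}.

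Finally, to promote the Gateaux calculation to a genuine Fr\'echet derivative as required in the definition preceding the lemma, I would bound the remainder. The linear terms in $X$ (kinetic, ionic) contribute no remainder; the Hartree term $\Diag(L^\dagger\rho(X))X$ is polynomial of degree three in $X$, so its second-order Taylor remainder is $O(\|S\|\ff^2)$ by direct expansion; and for the exchange–correlation piece, the twice-differentiability assumption on $\epsilon_{xc}$ yields $\mu_{xc}(\rho(X+S))=\mu_{xc}(\rho(X))+\partial\mu_{xc}(\rho)\,D_X\rho[S]+o(\|S\|\ff)$, which propagates through the continuous maps $\rho\mapsto\Diag(\mu_{xc}(\rho)\zz e)$ and right-multiplication by $X+S$ to give a remainder of $o(\|S\|\ff)$.

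The main obstacle is bookkeeping of the XC term: one has to be careful that $\mu_{xc}(\rho)$ is itself matrix-valued (the Jacobian of the vector $\epsilon_{xc}(\rho)$), so that $\partial\mu_{xc}(\rho)$ is naturally a three-tensor, and the appearance of $e$ in the definition of $J(\rho)$ is precisely the contraction needed so that the chain-rule derivative of $\Diag(\mu_{xc}(\rho)\zz e)$ comes out as a scalar-matrix times $D_X\rho[S]$. Once this notational point is pinned down, the computation is routine and the factor $2$ is tracked cleanly through $D_X\rho[S]=2\diag(SX\zz)$.
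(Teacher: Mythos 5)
Your proof is correct, and it follows the only natural route: differentiate $\nabla E(X)=H(X)X$ by the product rule, compute $D_X\rho[S]=2\diag(SX\zz)$, and push this through the two $\rho$-dependent pieces of $H$ to identify $D_X H(X)[S]=2\Diag\bigl(J(\rho)\diag(SX\zz)\bigr)$. The paper itself gives no proof for this lemma — it is imported verbatim as Lemma 2.1 of \cite{Wen2013} — so there is nothing in this paper to compare against; your direct computation (including the remark that the Fr\'echet remainder is controlled because the kinetic/ionic/Hartree parts are polynomial in $X$ and the exchange--correlation part is twice differentiable by hypothesis, and the careful reading of $\partial\mu_{xc}(\rho)\,e$ as the contraction of the third-order tensor that makes the chain rule close) is precisely the argument one would expect to find in the cited source.
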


Consequently,  the second-order necessary and sufficient  optimality conditions can be obtained
from Theorems 12.5 and 12.6 in \cite{Nocedal2006}, respectively.

\begin{theorem}\label{thm:secondcon-tan}
1) Suppose that $X$ is a local minimizer of problem \eqref{prob:minKS} and  $\epsilon_{xc}(\rho(X))$ is twice
   differentiable with respect to $\rho(X)$. Then, for all $S\in\mathcal{T}(X)$, it holds
\begin{eqnarray}\label{eq:SO3}
 \tr(S\zz H(X)S- \Lb S\zz S) + 2\diag(XS\zz)\zz J\diag(XS\zz)\geq 0,
 %\quad \mbox{ for all } S\in\mathcal{T}(X), 
\end{eqnarray}
where $\Lb=X\zz H(X)X$ and 
\begin{eqnarray}\label{eq:tandef}
\mathcal{T}(X):=\{S\mid X\zz S + S\zz X =0\}.
\end{eqnarray}
 2) Suppose that $X\in\Rnk$
 satisfies \eqref{eq:KKT} with a symmetric matrix $\Lambda$
 and \eqref{eq:SO3} holds with a strict inequality for all $0\neq S\in\mathcal{T}(X)$. Then $X$ is a strict local
 minimizer for problem \eqref{prob:minKS}.
\end{theorem}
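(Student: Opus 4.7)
The proof is essentially an application of the classical second-order necessary and sufficient optimality conditions for equality-constrained smooth optimization (Theorems 12.5 and 12.6 in \cite{Nocedal2006}) to problem \eqref{prob:minKS}, so my plan is to reduce the statement to a routine computation of (i) the tangent cone of the orthogonality constraint and (ii) the quadratic form of the Hessian of the Lagrangian on that tangent cone.

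First I would note, as already observed in the text, that the constraint $c(X) := X\zz X - I = 0$ satisfies LICQ at every feasible $X$, so a Lagrange multiplier $\Lambda$ exists and the standard second-order theory applies. Linearizing $c$ at $X$ in direction $S$ gives $Dc(X)[S] = X\zz S + S\zz X$, so the linearized tangent space is exactly $\mathcal{T}(X)$ as defined in \eqref{eq:tandef}. Moreover, because the orthogonality constraint defines a smooth embedded manifold (the Stiefel manifold) and LICQ holds, the linearized tangent space coincides with the set of tangent vectors to feasible curves, which is what is needed to transfer second-order information from the constrained problem to the quadratic form.

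Next I would assemble the Hessian of the Lagrangian with respect to $X$. Since $\cL(X,\Lambda) = E(X) - \frac{1}{2}\tr(\Lambda(X\zz X-I))$ and the constraint is quadratic in $X$, one has $\nabla^2_{XX}\cL(X,\Lambda)[S] = \nabla^2 E(X)[S] - S\Lambda$. Taking the Frobenius inner product with $S$ and substituting the explicit Hessian from Lemma \ref{lemma:Hessian} yields
\[
\langle S,\nabla^2_{XX}\cL(X,\Lambda)[S]\rangle
= \tr(S\zz H(X)S) - \tr(\Lambda S\zz S) + 2\,\tr\!\bigl(S\zz\Diag(J(\rho)\diag(SX\zz))X\bigr).
\]
The last term I would rewrite using the identity $\tr(A\,\Diag(v)) = \diag(A)\zz v$ and the symmetry $\diag(SX\zz) = \diag(XS\zz)$, which gives $\diag(XS\zz)\zz J(\rho)\diag(XS\zz)$ and matches the form in \eqref{eq:SO3}.

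With these two ingredients in hand, part (1) follows by applying Theorem 12.5 of \cite{Nocedal2006}: any local minimizer satisfies the first-order conditions (yielding $\Lambda = X\zz H(X)X$ as in the preceding discussion) and makes the Lagrangian Hessian positive semidefinite on $\mathcal{T}(X)$. Part (2) follows symmetrically from Theorem 12.6: the first-order condition plus strict positivity of the Lagrangian Hessian on $\mathcal{T}(X)\setminus\{0\}$ implies a strict local minimizer. The main obstacle I expect is a minor but careful one: one must justify that the linearized tangent space $\mathcal{T}(X)$ is the ``right'' critical cone here (rather than a proper subset such as the horizontal space $X\zz S=0$), and that twice continuous differentiability of $E$, which is inherited from the assumption that $\epsilon_{xc}(\rho)$ is twice differentiable in $\rho$ together with the smoothness of the other terms in \eqref{eq:KS-energy}, is genuinely available so that the Nocedal--Wright theorems apply verbatim.
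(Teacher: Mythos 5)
Your proposal follows essentially the same route as the paper's proof: invoke LICQ (which holds automatically from $X\zz X = I$), apply Theorems 12.5 and 12.6 of Nocedal--Wright, compute $\nabla^2_{XX}\cL(X,\Lambda)[S] = \nabla^2 E(X)[S] - S\Lambda$, substitute the explicit Hessian from Lemma~\ref{lemma:Hessian}, and convert the last term into a quadratic form in $\diag(XS\zz)$ via the trace/Diag identity. The ``obstacle'' you flag about the critical cone is not actually an issue here: since \eqref{prob:minKS} has only equality constraints and LICQ holds, the critical cone in the Nocedal--Wright theorems reduces exactly to the linearized tangent space $\mathcal{T}(X)$, which is precisely the set used in \eqref{eq:SO3}.
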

\begin{proof}
It follows from Theorem 12.5 in \cite{Nocedal2006} that the second-order necessary condition
for $X$ to be a local minimizer of \eqref{prob:minKS} is
\begin{eqnarray}\label{eq:SO}
\left\langle S, \nabla^2_{XX} \cL(X,\Lambda)[S] \right\rangle \geq 0,
\quad \mbox{ for all } S\in\mathcal{T}(X).
\end{eqnarray}
Using Lemma \ref{lemma:Hessian} and the fact that
  \[ \tr(X\zz \Diag(y) Z) = y\zz \diag(ZX\zz),\quad \mbox{ for all }
  X,Z\in\Rnk,\,y\in\mR^n, \]
we obtain 
\begin{eqnarray*}\label{eq:lag-sd}
 & & \left\langle S, \nabla^2_{XX} \cL(X,\Lambda)[S] \right\rangle 
 =  \tr(S\zz \nabla^2 E(X)[S] - \Lb S\zz S)\nonumber\\
&=& \tr\left(S\zz H(X)S + 2S\zz\Diag\left(J\diag(SX\zz)\right)X
- \Lb S\zz S\right)\nonumber\\
&=& \tr(S\zz H(X)S- \Lb S \zz S) + 2\diag(XS\zz)\zz
J \diag(XS\zz),
\end{eqnarray*}
which together with \eqref{eq:SO} yields \eqref{eq:SO3}. 
The second part is a direct application of Theorem 12.6 in \cite{Nocedal2006}.
\end{proof}

An equivalent formulation of the tangent space \eqref{eq:tandef} is 
\begin{eqnarray}\label{eq:tandef2}
\mathcal{T}(X) = \{S := XK +\Pj_{X}Z\mid K=-K\zz\in\Rkk,\, Z\in\Rnk\},
\end{eqnarray}
where $\Pj_{X}:=I-XX\zz$.
%i.e. $S = XK +\Pj_{X}Z\in \mathcal{T}(X)$ for any skew-symmetric $K$ and arbitrary $Z\in\mR^{n\times k}$. 
Hence, the second-order optimality conditions in Theorem \ref{thm:secondcon-tan}
can be presented in terms of an arbitrary $Z\in\Rnk$ similar to the analysis
of maximization of the sum of the trace ratio on the Stiefel Manifold 
in \cite{ZhangLi2013}.
\begin{theorem}\label{thm:secondcon-sp}
1) Suppose that $X$ is a local minimizer of problem \eqref{prob:minKS} and  $\epsilon_{xc}(\rho(X))$ is twice
   differentiable with respect to $\rho(X)$. Then for
all $Z\in\Rnk$, it holds
\begin{eqnarray}\label{eq:SO4}
&&\tr(Z\zz H(X) Z) + \tr(X\zz Z \Lb Z\zz X) - \tr(Z\zz X \Lb X\zz Z)
- \tr(Z\Lb Z\zz) \nonumber\\
&+& 2 \diag(XZ\zz\Pj_{X})\zz J \diag(XZ\zz\Pj_{X}) \geq 0.
\end{eqnarray}
 2) Suppose that $X\in\Rnk$
 satisfies \eqref{eq:KKT} with a symmetric matrix $\Lambda$
 and \eqref{eq:SO4} holds with a strict inequality for all $\Pj_{X}Z\neq 0$. Then $X$ is a strict local
 minimizer for problem \eqref{prob:minKS}.
\end{theorem}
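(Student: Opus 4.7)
The plan is to prove Theorem \ref{thm:secondcon-sp} by reducing it to Theorem \ref{thm:secondcon-tan} via the alternative parametrization \eqref{eq:tandef2} of the tangent space. I would take an arbitrary $S\in\mathcal{T}(X)$, write it as $S=XK+\Pj_{X}Z$ with $K=-K\zz\in\Rkk$ and $Z\in\Rnk$, substitute this decomposition into the condition \eqref{eq:SO3}, and verify that the resulting expression simplifies term-by-term to the right-hand side of \eqref{eq:SO4}.

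The whole argument rests on four elementary identities which I would establish at the outset: (i) $X\zz X=I$ together with $X\zz\Pj_{X}=0$ yields $X\zz S=K$; (ii) the KKT condition $H(X)X=X\Lb$ with symmetric $\Lb=X\zz H(X)X$ implies both $X\zz H(X)\Pj_{X}=0$ and $\Pj_{X}H(X)\Pj_{X}=H(X)-X\Lb X\zz$; (iii) $XKX\zz$ is skew-symmetric (since $K$ is), so its diagonal vanishes and $\diag(XS\zz)=\diag(XZ\zz\Pj_{X})$, which already gives the $J$-term in \eqref{eq:SO4}; and (iv) for skew-symmetric $K$ one has $KK\zz=K\zz K=-K^2$, so $\tr(K\zz \Lb K)=\tr(\Lb KK\zz)=\tr(\Lb K\zz K)$. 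Expanding $\tr(S\zz H(X)S)$ via (i)–(ii), the mixed cross terms involving both $K$ and $Z$ drop out and one obtains $\tr(K\zz\Lb K)+\tr(Z\zz \Pj_{X}H(X)\Pj_{X}Z)$; analogously $\tr(\Lb S\zz S)=\tr(\Lb K\zz K)+\tr(\Lb Z\zz \Pj_{X}Z)$. By (iv) the $K$-contributions cancel, and applying (ii) together with cyclic trace identities converts the two remaining $Z$-terms into $\tr(Z\zz H(X)Z)-\tr(Z\zz X\Lb X\zz Z)$ and $\tr(Z\Lb Z\zz)-\tr(X\zz Z\Lb Z\zz X)$. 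Combining these with the $J$-term from (iii) reproduces \eqref{eq:SO4} exactly.

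For part 2 I would invoke Theorem \ref{thm:secondcon-tan}.2 on the full tangent space, noting that every $S\in\mathcal{T}(X)$ admits the decomposition $S=XK+\Pj_{X}Z$ with the same conclusion via the substitution above. The only conceptual point requiring care is that $\Pj_{X}Z=0$ corresponds to $S=XK$, i.e.\ infinitesimal motion along the orbit $\{XQ\mid Q\zz Q=I\}$ on which $E$ is constant; the Hessian automatically vanishes in these directions, so requiring strict positivity \emph{only} for $\Pj_{X}Z\neq 0$ is the natural geometric sufficient condition, with ``strict local minimizer'' understood modulo this $O(p)$-symmetry.

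The main obstacle is purely bookkeeping: the careful cancellation of the $K$-dependent pieces in step~(iv) and the reorganization of the $Z$-dependent traces using $\Pj_{X}H(X)\Pj_{X}=H(X)-X\Lb X\zz$. Neither step is deep, but missing any of the identities (i)–(iv) leaves a nonzero residual that obscures the equivalence with \eqref{eq:SO4}. Once those identities are isolated at the start, the rest is straightforward trace algebra.
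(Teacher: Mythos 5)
Your proposal is correct and takes essentially the same route as the paper: decompose $S=XK+\Pj_{X}Z$ with $K$ skew-symmetric, expand $\tr(S\zz H(X)S)$ and $\tr(\Lb S\zz S)$ using $\Pj_X H(X)X=0$, observe that the $\tr(K\zz\Lb K)$ terms cancel (your identity (iv), which is exactly what the paper uses via ``$K=-K\zz$''), reduce $\diag(XS\zz)$ to $\diag(XZ\zz\Pj_X)$, and collect the remaining $Z$-terms into \eqref{eq:SO4}. Your remark in part 2 about the Hessian of the Lagrangian vanishing on the directions $S=XK$ tangent to the $O(p)$-orbit is actually more careful than the paper's terse ``follows directly from Theorem \ref{thm:secondcon-tan}'': strict inequality in \eqref{eq:SO4} for $\Pj_X Z\neq 0$ does not give strict inequality in \eqref{eq:SO3} for all nonzero $S\in\mathcal{T}(X)$, so ``strict local minimizer'' must indeed be read modulo the $E(XQ)=E(X)$ symmetry, as you note.
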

\begin{proof}
Using \eqref{eq:KKT} and the definition of $\Pj_{X}$, we obtain  
$\Pj_{X}\Pj_{X}=\Pj_{X}$, $\Pj_{X}X=0$ and $\Pj_{X}H(X)X =0$. For  any $S=XK +\Pj_{X}Z$, it holds
\begin{eqnarray}\label{eq:term1}
\tr(S\zz H(X)S) &=& \tr(K\zz X\zz H(X)X K) + \tr(Z\zz\Pj_{X}H(X)\Pj_{X}Z)\nonumber\\
&=& \tr(K\zz \Lb K) + \tr(Z\zz H(X)Z) - \tr(Z\zz H(X) XX\zz Z) \nonumber\\
&=& \tr(K\zz \Lb K) + \tr(Z\zz H(X)Z) - \tr(Z\zz X \Lb X\zz Z).
\end{eqnarray}
It can be verified that $S\zz S = K\zz K +Z\zz\Pj_{X}Z$, which yields 
\begin{eqnarray}\label{eq:term2}
\tr(\Lb S\zz S) &=& \tr(K\zz K \Lb) + \tr(Z\zz Z \Lb) - \tr(Z\zz XX\zz Z \Lb)\nonumber\\
%&=& \tr(K\Lb K\zz) + \tr(Z\Lb Z\zz) -\tr(X\zz Z \Lb Z\zz X)\nonumber\\
&=& \tr(K\zz \Lb  K) + \tr(Z\Lb Z\zz) -\tr(X\zz Z \Lb  Z\zz X),
\end{eqnarray}
where the last equality holds because of $K = - K\zz$.
%Hence,
%\begin{eqnarray}\label{eq:term12}
%\tr(S\zz H(X)S - \Lb S\zz S) 
%= \tr(Z\zz H(X)Z) + \tr(X\zz Z \Lb  Z\zz X) - \tr(Z\zz X \Lb  X\zz Z) - \tr(Z\Lb Z\zz).
%\end{eqnarray}
Since it holds
\[\diag(XK\zz X\zz) = \frac{1}{2}(\diag(XK\zz X\zz) + \diag(XK X\zz))
= \frac{1}{2} \diag(X(K+K\zz)X\zz) = 0, \]
we obtain
\[\diag(XS\zz) = \diag(XK\zz X\zz) + 
\diag(XZ\zz\Pj_{X}) = \diag(XZ\zz\Pj_{X}),\]
which together with \eqref{eq:term1} and \eqref{eq:term2}  gives \eqref{eq:SO4}.
The proof of the second part follows directly from Theorem \ref{thm:secondcon-tan}.
%implies that 
%\begin{eqnarray}\label{eq:term3}
%\diag(XS\zz)\zz J \diag(XS\zz) = \diag(XZ\zz\Pj_{X})\zz J \diag(XZ\zz\Pj_{X}).
%\end{eqnarray}
%Combining  \eqref{eq:term1}, \eqref{eq:term2} and \eqref{eq:term3} together gives \eqref{eq:SO4}.  The proof of the second part follows directly from Theorem \ref{thm:secondcon-tan}.
\end{proof}

%%%%%%%%%%%%%%%%%%%%%%%%%%%%%%%%%
%\section{Exchange Correlation Functional Based on }
\subsection{Necessary Condition for Local Minimizers} \label{sec:cond-local}

In this subsection, we establish a necessary condition under which a local
minimizer of \eqref{prob:minKS} is  a solution of a modification of the KS equation
\eqref{eq:KS}.   Our discussion is restricted to a special exchange correlation functional 
\begin{eqnarray}\label{eq:exdef}
e\zz \epsilon_{xc}(\rho) = -\frac{3}{4}\gamma \rho\zz\rho^{\frac{1}{3}},
\end{eqnarray}
where $\gamma = 2 \left(\frac{3}{\pi} \right)^{1/3}$ and $\rho^{\frac{1}{3}}$
denotes the component-wise cubic root of the vector $\rho$.
 The next result shows that the charge density $\rho$  is bounded.
\begin{lemma}\label{lm:1}
Let $X\in\Rnk$ satisfy $X\zz X=I$, and $\rho$ be defined by 
\eqref{eq:rho}. We have
\begin{eqnarray}\label{eq:basicbound}
 0 \le  \rho_i \le 1, \mbox{   for all } i=1,\ldots,n.
\end{eqnarray}
\end{lemma}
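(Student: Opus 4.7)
The plan is to prove the two inequalities separately, both as direct consequences of the structure of $XX\zz$ when $X$ has orthonormal columns.

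For the lower bound, I would simply expand the definition: $\rho_i = (XX\zz)_{ii} = \sum_{j=1}^p X_{ij}^2$, which is a sum of squares and therefore nonnegative. This requires no use of the orthonormality constraint at all.

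For the upper bound, the key observation is that the constraint $X\zz X = I$ forces $XX\zz$ to be the orthogonal projector onto the column space of $X$. I would make this rigorous in one of two equivalent ways. The first way is spectral: $XX\zz$ is symmetric positive semidefinite, and since $(XX\zz)^2 = X(X\zz X)X\zz = XX\zz$, its eigenvalues lie in $\{0,1\}$, so $XX\zz \preceq I$; taking the $i$th diagonal entry then yields $\rho_i \le 1$. The second way is more elementary: extend $X$ to an orthogonal matrix $[X, Y] \in \R^{n\times n}$, which gives $XX\zz + YY\zz = I$, so $\rho_i = 1 - (YY\zz)_{ii} \le 1$ because $(YY\zz)_{ii}$ is itself a sum of squares.

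There is no real obstacle here; the statement is essentially a restatement of the fact that the diagonal entries of an orthogonal projector lie in $[0,1]$. The only thing to be careful about is not to inadvertently use $p = n$ or any full-rank property beyond $X\zz X = I$, which is why the projection viewpoint (rather than, say, writing $XX\zz = I$) is the cleanest route.
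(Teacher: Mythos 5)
Your proof is correct and takes essentially the same approach as the paper, whose own one-line proof simply cites $X\zz X = I$ and $\rho_i = \sum_{j=1}^p X_{ij}^2$ and leaves the upper-bound justification implicit; your projector argument (via $XX\zz \preceq I$ or the orthogonal extension $[X,Y]$) is exactly the step the paper is silently invoking, spelled out carefully.
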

%\vspace{-7mm}
\begin{proof} The inequality \eqref{eq:basicbound} holds from $X\zz X =I$ and the fact that
$ \rho_i = \sum_{j=1}^p X_{ij}^2$ for all $ i=1,\ldots,n$.
%Straightforwardly, we obtain $\rho_i\geq 0$ ($i=1,...,n$). 
%Since $X\zz X =I$, we can find its orthogonal complement $Y\in\mR^{n\times (n-p)}$ satisfying
%$Y\zz Y=I$ and $X\zz Y =0$. Let $B = (X, Y)$, it is clear that $B\zz B=BB\zz =I$.
%Therefore $\rho_i = ||X_{i.}||^2_2 \leq ||B_{i.}||_2^2 = 1$ ($i=1,...,n$),
%where $X^*_{i.}$ refers to the $i$th row of $X^*$ for any $i=1,...,n$. 
\end{proof}

%The uniformly well posed (UWP) property \cite{LeBris2005,YangGaoMeza2009}
% assumes that the gap between the
%$p$th and $(p+1)$st eigenvalues of $H(X)$  is  uniformly bounded. %needed for the convergence analysis with respect to $F_{\phi}(V)$.
%\begin{assumption}\label{asmp:UWP} For any $X \in \Rnk$, there exists a positive constant
%$\delta$  such that
%$\lambda_{p+1}(H(X))- \lambda_p(H(X)) \ge \delta$, where $\lambda_p(H(X))$ and
%$\lambda_{p+1}(H(X))$ are the
%$p$th and $(p+1)$st smallest eigenvalues of $H(X)$, respectively.
%\end{assumption}

Our analysis relies on the gap between the $p$th and $(p+1)$st eigenvalues of
$H(X)$. %\cite{LeBris2005,YangGaoMeza2009}.
\begin{assumption}
\label{asmp:UWP}
Let $\lambda_1 \le \ldots \le \lambda_p \le \lambda_{p+1} \le \ldots \le
\lambda_n$ be the eigenvalues of a given symmetric matrix $H \in \Rnn$.
There exists a positive constant $\delta$ such that  
$\lambda_{p+1}-\lambda_p \ge \delta$.
%For a given symmetric matrix $H \in \Rnn$, there exists a positive constant
%$\delta$  such that
%$\lambda_{p+1}(H)- \lambda_p(H) \ge \delta$, where $\lambda_p(H)$ and
%$\lambda_{p+1}(H)$ are the
%$p$th and $(p+1)$st smallest eigenvalues of $H$, respectively.
\end{assumption}

%If Assumption \ref{asmp:gap} holds for any matrix in a matrices set $\mathcal{H}$ with a uniformly $\delta > 0$, %uniformly bounded away from zero $\mathcal{H}$ is said to be uniformly well posed (UWP) \cite{LeBris2005,YangGaoMeza2009}.

Note that $E(X)$ may  not be second-order differentiable  since some components
$\rho_i(X)$ can be zero. Let $\mI$ be the
collection of indices of the nonzero components of $\rho(X)$, i.e.,
\be \label{eq:mI} \mI = \{i \mid \rho_i(X)\neq 0, i = 1,\ldots,n \}. \ee
Then the complement set $\bar{\mI}$ of $\mI$ is the set of indices of the zero
components of $\rho(X)$. Let $r$ be the cardinality of $\mI$. 
We have $r \ge p$ by the orthogonality of $X$.
If $\mI=\{\alpha_1, \ldots, \alpha_r\}$, % where $r=|\mI|$,
we define the submatrices $X_\mI$
and $L_{\mI \mI}$  as 
%The submatrix  whose rows consisted of the nonzero rows of $X$ is denoted by $X_\mI$.
\[ X_\mI = \begin{pmatrix} X_{\alpha_1, 1}, \ldots,  X_{\alpha_1, p} \\
\ldots \\
X_{\alpha_r, 1}, \ldots,  X_{\alpha_r, p}
\end{pmatrix}, \mbox{ and } L_{\mI \mI} =\begin{pmatrix} L_{\alpha_1, 1}, \ldots,
  L_{\alpha_1, \alpha_r} \\
\ldots \\
L_{\alpha_r, 1}, \ldots,  L_{\alpha_r, \alpha_r}
\end{pmatrix}. \]
The notations $(V_{ion})_{\mI \mI}$,  $L^\dagger_{\mI \mI}$,  $H_{\mI
\mI}(X)$ and  $\Lb_{\mI \mI}$ are
defined similar to  $L_{\mI \mI}$. %Due to the special structure of \eqref{eq:exdef}, 

The following theorem shows that a local minimizer $X^*$ of the KS total energy minimization 
\eqref{prob:minKS}  is a solution of KS equation \eqref{eq:KS}
if all rows of $X^*$ are nonzero and  Assumption \ref{asmp:UWP}
holds with a sufficiently large gap $\delta$.
\begin{theorem}\label{thm:localcon}
Suppose that  $X^*$ is a local
minimizer of \eqref{prob:minKS} using \eqref{eq:exdef} and $\Lambda^* = (X^*)^\top
H(X^*)X^*$ is a diagonal matrix.
 Let  $\mathcal{I^*}$ be the
index set of $X^*$ defined as \eqref{eq:mI}. If Assumption \ref{asmp:UWP} holds
at $H(X^*)$ with a constant $\delta$ satisfying 
\begin{eqnarray}\label{eq:cond3}
\delta > 2\left(||\Ld||_2 - \frac{\gamma}{3}\right),
\end{eqnarray}
  then it holds
\be\label{eq:KSs1}  \begin{aligned}  
H_{\mIs\mIs}(X^*) X^*_{\mIs} &=  X^*_{\mIs}  \Lb^*, \\
 (X^*_{\mIs}) \zz X^*_{\mIs} &=  I,
\end{aligned} \ee
and the diagonal of $\Lb^*$ consists of
the $p$ smallest eigenvalues of $H_{\mIs\mIs}(X^*)$.
\end{theorem}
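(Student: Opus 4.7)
The plan is to split the argument in two parts. The restricted KS system \eqref{eq:KSs1} falls out of first-order optimality: as $X^*$ is a local minimizer on the Stiefel manifold, \eqref{eq:KKT} yields $H(X^*)X^* = X^*\Lb^*$, and by hypothesis $\Lb^*$ is diagonal. Since $X^*_{ij} = 0$ whenever $i \notin \mIs$ by the definition of $\mIs$, restricting the matrix identity to the rows indexed by $\mIs$ gives $H_{\mIs\mIs}(X^*) X^*_{\mIs} = X^*_{\mIs} \Lb^*$; orthonormality $(X^*_{\mIs})\zz X^*_{\mIs} = (X^*)\zz X^* = I$ is inherited because the deleted rows of $X^*$ vanish. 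For the eigenvalue claim, my strategy is to first establish, via the second-order necessary condition and Assumption \ref{asmp:UWP}, that $\{\Lb^*_{kk}\}_{k=1}^p$ coincides (as a multiset) with the $p$ smallest eigenvalues of the full Hamiltonian $H(X^*)$, and then to transfer the conclusion to $H_{\mIs\mIs}(X^*)$ via Cauchy interlacing.

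For the auxiliary statement I would argue by contradiction. If $\{\Lb^*_{kk}\}$ is not the multiset of the $p$ smallest eigenvalues of $H(X^*)$, then the $p$-dimensional $H(X^*)$-invariant subspace $\sspan(X^*)$ does not coincide with the bottom-$p$ eigenspace $V_{\le p}$ of $H(X^*)$; since eigenspaces for distinct eigenvalues of $H(X^*)$ are mutually orthogonal, this forces the existence of an index $k^*$ with $\Lb^*_{k^*k^*} \ge \lambda_{p+1}(H(X^*))$ together with a unit eigenvector $w$ of $H(X^*)$ for some eigenvalue $\lambda_{j_0} \le \lambda_p(H(X^*))$ satisfying $w \perp$ every column of $X^*$. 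Taking $S := w e_{k^*}\zz$, the identity $(X^*)\zz w = 0$ gives $S \in \mathcal{T}(X^*)$, and a direct computation yields
\[
\tr(S\zz H(X^*) S - \Lb^* S\zz S) = \lambda_{j_0} - \Lb^*_{k^*k^*}, \qquad \diag(X^* S\zz)_i = X^*_{ik^*}\,w_i =: u_i.
\]
Consequently the second-order necessary condition \eqref{eq:SO3} reduces to $\Lb^*_{k^*k^*} - \lambda_{j_0} \le 2\, u\zz J u$.

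The crux of the proof is then the uniform upper bound on $u\zz J u$. For the exchange-correlation energy \eqref{eq:exdef} a short differentiation gives $J(\rho) = L^\dagger - (\gamma/3)\Diag(\rho^{-2/3})$; positive semidefiniteness of $L^\dagger$ yields $u\zz L^\dagger u \le \|L^\dagger\|_2 \|u\|^2$, and Lemma \ref{lm:1} provides $\rho_i \le 1$ on $\mIs$, so $\rho_i^{-2/3} \ge 1$ there, while $u_i = X^*_{ik^*} w_i$ vanishes for $i \notin \mIs$ and the singularity of $\rho^{-2/3}$ at those indices is harmless. Hence $-(\gamma/3)\sum_i \rho_i^{-2/3} u_i^2 \le -(\gamma/3)\|u\|^2$, and combined with $(X^*_{ik^*})^2 \le \rho_i \le 1$ we obtain $\|u\|^2 \le \|w\|^2 = 1$ and therefore $2\,u\zz J u \le 2(\|L^\dagger\|_2 - \gamma/3) < \delta$ by \eqref{eq:cond3}. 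This contradicts $\Lb^*_{k^*k^*} - \lambda_{j_0} \ge \lambda_{p+1}(H(X^*)) - \lambda_p(H(X^*)) \ge \delta$ furnished by Assumption \ref{asmp:UWP}, which settles the auxiliary statement. To finish, once $\{\Lb^*_{kk}\} = \{\lambda_1(H(X^*)),\ldots,\lambda_p(H(X^*))\}$, Cauchy interlacing on the principal submatrix $H_{\mIs\mIs}$ yields $\lambda_p(H_{\mIs\mIs}) \ge \lambda_p(H(X^*))$, while $\diag(\Lb^*)$ supplies $p$ eigenvalues of $H_{\mIs\mIs}$ each $\le \lambda_p(H(X^*))$, forcing $\lambda_p(H_{\mIs\mIs}) = \lambda_p(H(X^*))$ and identifying $\diag(\Lb^*)$ with the $p$ smallest eigenvalues of $H_{\mIs\mIs}(X^*)$. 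The main obstacle is the tight estimate on $u\zz J u$, which is precisely where Lemma \ref{lm:1}, the positive semidefiniteness of $L^\dagger$, and the explicit form of the exchange-correlation Hessian come together, and which drives the hypothesis \eqref{eq:cond3}.
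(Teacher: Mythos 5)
Your route is genuinely different from the paper's. The paper first restricts to the reduced problem \eqref{eq:obj-X-rd} over the rows indexed by $\mIs$ (so that $\hat E$ is twice differentiable there), then applies the second-order condition \eqref{eq:SO4} directly to $H_{\mIs\mIs}(X^*)$ with a test matrix $V$ built from an eigenbasis of $H_{\mIs\mIs}$, and never leaves the submatrix. You instead stay in the full problem, use the rank-one test direction $S=we_{k^*}\zz$ with $w$ an eigenvector of the full $H(X^*)$, deduce that $\diag\Lambda^*$ are the $p$ smallest eigenvalues of $H(X^*)$, and only then transfer the statement to $H_{\mIs\mIs}(X^*)$ by Cauchy interlacing. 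That last transfer is clean (you exhibit $p$ eigenvalues of $H_{\mIs\mIs}$ each $\le\lambda_p(H(X^*))$, and interlacing gives $\lambda_i(H_{\mIs\mIs})\ge\lambda_i(H(X^*))$ and $\lambda_{p+1}(H_{\mIs\mIs})\ge\lambda_{p+1}(H(X^*))>\lambda_p(H(X^*))$, which pins down the bottom block), and your route has the appealing feature that it uses Assumption \ref{asmp:UWP} literally where it is stated, namely on $H(X^*)$, rather than on the principal submatrix $H_{\mIs\mIs}(X^*)$ as the paper's inequality \eqref{eq:gap-LV} does.

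However, there is a genuine gap in the step where you invoke the second-order necessary condition \eqref{eq:SO3}. Theorem \ref{thm:secondcon-tan} is proved under the hypothesis that $\epsilon_{xc}(\rho(X))$ is twice differentiable at $\rho(X^*)$, and precisely this hypothesis fails when some $\rho_i(X^*)=0$; this is the very reason the paper retreats to the reduced problem \eqref{eq:obj-X-rd}. Your test direction $S=we_{k^*}\zz$ generally has nonzero entries $S_{ik^*}=w_i$ for $i\notin\mIs$ (there is no reason an eigenvector $w$ of $H(X^*)$ vanishes off $\mIs$, and extending an eigenvector of $H_{\mIs\mIs}$ by zeros does not produce an eigenvector of $H(X^*)$), so $S$ is not a direction of the reduced problem and you cannot borrow \eqref{eq:SO4-strong}/\eqref{eq:SO4} from there. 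Observing that $u_i=X^*_{ik^*}w_i=0$ for $i\notin\mIs$ shows the \emph{expression} $u\zz J u$ is finite under a $0\cdot\infty=0$ convention, but it does not by itself establish that the second-order necessary condition holds in the form \eqref{eq:SO3} at a point where $J(\rho)$ is not defined. To close the gap you would need a direct argument: for instance, take the geodesic $x_{k^*}(t)=x_{k^*}\cos t+w\sin t$ with all other columns fixed, note that for $i\notin\mIs$ one has $\rho_i(t)=w_i^2\sin^2 t$ so the exchange-correlation contribution $-\tfrac{3}{4}\gamma\,\rho_i(t)^{4/3}$ is $C^2$ at $t=0$ with vanishing second derivative, and verify that the resulting $g''(0)\ge 0$ coincides with the quadratic form you use with the singular terms interpreted as zero. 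As written, the remark that ``the singularity is harmless'' asserts the conclusion of that computation without supplying it.
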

\begin{proof}
%Let  $\mbIs$ be the complement set of $\mIs$. we have $X_{\mbIs}=0$.
%Substituting it into the first-order optimality condition
%\eqref{eq:KKT-1}-\eqref{eq:KKT-2}, we have
%\begin{eqnarray} \label{eq:KSs11}  
%H_{\mIs\mIs}(X^*_{\mIs}) X^*_{\mIs} &=&  X^*_{\mIs} \Lb_{\mIs}, \\
%\label{eq:KSs21} (X^*_{\mIs}) \zz X^*_{\mIs} &=&  I_{\mIs},
%\end{eqnarray}
%The feasibility \eqref{eq:KSs2} holds due to the definition of $X^*_{\mIs}$.   
It can be verified that $X^*$ is a local minimizer of the restricted problem 
\be\label{prob:minKS-D}  \begin{aligned} \min_{X \in \Rnk} & \quad   E(X)  \\
  \st \; \; &\quad X\zz X = I, \; X_{\bar \mIs}=0.
  \end{aligned} \ee
 % where $\bar \mIs$ is the complement set of $\mIs$.
Hence, $X^*_{\mIs}$ is a local minimizer of the reduced problem
\be\label{eq:obj-X-rd}
\begin{aligned}
\min\limits_{\hat{X}\in\mR^{r\times p}} 
& \quad \hat{E}(\hat{X}) := \frac{1}{4}\tr( \hat X\zz L_{\mIs \mIs} \hat X) +
\half\tr(\hat X\zz (V_{ion})_{\mIs \mIs} \hat X)
 + \frac{1}{4} \rho(\hat X)^\top
L^\dagger_{\mIs \mIs} \rho(\hat X) -\frac{3}{4}\gamma \rho(\hat X)\zz\rho(\hat X)^{\frac{1}{3}}, \\
 \st \quad & \quad \hat{X}\zz \hat{X} = I.
\end{aligned} \ee
%\begin{eqnarray} \Omega:=\{X\in\mR^{n\times k}\mid X_{\mIs}=0\}, \end{eqnarray}
%in a sufficiently small neighborhood of $X^*$.
%As mentioned before, $X^*_{\mIs}$ satisfies the orthogonal constraint \eqref{eq:const-rd}.
%According to the local optimality of $X^*$, we obtain
%\begin{eqnarray}
%\hat{E}(X^*_{\mIs}) = E(X^*) \leq E_{\Omega}^* = \hat{E}^*
%\end{eqnarray}
%holds in a neighborhood of $X^*$.
%Thus, $X^*_{\mIs}$ is a local minimizer of \eqref{eq:obj-X-rd}-\eqref{eq:const-rd}.
%According to the definition of $\mIs$, the function  $\hat{E}(\hat{X})$ is twice differentiable at $X^*_{\mIs}$. %An application of Theorem \ref{thm:secondcon-sp} gives 
 The structure of the energy functional $E(X)$ implies $\nabla \hat
 E(X^*_{\mIs}) = H_{\mIs\mIs}(X^*)X^*_{\mIs} $ and $ (X^*_{\mIs})\zz H_{\mIs\mIs}(X^*_{\mIs}) X^*_{\mIs} =
\Lb^*$. These facts together with 
 the first-order optimality of \eqref{eq:obj-X-rd} at $X^*_{\mIs}$ yield
 \eqref{eq:KSs1}. 
 
 It is obvious that the diagonal entries of $\Lb^*$ are the eigenvalues of $H_{\mIs\mIs}(X^*)$. Suppose that they are not the $p$ smallest eigenvalues of $H_{\mIs\mIs}(X^*)$.  
 For convenience, we denote the eigenvalues of
$H_{\mIs\mIs}(X^*)$ in an ascending order as $\hlb_1\leq ...\leq \hlb_r$ and
their corresponding eigenvectors are $u_i$, $i=1,\ldots,r$, where
$r = |\mIs|$. Let $x_i$, $1 \le i\le p$, be the $i$th column for $X^*_{\mIs}$.  
Without loss of generality, let $x_1$ be  associated with 
an eigenvalue greater than $\hlb_p$, and $u_i$ ($i\leq p$) be an eigenvector associated with
an eigenvalue less than or equal to $\hlb_p$ but not be a column of $X^*_{\mIs}$.
The Assumption \ref{asmp:UWP} implies that  $u_i\notin \sspan{\{X^*_{\mIs}\}}$. Let $V$ be a matrix whose columns satisfy
\[v_j = 
\begin{cases}
u_i & \mbox{if }j=1,\\
x_j & \mbox{if } j=2,\ldots,p.\\
\end{cases} \] 
 Since the function  $\hat{E}(\hat{X})$ is
twice differentiable at $X^*_{\mIs}$ according to the definition of $\mIs$.
Therefore, an application of Theorem \ref{thm:secondcon-sp} gives
\bea\label{eq:localderive}
 \Delta&:=&\tr(V\zz H_{\mIs\mIs}(X^*_{\mIs}) V) + \tr((X^*_{\mIs})\zz V \Lb^* V\zz X^*_{\mIs})
 - \tr(V\zz X^*_{\mIs} \Lb^* (X^*_{\mIs})\zz V)
- \tr(V\Lb^* V\zz) \nonumber\\
&& + 2 \diag(X^*_{\mIs}V\zz \Pj_{X^*_{\mIs}})\zz\left(\Ld_{\mIs\mIs} - \frac{\gamma}{3}\Diag\left(\rho(X^*_{\mIs})^{-\frac{2}{3}}\right)\right)
\diag(X^*_{\mIs}V\zz\Pj_{X^*_{\mIs}}) \nonumber \\ &\ge& 0. \eea
It follows from that $V$ is  an orthonormal eigenbasis of $H_{\mIs\mIs}(X^*_{\mIs})$
and  Assumption \ref{asmp:UWP} that
\begin{eqnarray} \label{eq:gap-LV}
\tr(V\zz H_{\mIs\mIs}(X^*) V) - \tr((X^*_{\mIs})\zz H_{\mIs\mIs}(X^*)X^*_{\mIs})
\leq \hlb_i-\hlb_{p+1} \leq -\delta.
\end{eqnarray}
Since  $u_i\notin \sspan{\{X^*_{\mIs}\}}$, we obtain
\begin{eqnarray}
(X^*_{\mIs})\zz V =  V\zz X^*_{\mIs} &=&  I - e_1e_1\zz, \\
X^*_{\mIs}V\zz\Pj_{X^*_{\mIs}} &=&  x_1u_i\zz,
\end{eqnarray}
which further give
\begin{eqnarray}\label{eq:localderive2}
\Delta&=& \tr(V\zz H_{\mIs\mIs}(X^*_{\mIs}) V) - \tr(\Lb^*) +2 \diag(x_1u_i\zz)\zz\left(\Ld_{\mIs\mIs} - \frac{\gamma}{3}\Diag\left(\rho(X^*_{\mIs})^{-\frac{2}{3}}\right)\right)
\diag(x_1u_i\zz)\nonumber\\
&\leq& -\delta + 
 2 \max\left\{\lb_{\max}\left(\Ld_{\mIs\mIs} - \frac{\gamma}{3}\Diag\left(\rho(X^*_{\mIs})^{-\frac{2}{3}}\right)
\right),\,0\right\}
\nonumber\\
&\leq& -\delta + 2 \max\left\{\lb_{\max}\left(\Ld_{\mIs\mIs} - \frac{\gamma}{3}I\right),\,0\right\}
\nonumber \\
&\leq&  -\delta + 2 \max\left\{\ltb||\Ld_{\mIs\mIs} ||_2 -
\frac{\gamma}{3}\rtb,\,0\right\} \nonumber\\
&<& 0,
\end{eqnarray}
where the first inequality uses \eqref{eq:gap-LV} and the fact that $||\diag(x_1u_i\zz)||_2^2\leq 1$, the second inequality follows from $\rho\in[0,1]$,  the third inequality uses the fact that
$||\Ld_{\mIs\mIs} ||_2\leq ||\Ld||_2$ since the largest/smallest eigenvalue of 
a matrix is no less/greater than the largest/smallest eigenvalue of
its any principal submatrix, and the last inequality \eqref{eq:localderive2} is
due to \eqref{eq:cond3}. However,  
 \eqref{eq:localderive2} is a contradiction to \eqref{eq:localderive}. This completes the proof.
\end{proof}

\subsection{Necessary Condition for Global Minimizers}\label{sec:cond-global}

In this subsection, we consider whether a global
minimizer of \eqref{prob:minKS}  is a solution of the KS equation
\eqref{eq:KS} under the  exchange correlation functional 
\eqref{eq:exdef}. We first show the following inequality.
\begin{lemma} \label{lemma:ab-ineq}
It holds for  all $a,b\in [0,1]$  that
\[
(a-b)^2 (3 a^2+2 ab+b^2)= 3a^4-4 a^3b+b^4 \ge \frac{2}{3} (a^3-b^3)^2.
\]
\end{lemma}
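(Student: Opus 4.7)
The identity $(a-b)^2(3a^2+2ab+b^2) = 3a^4 - 4a^3 b + b^4$ is immediate by direct expansion, so only the second inequality requires work. Combining it with the standard factorization $(a^3 - b^3)^2 = (a - b)^2(a^2 + ab + b^2)^2$, I see that the claim is trivially true when $a = b$, and otherwise (after dividing through by $(a-b)^2 > 0$) equivalent to $9a^2 + 6ab + 3b^2 \geq 2(a^2 + ab + b^2)^2$ for all $a, b \in [0, 1]$. Setting $g(a, b) := 9a^2 + 6ab + 3b^2 - 2(a^2 + ab + b^2)^2$, the task reduces to showing $g \geq 0$ on $[0,1]^2$.

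To prove this polynomial inequality, my plan is to exploit scaling to collapse it to one variable. Writing $m := \max\{a, b\} \in [0, 1]$ and $(a, b) = (m\alpha, m\beta)$ with $\max\{\alpha, \beta\} = 1$, a short calculation gives $g(a,b) - m^2 g(\alpha, \beta) = 2m^2(1 - m^2)(\alpha^2+\alpha\beta+\beta^2)^2 \geq 0$, because the quartic part of $g$ shrinks faster than the quadratic part as $m$ decreases. Hence it suffices to verify $g \geq 0$ on the two edges $\{\alpha = 1\}$ and $\{\beta = 1\}$ of the unit square, i.e., in the two cases $a = 1$ and $b = 1$.

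On each of these edges, $g$ is a univariate polynomial of degree four. Since equality in the original lemma holds at $a = b = 1$, I expect the factor $(1 - b)$ to divide $g(1, b)$ and the factor $(1 - a)$ to divide $g(a, 1)$. Polynomial division should yield $g(1, b) = (1 - b)(2b^3 + 6b^2 + 9b + 7)$ and $g(a, 1) = (1 - a)(2a^3 + 6a^2 + 3a + 1)$, each a product of two nonnegative quantities on $[0, 1]$ (the cubic factors have only positive coefficients), which completes the proof.

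The main obstacle is recognizing the scaling reduction to $\max\{a, b\} = 1$, since without it one is stuck with a genuinely two-variable polynomial inequality. Once the reduction is in place, the remaining steps (expansion, polynomial division, and verifying signs on $[0, 1]$) are routine. A backup route, should the scaling trick be deemed clumsy, is a direct critical-point analysis of $g$ on $[0, 1]^2$: the stationarity equations, after subtraction, force $a^2 + 4ab + b^2 = 0$, which has no solution in $(0,1)^2$, so the minimum of $g$ lies on the boundary, where the same one-variable reasoning applies.
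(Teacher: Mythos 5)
Your proof is correct but takes a genuinely different route from the paper. The paper first treats $a=0$ or $b=0$ trivially, then for $a \ge b > 0$ substitutes $t = b/a \in (0,1]$, uses $a^6 \le a^4$ to drop the prefactor, and shows the one-variable function $f(t) = 3 - 4t + t^4 - \tfrac{2}{3}(1-t^3)^2$ is nonnegative on $[0,1]$ via $f'(t) = (t^3-1)(4-4t^2) \le 0$ and $f(1) = 0$. You instead cancel the common factor $(a-b)^2$ from both sides (via $a^3 - b^3 = (a-b)(a^2+ab+b^2)$), reducing to the cleaner two-variable polynomial inequality $g(a,b) := 9a^2 + 6ab + 3b^2 - 2(a^2+ab+b^2)^2 \ge 0$ on $[0,1]^2$, then scale by $m = \max\{a,b\}$ to push the problem to the two edges $a=1$ and $b=1$, and finish with the explicit linear-times-cubic factorizations $g(1,b) = (1-b)(2b^3+6b^2+9b+7)$ and $g(a,1) = (1-a)(2a^3+6a^2+3a+1)$, which I verified. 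Your route is purely algebraic, and the intermediate form $9a^2 + 6ab + 3b^2 \ge 2(a^2+ab+b^2)^2$ is arguably tidier and more memorable; the paper's route is shorter on the page and requires only an elementary derivative sign check. One small caveat: your ``backup'' critical-point sketch is not right as stated --- subtracting the stationarity equations $\partial_a g = 0$ and $\partial_b g = 0$ gives $3a = (a^2+ab+b^2)(a-b)$, not $a^2+4ab+b^2=0$ --- but since the main scaling argument is complete and self-contained, this does not affect the validity of the proof.
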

\begin{proof}
The inequality holds for $a=0$ or $b=0$. Consider the case on $a\ge b>0$. Introducing the variable $t=b/a \in (0,1]$ yields
\[
  a^4 (3-4t+t^4)- \frac{2}{3} a^6 (1-t^3)^2 \ge a^6 f(t),
\]
where $f(t)=3-4t+t^4- \frac{2}{3}  (1-t^3)^2$.  Since 
$f'(t)= (t^3-1)(4-4 t^2)\le 0$ for all $t\in[0,1]$, we have
$f(t)\ge f(1)=0$ for all $t\in[0,1]$, and then the inequality is proved. The case 
 on $b\ge a>0$ can be proved in a similar fashion.
 % . Then with $s=a/b \in (0,1]$:
%\[
%r = b^4 (3 s^4-4s^3+1)- c b^6 (s^3-1)^2
%\ge b^6 [(3 s^4-4s^3+1)- c (s^3-1)^2]
%=: b^6 g(s),
%\]
%where $g(s)=3 s^4-4s^3+1 - c (s^3-1)^2$. The estimate is
%sharp for $b=1$. We have $g(1)=0$ and
%\[
%g'(s)= 12 s^3-12 s^2 - 6 c s^2 (s^3-1)
%=s^2 (s-1) (12 -6c (s^2+s+1)).
%\]
%For $c\le 2/3$ we have $g'(s)\le 0$ for all $s\in[0,1]$ and thus
%$g(s)\ge g(1)=0$ for all $s\in[0,1]$. For $c>2/3$, however,
%there would exist an interval $[1-\epsilon,1)$ on which $g'(s)>0$
%and thus $g(s)<g(1)=0$. As above, since $r=g(s)$ if $b=1$, this shows
%that $c=2/3$ is the maximal choice for $c$.
\end{proof}

The next theorem establishes the equivalence based on estimating the
difference of total energy function values.
\begin{theorem}\label{thm:globalcon}
Suppose that $X^*$ is a global  
minimizer of \eqref{prob:minKS} using \eqref{eq:exdef}.   If Assumption
\ref{asmp:UWP} holds at $H(X^*)$ with a constant $\delta$ satisfying 
\begin{eqnarray}\label{eq:cond1}
\delta > p \ltb||\Ld||_2 - \frac{\gamma}{3}\rtb,
\end{eqnarray}
then $X^*$ must be an orthonormal eigenbasis of $H(X^*)$
corresponding to its $p$ smallest eigenvalues, namely, a solution of 
the KS equation \eqref{eq:KS}.
\end{theorem}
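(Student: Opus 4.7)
My plan is to argue by contradiction. First, using the invariance $E(XQ) = E(X)$ and $H(XQ) = H(X)$ for orthogonal $Q\in\Rkk$, I would assume without loss of generality that $\Lambda^* = X^{*\zz} H(X^*) X^*$ is diagonal, so that the columns of $X^*$ are orthonormal eigenvectors of $H(X^*)$ with eigenvalues $\lambda^*_1,\dots,\lambda^*_p$. Let $\hlb_1\le\cdots\le\hlb_n$ be all eigenvalues of $H(X^*)$. If $X^*$ were not an eigenbasis for the $p$ smallest, then at least $k\ge 1$ of the $\lambda^*_i$ would lie outside $\{\hlb_1,\dots,\hlb_p\}$ and thus be $\ge \hlb_{p+1}$. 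I would then construct the competitor $U^*\in\Rnk$ whose columns are orthonormal eigenvectors of $H(X^*)$ corresponding to $\hlb_1,\dots,\hlb_p$, and aim to show $E(U^*) < E(X^*)$.

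The key algebraic step is to re-express $E(X)$ using the frozen Hamiltonian $H(X^*)$. Starting from $H(X^*) = \half L + V_{ion} + \Diag(\Ld \rho(X^*) - \gamma\rho(X^*)^{1/3})$ and the identity $\tr(X\zz \Diag(v) X) = v\zz \rho(X)$, one obtains
\begin{equation*}
E(X) = \half \tr(X\zz H(X^*) X) - \half \rho(X)\zz \Ld \rho(X^*) + \frac{\gamma}{2}\rho(X)\zz \rho(X^*)^{1/3} + \frac{1}{4}\rho(X)\zz \Ld \rho(X) - \frac{3\gamma}{8}\rho(X)\zz \rho(X)^{1/3}.
\end{equation*}
Evaluating at $X=U^*$ and $X=X^*$ and setting $\Delta\rho := \rho(U^*) - \rho(X^*)$, the Hartree cross-terms collapse into a quadratic form and I get
\begin{equation*}
E(U^*) - E(X^*) = \half\bigl[\tr(U^{*\zz} H(X^*) U^*) - \tr(X^{*\zz} H(X^*) X^*)\bigr] + \frac{1}{4}\Delta\rho\zz \Ld \Delta\rho + R_{xc},
\end{equation*}
where $R_{xc}:=\frac{\gamma}{2}\Delta\rho\zz \rho(X^*)^{1/3} - \frac{3\gamma}{8}\bigl[\rho(U^*)\zz\rho(U^*)^{1/3} - \rho(X^*)\zz\rho(X^*)^{1/3}\bigr]$ is the second-order remainder in the exchange-correlation term. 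By Assumption \ref{asmp:UWP} and the structure of $U^*$, the first bracket equals $\sum_{i=1}^p\hlb_i - \sum_{i=1}^p\lambda^*_i \le -k\delta \le -\delta$.

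The main obstacle will be bounding $R_{xc}$ from above, since $\rho \mapsto \rho^{4/3}$ is not twice differentiable at $0$ and standard Taylor expansion is unavailable. This is precisely what Lemma \ref{lemma:ab-ineq} is designed for: substituting $a_i = \rho(U^*)_i^{1/3}$ and $b_i = \rho(X^*)_i^{1/3}$ (both in $[0,1]$ by Lemma \ref{lm:1}), multiplying by $\frac{3\gamma}{8}$, and summing the componentwise inequality yields $R_{xc} \le -\frac{\gamma}{12}\|\Delta\rho\|^2$. Combining this with the Rayleigh bound $\Delta\rho\zz \Ld \Delta\rho \le \|\Ld\|_2\,\|\Delta\rho\|^2$ and the estimate $\|\Delta\rho\|^2 \le \|\Delta\rho\|_\infty\,\|\Delta\rho\|_1 \le 2p$ (using $\rho(X)\in[0,1]^n$ and $e\zz\rho(X) = p$) gives
\begin{equation*}
E(U^*) - E(X^*) \le -\frac{\delta}{2} + \frac{p}{2}\Bigl(\|\Ld\|_2 - \frac{\gamma}{3}\Bigr),
\end{equation*}
which is strictly negative by \eqref{eq:cond1}. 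This contradicts the global optimality of $X^*$ and completes the proof.
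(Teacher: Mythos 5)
Your proof is correct and follows essentially the same route as the paper's: re-express $E$ via the frozen Hamiltonian $\tfrac{1}{2}\tr(X\zz H(X^*)X)$, apply Lemma \ref{lemma:ab-ineq} with $a_i=\rho(Y)_i^{1/3}$, $b_i=\rho(X^*)_i^{1/3}$ to control the exchange-correlation remainder, and combine with $\|\Delta\rho\|^2\le 2p$ and the eigenvalue gap to contradict global optimality. The only cosmetic difference is that you reach $\|\Delta\rho\|^2\le 2p$ via a H\"older-type bound $\|\Delta\rho\|_\infty\|\Delta\rho\|_1$, while the paper uses $\rho_i^2\le\rho_i$; both steps are equivalent, and you share with the paper the same (harmless) unstated assumption that $\|\Ld\|_2-\gamma/3\ge 0$ in the final displayed estimate.
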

\begin{proof}
  %  Suppose that $X^*$ is not but $Y$ is an orthonormal eigenbasis of $H(X^*)$
%corresponding to its $p$ smallest eigenvalues   Since $X^*$ must be an
%orthonormal eigenbasis of $H(X^*)$ and using Assumption \ref{asmp:UWP}, we  have
%\begin{eqnarray}
%\Delta H(Y,X^*):=\tr(Y\zz H(X^*) Y) - \tr( (X^*)\zz H(X^*)X^*) \leq
%\lb_p(H(X^*))-\lb_{p+1}(H(X^*))
%\leq -\delta.
%\end{eqnarray}
  Suppose that $X^*$ is not but $Y$ is an orthonormal eigenbasis of $H(X^*)$
corresponding to its $p$ smallest eigenvalues.   Since $X^*$ must be an
orthonormal eigenbasis of $H(X^*)$ and using Assumption \ref{asmp:UWP}, we  have
\be \label{eq:red-HYX}
\Delta H(Y,X^*):=\tr(Y\zz H(X^*) Y) - \tr( (X^*)\zz H(X^*)X^*) \leq
\lb_p(H(X^*))-\lb_{p+1}(H(X^*))
\leq -\delta.
\ee
Applying Lemmas \ref{lm:1} and \ref{lemma:ab-ineq} gives 
\be\label{eq:rho-ineq1} \sum\limits_{i=1}^n
\ltb\rho(Y)_i^{\yisan} -\rho(X^*)_i^{\yisan}\rtb^2 \left(3\roY_i^{\ersan}
+2\roY_i^{\yisan}\roX_i^{\yisan} + \roX_i^{\ersan}\right) \ge \frac{2}{3} ||\roY
- \roX||_2^2.  \ee
It follows from Lemma \ref{lm:1} that
%\be\label{eq:rho-ineq2} \rho(Y)\zz \rho(X^*) \le 1\zz \rho(X^*) = \tr(X X\zz) =  p.  \ee
\bea\label{eq:rho-ineq2} \|\rho(Y) - \rho(X^*) \|^2 % &=&   \rho(X^*)\zz \rho(X^*) + \rho(Y) \zz \rho(Y) - 2 \rho(Y)\zz \rho(X^*) \nonumber \\
&\le& %1\zz \rho(X^*) + 1\zz \rho(Y) - 2 \rho(Y)\zz \rho(X^*) \nonumber \\
 (1- \rho(Y))\zz \rho(X^*) +  (1- \rho(X^*))\zz \rho(Y) \\
 &\le& 1\zz \rho(X^*) + 1\zz \rho(Y) = \tr(X X\zz) + \tr(Y Y\zz) \nonumber \\
 &=& 2 p. \nonumber\eea
Using the relationship $\tr(Y\zz \Diag(\Ld \rho(X^*))Y) = \rho(Y)\zz \Ld \rho(X^*)$,
 the inequalities \eqref{eq:red-HYX}, \eqref{eq:rho-ineq1} and
 \eqref{eq:rho-ineq2}, and the assumption \eqref{eq:cond1}, 
we
obtain 
\begin{eqnarray}\label{eq:dE}
\Delta E(Y,X^*) &=& E(Y) - E(X^*) \nonumber \\
&=& \frac{1}{2}\Delta H(Y,X^*) 
+ \frac{1}{4}\left(\rho(Y)\zz \Ld \rho(Y) - \rho(X^*)\zz \Ld \rho(X^*) \right) 
- \frac{3\gamma}{8}\left(\rho(Y)\zz \rho(Y)^{\frac{1}{3}} - \rho(X^*)\zz \rho(X^*)^{\frac{1}{3}}\right)
\nonumber\\
&&  -\frac{1}{2} \tr(Y\zz
\Diag(\Ld \rho(X^*)-\gamma\rho(X^*)^{\frac{1}{3}})Y) +\frac{1}{2} \tr(X\zz
\Diag(\Ld \rho(X^*)-\gamma\rho(X^*)^{\frac{1}{3}})X^*)  \nonumber \\
 &=& \frac{1}{2}\Delta H(Y,X^*) 
+ \frac{1}{4}\left(\rho(Y)\zz \Ld \rho(Y) - \rho(X^*)\zz \Ld \rho(X^*) \right) 
- \frac{3\gamma}{8}\left(\rho(Y)\zz \rho(Y)^{\frac{1}{3}} - \rho(X^*)\zz \rho(X^*)^{\frac{1}{3}}\right)
\nonumber\\
&& - \frac{1}{2}\left(\rho(Y)\zz \Ld \rho(X^*) - \rho(X^*)\zz \Ld \rho(X^*) \right)
 + \frac{1}{2}\gamma \left(\rho(Y)\zz \rho(X^*)^{\frac{1}{3}} - \rho(X^*)\zz \rho(X^*)^{\frac{1}{3}}\right)\nonumber \\
&=& \frac{1}{2}\Delta H(Y,X^*) 
+ \frac{1}{4}(\rho(Y)-\rho(X^*))\zz \Ld 
(\rho(Y)-\rho(X^*)) \nonumber\\
&& - \frac{\gamma}{8}\sum\limits_{i=1}^n
\ltb\rho(Y)_i^{\yisan} -\rho(X^*)_i^{\yisan}\rtb^2 \left(3\roY_i^{\ersan}
+2\roY_i^{\yisan}\roX_i^{\yisan} + \roX_i^{\ersan}\right) \nonumber \\
%&\le& - \frac{\delta}{2} + \frac{||\Ld||_2}{4} ||\roY - \roX||_2^2 - \frac{\gamma}{12}||\roY - \roX||_2^2 \nonumber\\
&\le& - \frac{\delta}{2} + 
\ltb\frac{||\Ld||_2}{4}-\frac{\gamma}{12}\rtb ||\roY - \roX||_2^2 \nonumber \\
&\leq& - \frac{\delta}{2} + \ltb\frac{||\Ld||_2}{4}-\frac{\gamma}{12}\rtb (2p)
\nonumber \\
&<& 0, \nonumber
\end{eqnarray} 
which  is a contradiction to the fact that $X^*$ is a global minimizer. This
completes the proof.
\end{proof}

%\begin{remark}
%We notice that when $\gamma > 6||\Ld||_2$, the condition
%\eqref{eq:cond1} holds automatically. In fact, at this moment we even don't
%require the UWP condition for $H(X)$.
%\end{remark}

\begin{remark}
When the exchange correlation function $\epsilon_{xc}(\rho)$ is equal to zero, our condition \eqref{eq:cond1} becomes
$\delta > p||\Ld||_2$, which is much weaker than the condition
$\delta > 12p\sqrt{n}||\Ld||_2$
in Theorem 1 of \cite{LiuWangWenYuan}.
\end{remark}

\subsection{Lower Bounds for the Charge Density of Local
Minimizers} \label{sec:low-bound}

The exchange correlation energy functional is twice differentiable  if all
components of $\rho(X)$ are positive. However, the second-order derivative
may not be bounded at an arbitrary point $X$. In this subsection, we provides a few lower bounds for the charge density at certain types local
minimizers. These properties are useful for our analysis on the KS equation.

%According to the definition of the charge density \eqref{eq:rho}, we know
%that all the components of $\rho(X)$ are nonnegative. 
%If we assume there is no zero entry in $\rho(X)$, we can obtain 
%the second-order differentiability of the exchange correlation energy
%and consequently the KS energy function \eqref{eq:obj-X}.
%However, this is not enough to provide us a bounded second-order derivative which 
%is very crucial in the convergence analysis of the SCF based algorithm.
%
%Suppose we can prove there exists a uniform bound of the nonzero entries of $\rho(X)$ (at any
%solution $X$), then we can obtain the second-order differentiability and 
%the boundedness of the second-order derivative by only assuming there is no
%zero entry in $\rho(X)$.
%Therefore, it is meaningful to 
%study when the nonzero entries in the optimal charge densities
%have a uniformed lower bound.

%Before giving our main theorem. We need the definition of strong local minimizer, which can be found in \cite{Shapiro2000,Lewis2013}.
%\begin{definition}\label{dy:1}
%We will say that $x^*$ is a strong local minimizer of 
%a function $f: \mR^n \mapsto \mR$, if there exists $\kappa>0$ and a neighborhood
%$U$ of $x^*$ such that the inequality
%\begin{eqnarray}
%f(x)\geq f(x^*) + \kappa ||x-x^*||_2^2
%\end{eqnarray}
%holds for any $x\in U$.
%\end{definition}
%Consider the KS energy function \eqref{eq:obj-X}
%is not necessarily differentiable at a local minimizer, 
%we give a new definition yielding to the specialty of our problem by using the 
%second-order optimality condition.

Traditionally, a point $x^*$ is called a strong local minimizer \cite{Shapiro2000,Lewis2013} of 
a function $f: \mR^n \mapsto \mR$, if there exists a constant $\kappa>0$ and a neighborhood
$U$ of $x^*$ such that the inequality
\begin{eqnarray} \label{eq:strong-f} 
f(x)\geq f(x^*) + \kappa ||x-x^*||_2^2
\end{eqnarray}
holds for any $x\in U$. Here, we define a strong local minimizer based on
the second-order optimality conditions. 
\begin{definition}\label{dy:secondcon-tan-strong}
A point $X^*$ is called a strong local minimizer of \eqref{prob:minKS} using  
\eqref{eq:exdef} if and only if $X^*_{\mIs}$ is local minimizer of \eqref{eq:obj-X-rd}
and there exists a constant $\kappa>0$ such that, for all $Z \in \Rnk$,
\begin{eqnarray}\label{eq:SO4-strong}
&&\tr(Z\zz H_{\mIs\mIs}(X^*_{\mIs}) Z) + \tr( (X^*_{\mIs})\zz Z \Lb^* Z\zz X^*_{\mIs}) 
- \tr(Z\zz X^*_{\mIs} \Lb^* (X ^*_{\mIs})\zz Z)
- \tr(Z\Lb^* Z\zz) \nonumber\\
&+& 2 \diag( (X^*_{\mIs})Z\zz\Pj_{X^*_{\mIs}})\zz\left(\Ld_{\mIs\mIs} -
\frac{\gamma}{3}\Diag\left(\rho(X^*_{\mIs})^{-\frac{2}{3}}\right)\right)
\diag( (X^*_{\mIs})Z\zz \Pj_{X^*_{\mIs}}) \geq \kappa \|Z\|_F^2,
\end{eqnarray}
where $\Lb^*= (X^*_{\mIs})\zz H_{\mIs\mIs}(X^*)X^*_{\mIs}$ and $\mathcal{I^*}$
is the
index set of $X^*$ defined as \eqref{eq:mI}. 
\end{definition}

%According to the Theorem \ref{thm:secondcon-tan}, it is not difficult to have the following 
%lemma.
%\begin{lemma}\label{lm:secondcon-sp-strong}
%If $X$ is a strong local maximizer of \eqref{eq:obj-X}-\eqref{eq:const}, then there exists
%$\kappa>0$ such that for all $Z\in\mR^{r\times k}$
%\begin{eqnarray}\label{eq:SO4-strong}
%&&\tr(Z\zz H_{\mIs\mIs}(X_{\mIs}) Z) + \tr(X_{\mIs}\zz Z \Lb Z\zz X_{\mIs}) 
%- \tr(Z\zz X_{\mIs} \Lb X_{\mIs}\zz Z)
%- \tr(Z\Lb Z\zz) \nonumber\\
%&+& \diag(X_{\mIs}Z\zz(I-X_{\mIs}X_{\mIs}\zz))\zz\left(\Ld_{\mIs\mIs} - \frac{\gamma}{3}\Diag\left(\rho(X_{\mIs})^{-\frac{2}{3}}\right)\right)
%\diag(X_{\mIs}Z\zz(I-X_{\mIs}X_{\mIs}\zz)) \geq \kappa,
%\end{eqnarray}
%where $\mathcal{I^*}\subseteq \mI :=\{1,...,n\}$ be an index set with
%$|\mIs|=r\leq n$ such that $X^*_{i.}=0$ for all $i\in\mbIs := I\setminus
%\mIs$ and $\Lb=X_{\mIs}\zz H_{\mIs\mIs}(X_{\mIs})X_{\mIs}$.
%%On the other hand, $X\in\mR^{n\times k}$
%% satisfies \eqref{eq:KKT-1}-\eqref{eq:KKT-2} and if \eqref{eq:SO4-strong}
%% is a strict for any nonzero $Z\in\mR^{n\times k}$, then $X$ is a strict strong local minimizer.
%\end{lemma}
% 

Our condition \eqref{eq:SO4-strong} is weaker than \eqref{eq:strong-f} applying
to problem \eqref{prob:minKS} when the total energy $E(X)$ is twice
differentiable.  
The next result shows that the charge densities at a strong local minimizer are 
bounded below uniformly if they are positive. 
\begin{theorem}\label{thm:bound}
  Suppose that $L$ is positive semidefinite and $X^*$ is a strong local minimizer of \eqref{prob:minKS} satisfying Definition 
\ref{dy:secondcon-tan-strong}.
 Let  
 \begin{eqnarray}\label{eq:Ldef}
\bar{c}:=\min\{1,\,c_1,\,...,\,c_n\} \mbox{ and } c_i:= \min\limits_{j\neq i}\left(\frac{\gamma}{3(L^\dagger_{ii}-2L^\dagger_{ij}
+L^\dagger_{jj})}\right)^{\frac{3}{2}}.
\end{eqnarray}
 Then it holds:
\begin{eqnarray}\label{eq:bound}
  \mbox{ for any }  i\in\{1,2,...,n\}, \qquad \rho_i(X^*)\in[0,\bar{c}) \quad \Rightarrow \quad
\rho_i(X^*) = 0.
\end{eqnarray}
\end{theorem}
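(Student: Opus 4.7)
The plan is to argue by contradiction: assume that there is an index $i$ with $0 < \rho_i(X^*) < \bar{c}$, and exhibit a direction along which the strong-local-minimizer inequality \eqref{eq:SO4-strong} fails. First, since $\bar{c} \le 1$ one has $\rho_i(X^*) \in (0,1)$, and the orthonormality $(X^*)\zz X^* = I_p$ together with $M_{ii} = \rho_i(X^*)$ for $M := X^*(X^*)\zz$ gives $\sum_{k \ne i} M_{ik}^2 = \rho_i(X^*)(1 - \rho_i(X^*)) > 0$. Since $M_{ik} = \xi_i\zz \xi_k = 0$ whenever $k \notin \mIs$ (here $\xi_k$ denotes row $k$ of $X^*$), there must exist $j \in \mIs\setminus\{i\}$ with $M_{ij} \ne 0$. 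The assumption $\rho_i(X^*) < c_i$ specialized at this $j$ reads $L^\dagger_{ii} - 2L^\dagger_{ij} + L^\dagger_{jj} < \gamma/(3\rho_i(X^*)^{2/3})$; moreover $r := |\mIs| \ge p + 1$, since $r = p$ would make $X^*_{\mIs}$ square orthogonal and force $\rho_k \equiv 1$ on $\mIs$, contradicting $\rho_i(X^*) < 1$.

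Next I construct a tangent test direction for the reduced problem \eqref{eq:obj-X-rd}: let $S := A X^*_{\mIs} \in \R^{r\times p}$ with $A := E_{ij} - E_{ji}$, where $E_{kl} = e_k e_l\zz$ and $e_k$ runs over the standard basis of $\R^r$. Since $A\zz = -A$, the matrix $S$ lies in the tangent space at $X^*_{\mIs}$. A direct calculation yields $\diag(X^*_{\mIs} S\zz \Pj_{X^*_{\mIs}}) = M_{ij}(e_i - e_j)$, so the ``$J$-curvature'' contribution in \eqref{eq:SO4-strong} equals
\[2 M_{ij}^2 \left[L^\dagger_{ii} - 2 L^\dagger_{ij} + L^\dagger_{jj} - \tfrac{\gamma}{3}\bigl(\rho_i(X^*)^{-2/3} + \rho_j(X^*)^{-2/3}\bigr)\right] < -\frac{2\gamma M_{ij}^2}{3\rho_j(X^*)^{2/3}} \le -\frac{2\gamma M_{ij}^2}{3},\]
using the inequality above on $L^\dagger$ together with $\rho_j(X^*) \le 1$. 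For the remaining piece $\tr(S\zz H_{\mIs\mIs}(X^*) S) - \tr(\Lb^* S\zz S)$, I would invoke the first-order condition $H(X^*)X^* = X^*\Lb^*$ (which entails $[H,M] = 0$) to rewrite it as $-\tr(A[H,A]M)$ and expand it explicitly in $H_{ii}, H_{jj}, H_{ij}, \rho_i(X^*), \rho_j(X^*), M_{ij}$, and the off-diagonal quantities $\sum_{l \ne i, j} H_{il}M_{li}$, $\sum_{l \ne i, j} H_{jl}M_{lj}$.

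The hard part is to ensure that the negative $J$-contribution dominates this algebraic remainder. Positive semidefiniteness of $M$ gives $M_{ij}^2 \le \rho_i(X^*)\rho_j(X^*)$, so the direction $S$ by itself yields a negative term of order only $\rho_i(X^*)^{1/3}$, which need not beat the $O(1)$ remainder. To remedy this I would augment $S$ by a horizontal correction $\Pj_{X^*_{\mIs}} W$ for a suitable $W \in \R^{r \times p}$, chosen so that $\diag\bigl(X^*_{\mIs}(S + \Pj_{X^*_{\mIs}} W)\zz\bigr)$ remains proportional to $e_i - e_j$ but with a substantially larger coefficient, while simultaneously cancelling the leading pieces of the algebraic remainder. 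Feasibility of the resulting linear system (on the subspace $\{W : (X^*_{\mIs})\zz W = 0\}$, which has dimension $p(r-p)$) is secured by the dimension count $p(r-p) \ge r - 1$ available once $r \ge p + 1$. Substituting this augmented direction into \eqref{eq:SO4-strong} then produces a strictly negative value for its left-hand side, contradicting $\kappa \|Z\|_F^2 > 0$ and completing the argument.
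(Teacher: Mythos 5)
Your contradiction framework is sound, your computation $\diag(X^*_{\mIs}S\zz\Pj_{X^*_{\mIs}}) = M_{ij}(e_i-e_j)$ for $S=(E_{ij}-E_{ji})X^*_{\mIs}$ is correct, and the observation that $r\ge p+1$ is a valid (and necessary) preliminary. But you have correctly diagnosed, and then failed to repair, the fatal obstruction of your direction: the $J$-contribution carries the factor $M_{ij}^2\le\rho_i(X^*)\rho_j(X^*)$, which vanishes as $\rho_i(X^*)\to 0$, whereas the non-$J$ remainder $-\tr(A[H,A]M)$ has no a priori reason to vanish at the same rate (it contains $O(1)$ pieces such as $H_{jj}M_{ii}$ and $(HM)_{jj}$). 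The proposed ``horizontal augmentation'' is never constructed: you give a dimension count suggesting a linear system might be solvable, but no construction of $W$, no verification that it actually boosts the $\diag(\cdot)$ coefficient while cancelling the remainder, and no estimate. As it stands the argument is a sketch with a hole precisely where the difficulty is.

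The paper's proof takes a genuinely different route that sidesteps this entirely. Instead of a ``row rotation'' $AX^*_{\mIs}$, it tests \eqref{eq:SO4-strong} with a ``column rotation'' $Z_a$ whose $s$-th column is $az+\sqrt{1-a^2}\,x_s$ (and all other columns zero), where $z\perp x_s$ is supported on positions $i,j$ and $\|Z_a\|_F=1$. The key point is that the paper never tries to make the $J$-term overpower the remainder. It shows the remainder (the non-$J$ part of the left-hand side of \eqref{eq:SO4-strong}) is bounded by $a^2(\hat\lambda_r-\hat\lambda_1)$, and then chooses the free scalar $a=\sqrt{\kappa/(\hat\lambda_r-\hat\lambda_1)}$ so that this bound equals $\kappa\|Z_a\|_F^2=\kappa$. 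The second-order inequality then forces the $J$-term to be $\ge 0$; since $\diag(X^*_{\mIs}Z_a\zz\Pj_{X^*_{\mIs}})=\frac{ax_{is}x_{js}}{\sqrt{x_{is}^2+x_{js}^2}}e_{(j,-i)}$ has a \emph{strictly positive} squared prefactor (however small), nonnegativity of the $J$-term immediately yields $e_{(j,-i)}\zz\bigl(L^\dagger_{\mIs\mIs}-\frac{\gamma}{3}\Diag((\rho^*_{\mIs})^{-2/3})\bigr)e_{(j,-i)}\ge 0$, and hence the bound $\rho_j(X^*)\ge c_j$. In other words, the argument is a sign argument, not a magnitude argument, and the scaling parameter $a$ — absent from your proposal — is what makes it close. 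If you want to complete your approach, you would effectively have to rediscover this scaling trick; the augmentation idea is not needed and introduces complications the paper avoids.
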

\begin{proof}
For convenience, we denote $\rho^*_{\mIs} = \rho(X^*_{\mIs})$.
If there exists a row $j$ in $X^*_{\mIs}$ such that either $1$ or $-1$ is an
entry of this row, then this row has only 
one nonzero entry  according to the orthonormality of $X^*_{\mIs}$. Hence,
$(\rho^*_{\mIs})_j = 1$ and \eqref{eq:bound} holds at $j$.

 We next consider the components in the set $ \mathcal{J}:=\{j\mid 
 j \in \mIs \mbox{ and } |(X^*_{\mIs})_{js}| < 1,\, s=1,\ldots,p\}$.
For any given $j\in \mathcal{J}$, there exists a 
nonzero entry, denoted as $(X^*_{\mIs})_{js}$, in the $j$-th row of $X^*_{\mIs}$.  Since $|(X^*_{\mIs})_{js}|<1$,
there exists at least another nonzero entry, denoted as $(X^*_{\mIs})_{is}$,  in the $s$-th column of $X^*_{\mIs}$
due to the orthonormality of $X^*_{\mIs}$.
For simplicity, let $x_l$, $l=1,...,p$, be the $l$-th column of $X^*_{\mIs}$ and
set $r= |\mIs|$,  $x_{js} = (X^*_{\mIs})_{js}$ and $x_{is} = (X^*_{\mIs})_{is}$.
Define a vector $z\in\mR^r$ whose $l$-th component ($l=1,\ldots,p$) is 
\begin{eqnarray}
z_l =\left\{
\begin{array}{cc}
\frac{x_{is}}{\sqrt{x_{is}^2+x_{js}^2}}, & \mbox{if}\, l=j;\\
\frac{-x_{js}}{\sqrt{x_{is}^2+x_{js}^2}}, & \mbox{if}\, l=i;\\
0, & \mbox{otherwise.}
\end{array}
\right.
\end{eqnarray}
A short calculation gives $||z||_2=1$, $z\zz x_s =0$ and 
\begin{eqnarray}\label{eq:zx}
\diag(z x_s\zz) = \frac{x_{is}x_{js}}{\sqrt{x_{is}^2+x_{js}^2}} e_{(j,-i)},
\end{eqnarray}
where $e_{(j,-i)}\in \mR^r$ has $1$ on its $j$-th entry, $-1$ on its $i$-th entry and $0$ elsewhere.

 For $a\in[0,1]$, let $Z_a\in\Rnk$ be a matrix whose $s$-th column is $a
z+\sqrt{1-a^2}x_s$ and all other columns are zero.
Without loss of generality, let  $\hlb_1\leq ...\leq \hlb_r$ 
be the eigenvalues of $H_{\mIs\mIs}(X^*)$ 
in the ascending order, and $x_s$ be an eigenvector of $H_{\mIs\mIs}(X^*)$
associated with $\hlb_s$, $s\in\{1,...,r\}$.
Then, we obtain 
\begin{eqnarray}\label{eq:Z1}
\tr(Z_a\zz H_{\mIs\mIs}(X^*) Z_a) \leq a^2\hlb_r + (1-a^2)\hlb_s ,\\
\label{eq:Z2} \tr(Z_a\Lb^* Z_a\zz) = \tr(\Lb^* Z_a\zz Z_a) = \hlb_s ,
\end{eqnarray}
which yields 
\begin{eqnarray}\label{eq:Z12}
\tr(Z_a\zz H_{\mIs\mIs}(X^*) Z_a) 
- \tr(Z_a\Lb^* Z_a\zz) \leq a^2\hlb_r + (1-a^2)\hlb_s - \hlb_s 
= a^2(\hlb_r - \hlb_s).
\end{eqnarray}
The definition of $Z_a$  gives
\begin{eqnarray}\label{eq:ZX}
(Z_a\zz X^*_{\mIs})_{pq}=\left\{
\begin{array}{cc}
a z\zz x_q, & \mbox{if}\, p=s,\,q\neq s;\\
\sqrt{1-a^2} , & \mbox{if}\, p=s,\,q = s;\\
0, & \mbox{otherwise.}
\end{array}
\right.
\end{eqnarray}
Hence, we have
\begin{eqnarray}\label{eq:Z3}
\tr((X^*_{\mIs})\zz Z_a \Lb^* Z_a\zz X^*_{\mIs}) 
&=& \tr (\Lb^* Z_a \zz X^*_{\mIs} (X^*_{\mIs})\zz Z_a)
= \hlb_s \left(\sum\limits_{q=1,q\neq s}^p a^2 (z\zz x_q)^2 + (1-a^2)\right)\nonumber\\
&=& \hlb_s \left(\sum\limits_{q=1}^p a^2 (z\zz x_q)^2 + (1-a^2) - a^2(z\zz x_s)^2\right)\nonumber\\
&=& \hlb_s (1 + a^2 ||z\zz X^*_{\mIs}||_2^2  -a^2)
= a^2 \hlb_s||z\zz X^*_{\mIs}||_2^2   + (1-a^2)\hlb_s.
\end{eqnarray}
and 
\begin{eqnarray}\label{eq:Z4}
\tr(Z_a\zz X^*_{\mIs} \Lb^* (X^*_{\mIs})\zz Z_a)
&=&  \left(\sum\limits_{q=1,q\neq s}^p a^2 (z\zz x_q)^2 \hlb_q  + (1-a^2)\hlb_s\right)\nonumber\\
&\geq&  \left(\sum\limits_{q=1,q\neq s}^p a^2 (z\zz x_q)^2 \hlb_1  
+ (1-a^2)\hlb_s\right)\nonumber\\
&=&  \left(\sum\limits_{q=1}^p a^2 (z\zz x_q)^2 \hlb_1  
+ (1-a^2)\hlb_s \right)\nonumber\\
&=& a^2 \hlb_1||z\zz X^*_{\mIs}||_2^2 +  (1-a^2) \hlb_s.
\end{eqnarray}
Combining \eqref{eq:Z3} and \eqref{eq:Z4} together yields
\begin{eqnarray}\label{eq:Z34}
&& \tr((X^*_{\mIs})\zz Z_a \Lb^* Z_a \zz X^*_{\mIs}) 
 - \tr(Z_a\zz X^*_{\mIs} \Lb^* (X^*_{\mIs})\zz Z_a) \nonumber\\
&\leq& 
(a^2 \hlb_s||z\zz X^*_{\mIs}||_2^2   + (1-a^2)\hlb_s) -  (a^2 \hlb_1||z\zz X^*_{\mIs}||_2^2 +  (1-a^2) \hlb_s)
= a^2(\hlb_s - \hlb_1)||z\zz X^*_{\mIs}||_2^2\nonumber\\
&\leq& a^2(\hlb_s - \hlb_1).
\end{eqnarray}
The equality \eqref{eq:zx} gives
\begin{eqnarray}\label{eq:Z5}
\diag(X^*_{\mIs}Z_a\zz\Pj_{X^*_{\mIs}}) = a \diag(x_s z\zz)
=\frac{ax_{is}x_{js}}{\sqrt{x_{is}^2+x_{js}^2}} e_{(j,-i)}.
\end{eqnarray}
Let $Z_a$ with $a = \sqrt{\frac{\kappa}{\hlb_r - \hlb_1}}$.
 Using \eqref{eq:Z12} and \eqref{eq:Z34}, we have
\begin{eqnarray} \label{eq:SO4-strong-use-ineq1}
&&\tr(Z_a\zz H_{\mIs\mIs}(X^*_{\mIs}) Z_a) 
+ \tr((X^*_{\mIs})\zz Z_a \Lb_{\mIs} Z_a\zz X^*_{\mIs}) \nonumber\\
&&
- \tr(Z_a\zz X^*_{\mIs} \Lb_{\mIs} (X^*_{\mIs})\zz Z_a)
- \tr(Z_a\Lb_{\mIs} Z_a\zz) \leq a^2(\hlb_r - \hlb_1) = \kappa.
\end{eqnarray}

It follows from our definition of strong local minimizers that
\begin{eqnarray}\label{eq:SO4-strong-use}
&&\tr(Z_a\zz H_{\mIs\mIs}(X^*) Z_a) 
+ \tr((X^*_{\mIs})\zz Z_a \Lb^* Z_a\zz X^*_{\mIs}) - \tr(Z_a\zz X^*_{\mIs} \Lb^* (X^*_{\mIs})\zz Z_a)
- \tr(Z_a\Lb^* Z_a\zz) \nonumber\\
&+&  2 \diag(X^*_{\mIs}Z_a\zz\Pj_{X})\zz\left(\Ld_{\mIs\mIs} - \frac{\gamma}{3}\Diag\left((\rho^*_{\mIs})^{-\frac{2}{3}}\right)\right)
\diag(X^*_{\mIs}Z_a\zz\Pj_{X^*_{\mIs}}) \geq \kappa,
\end{eqnarray}
which together with \eqref{eq:SO4-strong-use-ineq1}
 gives \begin{eqnarray}\label{eq:SO5}
\diag(X^*_{\mIs}Z_a\zz\Pj_{X^*_{\mIs}})\zz
\left(\Ld_{\mIs\mIs} - \frac{\gamma}{3}\Diag\left((\rho^*_{\mIs})^{-\frac{2}{3}}\right)\right)
\diag(X^*_{\mIs}Z_a\zz\Pj_{X^*_{\mIs}}) \geq 0.
\end{eqnarray}
Substituting \eqref{eq:Z5} into \eqref{eq:SO5}, we obtain 
\begin{eqnarray}\label{eq:SO51}
e_{(j,-i)}\zz
\left(\Ld_{\mIs\mIs} - \frac{\gamma}{3}\Diag\left((\rho^*_{\mIs})^{-\frac{2}{3}}\right)\right)
e_{(j,-i)} \geq 0.
\end{eqnarray}
Expending the terms of \eqref{eq:SO51} yields
\begin{eqnarray}\label{eq:ineq1}
(\Ld_{\mIs\mIs})_{jj}-2(\Ld_{\mIs\mIs})_{ji}+(\Ld_{\mIs\mIs})_{ii}
- \frac{\gamma}{3}(\rho^*_{\mIs})_j^{-\frac{2}{3}}
- \frac{\gamma}{3}(\rho^*_{\mIs})_i^{-\frac{2}{3}}\geq 0,
\end{eqnarray}
which implies
\begin{eqnarray}\label{eq:ineq2}
(\Ld_{\mIs\mIs})_{jj}-2(\Ld_{\mIs\mIs})_{ji}+(\Ld_{\mIs\mIs})_{ii}
\geq \frac{\gamma}{3}(\rho^*_{\mIs})_j^{-\frac{2}{3}}.
\end{eqnarray}
Therefore, we obtain 
\begin{eqnarray}\label{eq:ineq3}
(\rho^*_{\mIs})_j\geq \left(\frac{\gamma}{3((\Ld_{\mIs\mIs})_{jj}
-2(\Ld_{\mIs\mIs})_{ji}+(\Ld_{\mIs\mIs})_{ii})}\right)^{\frac{3}{2}} \geq c_j,
\end{eqnarray}
where $c_j$ is defined in \eqref{eq:Ldef}.
 Similarly, we can prove \eqref{eq:ineq3} holds for any $j\in \mathcal{J}$.
%Combining with the case in $\mbIs$ and $\mIs\setminus \mathcal{J}$ we discussed before, 
This completes the proof.
\end{proof}

\section{Analysis of the KS Equation}
\subsection{Formulating the KS Equation as a Fixed Point Map}\label{sec:prob}

%For a given $X\in\Rnk$ that satisfies $X\zz X =I$, the charge density associated
%with $X$ is
%\be\label{eq:rho1}
%\rho = \diag(XX\zz).
%\ee
%The Hamiltonian $\hat H(\rho)\in\Rnn$ is a symmetric matrix function
%\be \label{eq:H} \hat H(\rho):= \half L + V_{ion}  +
%\Diag(L^\dagger \rho) + \Diag(\mu_{xc}(\rho)\zz e),\ee 
%where the first matrix $L$  in \eqref{eq:H} is a finite
%dimensional representation of the Laplacian operator and $L^\dagger$ 
% corresponds to the pseudo-inverse of $L$,  the
% second matrix
%$V_{ion}$ is the ionic
%pseudopotentials sampled on the suitably chosen Cartesian grid and the fourth term $ \mu_{xc}(\rho) :=\frac{ d \epsilon_{xc}}{d \rho} \in \R^{n \times
% n}$ is the Jacobian of certain exchange correlation energy  $\epsilon_{xc}(\rho)$.
% More detailed descriptions of each of the terms of
%$\hat H(\rho)$ can  be found in
%\cite{YangMezaLeeWang2009,YangMezaWang2007}.

The KS equation \eqref{eq:KS}  constitutes a nonlinear system with respect to $X$.
 Note that the Hamiltonian matrix \eqref{eq:H} is a symmetric matrix function
with respect to $\rho$ as 
\be \label{eq:H-rho} \hat H(\rho):= \half L + V_{ion}  +
\Diag(L^\dagger \rho) + \Diag(\mu_{xc}(\rho)\zz e),\ee 
and the KS equation becomes
\be \label{eq:KS1} \left\{ \begin{aligned} \hat H(\rho) X &=  X \Lb, \\
  X\zz X &=  I,
  \end{aligned} \right.\ee
where $X\in\Rnk$ and $\Lb\in\Rkk$ is a diagonal matrix consisting of the $p$ smallest eigenvalues of
$\hat H(\rho)$. 
%Given $\rho \in \R^n$, the matrix $H(\rho)$ that we will consider in this paper is defined as 
 The eigenvalue  decomposition of $\hat H(\rho)$ is determined once
 $\rho$ is given. Hence, we can write $X$ as $X(\rho)$ to reflect the dependence
 on $\rho$ and the KS equation \eqref{eq:KS} can be viewed as a system of nonlinear
equations with respect to the charge density $\rho$ as 
\begin{eqnarray}\label{eq:rho2}
\rho = \diag(X(\rho)X(\rho)\zz).
\end{eqnarray}

Alternatively, the function \be \label{eq:cV}
V:= \cV(\rho) =  L^\dagger \rho + \mu_{xc}(\rho)\zz e
\ee
 is called potential and  
 the Hamiltonian matrix $\hat H(\rho)$, by convenient abuse of notation,  can  be expressed as 
\be \label{eq:H-V}  H(V):= \half L + V_{ion}  +
\Diag(V). \ee
 Obviously, it holds $\hat H(\rho)= H(V(\rho))$. Therefore, $X$ can be
 interpreted as an implicit function of
$V$. Let $X(V)\in \Rnk$ be the eigenvectors
corresponding to the $p$ smallest eigenvalues of $H(V)$. Then,
the fixed point map  \eqref{eq:rho2} is a system of  nonlinear equations with respect to $V$ as  
\be\label{eq:V}
  \left\{  \begin{aligned}
V &= \cV( F_\phi(V)),\\ 
F_\phi(V) &= \diag(X(V)X(V)\zz).
\end{aligned} \right.
\ee
%where
%\begin{eqnarray}\label{eq:F-phi1}
% F_\phi(V) = \diag(X(V)X(V)\zz).
%\end{eqnarray}

The fixed point map \eqref{eq:V} is well defined if 
there is a gap between the $p$th and
$(p+1)$st smallest eigenvalues of $H(V)$. 
%Assumption \ref{asmp:UWP} holds at $H(V)$.
However, when these two eigenvalues are equal, there exists ambiguity on choosing the
eigenvectors $X(V)$ since the multiplicity is greater than one. A common
approach is to revise $F_\phi(V)$ in \eqref{eq:V}
by constructing a proper filter function. Let $q_1(V) , \ldots, q_n(V) $
be the eigenvectors of $H(V)$ associated with eigenvalues
$\lambda_1(V), \ldots, \lambda_n(V)$, respectively.  A particular choice of the filter function is the Fermi-Dirac distribution of the form
\be \label{eq:FDD} f_{\mu}(t) := \frac{1}{1+e^{\beta(t-\mu)}}, \ee
where $\mu$ is the solution of the equations
\be \label{eq:FDD-mu} \sum\limits_{i=1}^n f_\mu(\lambda_i(V)) = p. \ee
Since the left hand side of \eqref{eq:FDD-mu}
is monotonic with respect to $\mu$ for a fixed $\beta$,
the solution to \eqref{eq:FDD-mu} is unique for any choice of $\beta$ and $\lambda_i$.
Then the fixed map \eqref{eq:V} is replaced by the approximation
%\be \label{eq:F-fmu1}
%V := \cV( F_{f_\mu}(V)), \quad  F_{f_\mu}(V) = \diag\left(\sum\limits_{i=1}^n
% f_{\mu}(\lambda_i(V)) q_i(V) q_i(V)\zz\right).  \ee
\be\label{eq:F-fmu1}
  \left\{  \begin{aligned}
V &= \cV( F_{f_\mu}(V)),\\ 
F_{f_\mu}(V) &= \diag\left(\sum\limits_{i=1}^n
 f_{\mu}(\lambda_i(V)) q_i(V) q_i(V)\zz\right).
\end{aligned} \right.
\ee
%In this paper we consider to solve the KS equation \eqref{eq:KS} via solving the nonlinear system \eqref{eq:rho-V-phi-old}, where $F(V)$ takes the form \eqref{eq:F-phi} or \eqref{eq:F-fmu}.

\subsection{The Jacobian of the Fixed Point Maps}\label{sec:Jac}
We first reformulate the functions $F_\phi(V)$ in \eqref{eq:V} and $F_{f_\mu}(V)$ in
\eqref{eq:F-fmu1} as the form of spectral operators. Using the differentiability of spectral
operators,  they can be proved to be differentiable  under some conditions.
Let $\{\lambda_i(V), q_i(V)\}$ be the eigenpairs of $H(V)$ and assume that the eigenvalues
$\lambda_1(V), \ldots, \lambda_n(V)$ are sorted in an ascending order,
\[ \lambda_1(V) \le \ldots \le \lambda_p(V) \le \lambda_{p+1}(V) \le \ldots \le
\lambda_n(V).\]
The eigenvalue decomposition of $H(V)$ can be written as
\be\label{eq:H-eig}
H(V) = Q(V) \Pi(V) Q(V)\zz,
\ee
where  $Q(V)$ and  $\Pi(V)$ are
\begin{equation} \label{eq:Q}
Q(V) = [q_1(V), \; q_2(V), \; \ldots, q_n(V)] \in \Rnn \; \mbox{ and  }  \;
\Pi(V) = \Diag(\lambda_1(V),\lambda_2(V), \ldots, \lambda_n(V)) \in
\Rnn.
\end{equation}
Hence, the function $F_\phi(V)$ in \eqref{eq:V} is equivalent to 
\be\label{eq:F-phi}
  F_{\phi}(V) = \ \diag(Q(V) \phi(\Pi(V)) Q(V)\zz),
\ee
where $\phi(\Pi) = \Diag( \phi(\lambda_1(V)),\phi(\lambda_2(V)), \ldots,
\phi(\lambda_n(V)))$
and
\be \label{eq:phi-complete}
\phi(t) := \begin{cases} 1 & \mbox{ for }
t \leq \frac{\lambda_p(V)+\lambda_{p+1}(V)}{2}, \\
  0 & \mbox{ for } t > \frac{\lambda_p(V)+\lambda_{p+1}(V)}{2}.
  \end{cases}
\ee
Similarly, the function $F_{f_\mu}(V)$ in \eqref{eq:F-fmu1} in the spectral
operator form is 
\be\label{eq:F-fmu}
 F_{f_\mu}(V) =  \diag(Q(V) f_{\mu}(\Pi(V)) Q(V)\zz).
\ee

%Suppose that there is a gap between the $p$th and $(p+1)$st eigenvalues of $\hat H(\rho)$. Namely,  $\lambda_p(V) > \lambda_{p+1}(V)$, $F(V)$ can be proved to be differentiable using the formulation: It can be easily verified that \eqref{eq:phi-complete} is twice differentiable at all $\lb_i(V)$, $i=1,2,\cdots,n$.

%The  Clarke generalized Jacobian of $F_{\phi}(\rho)$ and $F_{f_\mu}(\rho)$ follows directly from the differentiability of the spectral operators \cite{Ding2012}.
%Given a mapping $G:\,\Omega\rightarrow \mR^{m}$, where $\Omega\subseteq\mR^{n}$
% is an open set.  Let $\mathcal{D}_\Omega$ be the set of points in
%$\Omega$, where $G$ is differentiable and $G'(x)$ be the derivative of $G$ at $x\in\mathcal{D}_\Omega$.
%Then the B(ouligand)-subdifferential of $G$ at $x\in\Omega$ is defined by
%$\partial_BG(x)=\left\{\lim\limits_{\mathcal{D}_\Omega\ni x^k\rightarrow x}
%G'(x^k)\right\}$,
%and the Clarke generalized Jacobian \cite{Clarke1983}
% of $G$ at $x\in\Omega$ takes the form  $ \partial G(x) =\mbox{conv}\{\partial_B G(x)\}$,
%where $\mbox{conv}$ stands for the convex hull. 
Let  $\mu_1,\cdots,\mu_{r(V)}$ be the distinct eigenvalues among
$\{\lb_1(V),\cdots,\lb_n(V)\}$,  $r(V)$ be  the total number of
distinct values and $r_p(V)$ be the number of distinct
eigenvalues no greater than $\lb_p$. For any $k=1,\cdots,r(V)$, the set of
indices $i$ such that $\lambda_i=\mu_k$ is denoted by
 $\alpha_k:=\{i\mid \lb_i=\mu_k,\; i=1,\cdots,n\} $.
The next lemma shows the directional derivative of $F_{\phi}(V)$
by using  the differentiability of the spectral operators
\cite{ChenQiTseng2003,Ding2012,Lewis2001,Torki2001,Shapiro2002}.

\begin{lemma}\label{lemma:Jacobian-HV-phi} 
%Suppose that there exists a gap between the $p$th and
%  $(p+1)$st smallest eigenvalues of $H(V)$, i.e.,
%$\lambda_{p+1}(V)>\lambda_p(V)$.
Suppose that 
Assumption \ref{asmp:UWP} holds at $H(V)$,
%there exists a gap between the $p$th and
%  $(p+1)$st smallest eigenvalues of $H(V)$, 
i.e., $\lambda_{p+1}(V)>\lambda_p(V)$.
%If $H(V)$ is continuously differentiable (or semismooth) at a neighborhood of $V$,
%then $F_{\phi}(V)$ defined in \eqref{eq:F-phi}
%is continuously differentiable (or semismooth) at the same neighborhood of $V$ and
Then $F_{\phi}(V)$ is continuously differentiable and its directional derivative  at $V$ along $z\in R^n$ is 
\begin{eqnarray}\label{eq:Jacobian-VF-phi-F}
\partial_V F_{\phi}(V)[z] =  \diag\left(
Q(V)\left(g_{\phi}(\Pi(V))\circ \left(Q(V)\zz\Diag\left(  z\right)
 Q(V)\right)\right)Q(V)\zz\right),
\end{eqnarray}
where ``$\circ$" denotes the Hadamard product between two matrices,
and $g_{\phi}(\Pi(V))\in \Rnn$ is the so-called first divided
difference matrix defined as
\begin{eqnarray}\label{eq:gdef}
(g_{\phi}(\Pi(V)))_{ij}=\left\{
\begin{array}{cc}
\frac{1}{\lb_i(V)-\lb_j(V)} & \mbox{ if }
 i\in\alpha_k,\,j\in\alpha_l,\,k\leq r_p(V),\, l>r_p(V),\\
\frac{-1}{\lb_i(V)-\lb_j(V)} & \mbox{ if }
i\in\alpha_k,\,j\in\alpha_l,\,k> r_p(V),\, l \leq r_p(V),\\
0 & \mbox{otherwise.}
\end{array}
\right.
\end{eqnarray}
\end{lemma}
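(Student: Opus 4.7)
The plan is to reduce the problem to applying the classical Daleckii--Krein (L\"owner) formula for the derivative of a primary matrix function, after handling the discontinuity of $\phi$ by observing that the gap assumption makes $\phi$ effectively smooth on a neighborhood of the spectrum of $H(V)$. First I would exploit the affine dependence of the Hamiltonian on $V$: from \eqref{eq:H-V} one has $\partial_V H(V)[z] = \Diag(z)$, so by the chain rule it suffices to differentiate the matrix-valued map $H \mapsto Q\phi(\Pi)Q\zz$ with respect to $H$ at $H(V)$ in the direction $\Diag(z)$ and then apply $\diag(\cdot)$.

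Next I would use the gap assumption $\lambda_{p+1}(V) > \lambda_p(V)$ to argue that, on a sufficiently small neighborhood of $V$, the cutoff $(\lambda_p(V) + \lambda_{p+1}(V))/2$ lies strictly between the $p$-th and $(p+1)$-st eigenvalue curves of $H(V')$, so the spectral projector
\[
P(V') := \sum_{\lambda_i(V')\le \lambda_p(V)} q_i(V') q_i(V')\zz
\]
is well defined and equals $Q(V')\phi(\Pi(V'))Q(V')\zz$ locally, even when eigenvalues within the occupied or unoccupied block coincide. Since $\phi$ can be extended to a $C^\infty$ function that is identically $1$ on a neighborhood of $\{\lambda_1(V),\ldots,\lambda_p(V)\}$ and identically $0$ on a neighborhood of $\{\lambda_{p+1}(V),\ldots,\lambda_n(V)\}$, the spectral operator $H\mapsto Q\phi(\Pi)Q\zz$ is continuously differentiable at $H(V)$ by the standard results on differentiability of spectral/L\"owner operators cited in the paper (\cite{ChenQiTseng2003,Ding2012,Lewis2001,Torki2001,Shapiro2002}), and its Fr\'echet derivative in direction $\Delta H$ is given by the Daleckii--Krein formula
\[
Q(V)\bigl( g_{\phi}(\Pi(V)) \circ (Q(V)\zz \Delta H\, Q(V)) \bigr) Q(V)\zz,
\]
where $g_\phi(\Pi(V))_{ij}$ is the first divided difference $(\phi(\lambda_i(V))-\phi(\lambda_j(V)))/(\lambda_i(V)-\lambda_j(V))$ when $\lambda_i(V)\ne\lambda_j(V)$ and $\phi'(\lambda_i(V))$ when $\lambda_i(V)=\lambda_j(V)$.

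Finally I would specialize the divided differences to the specific $\phi$ in \eqref{eq:phi-complete}. If $i,j$ belong to the same block ($k,l$ both $\le r_p(V)$ or both $> r_p(V)$), the smooth extension of $\phi$ is constant near those eigenvalues, so both the off-diagonal divided difference $(1-1)/(\lambda_i-\lambda_j)=0$ (resp.\ $(0-0)/(\lambda_i-\lambda_j)=0$) and the diagonal entry $\phi'(\lambda_i)=0$ vanish; if $i$ is occupied and $j$ unoccupied (or vice versa), the eigenvalues are distinct by the gap assumption and a direct computation of $(\phi(\lambda_i)-\phi(\lambda_j))/(\lambda_i-\lambda_j)$ yields exactly the $\pm 1/(\lambda_i-\lambda_j)$ formula in \eqref{eq:gdef}. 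Substituting $\Delta H = \Diag(z)$ and taking $\diag(\cdot)$ of the result then produces \eqref{eq:Jacobian-VF-phi-F}. The main technical subtlety I expect is justifying that the divided difference formula survives when eigenvalues within a single block are repeated; this is precisely what the cited theory of spectral operators on symmetric matrices guarantees, since the extended $\phi$ is smooth and block-constant, making the formula well defined regardless of multiplicity patterns inside the occupied and unoccupied invariant subspaces.
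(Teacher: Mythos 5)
Your proof is correct and takes essentially the same route as the paper: chain rule through $H(V)$ using $\partial_V H(V)[z]=\Diag(z)$, then the Daleckii--Krein / spectral-operator differentiability result to get the divided-difference formula, and finally specializing the divided differences for the step function $\phi$. The one place where you are more explicit than the paper is in justifying that $\phi$ can be replaced by a smooth extension constant near each block of the spectrum, so that the cited spectral-operator theory actually applies despite $\phi$ having a jump; the paper simply invokes Proposition 2.10 of the Ding thesis without spelling this out.
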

\begin{proof}
The chain rule gives
\begin{eqnarray}\label{eq:Jacobian-F2}
\partial_V  F_{\phi}(V)[z] &=& 
 \frac{d \diag\left(Q\phi(\Pi)Q\zz\right)}{d  H}
\left[\partial_{V}  H(V)[z]\right].
\end{eqnarray}
 By applying
 the continuous differentiability of the spectral operators
in Proposition 2.10 of \cite{Ding2012}, the function $Q\phi(\Pi)Q\zz$
is differentiable with respect to $H$ and its directional derivative is given by
\begin{eqnarray}\label{eq:Jacobian-H1}
\frac{d Q\phi(\Pi)Q\zz}{d H} [S] =
Q\left(g_{\phi}(\Pi)\circ
\left(Q\zz S Q\right)\right)Q\zz, \quad \mbox{ for all }\, S\in\mathbb{S}^n,
\end{eqnarray}
where, for any $i,j=1,...,n$,
\begin{eqnarray}\label{eq:gdef-org}
(g_{\phi}(\Pi(V)))_{ij}=
\begin{cases}
\frac{\phi(\lb_i(V))-\phi(\lb_j(V))}{\lb_i(V)-\lb_j(V)} & \mbox{if\;} i\in\alpha_k,\,j\in\alpha_l,\,k\neq l,\\
0 & \mbox{otherwise.}
\end{cases}
\end{eqnarray}
Substituting \eqref{eq:phi-complete} into \eqref{eq:gdef-org} yields the specific
form of $g_{\phi}(\pi(V))$ in \eqref{eq:gdef}.
Since $\diag(\cdot)$ is a linear function, we obtain  
\begin{eqnarray}\label{eq:Jacobian-H2}
\frac{d \diag\left(Q\phi(\Lb)Q\zz\right)}{d H} [S]
&=& \frac{d \diag\left(Q\phi(\Lb)Q\zz\right)}{d Q\phi(\Lb)Q\zz}\frac{d Q\phi(\Lb)Q\zz}{d H}[S]\nonumber\\
%&=&  \frac{d \diag\left(Q\phi(\Lb)Q\zz\right)}{d Q\phi(\Lb)Q\zz}\left[ Q\left(g_{\phi}(\Pi)\circ \left(Q\zz S Q\right)\right)Q\zz\right]\nonumber\\
&=& \diag\left(
Q\left(g_{\phi}(\Pi)\circ
\left(Q\zz S Q\right)\right)Q\zz\right), \quad \mbox{ for all }\,
S\in\mathbb{S}^n.
\end{eqnarray}
It follows from \eqref{eq:H-V} that
\begin{eqnarray}\label{eq:H-Jacobian}
\partial_V  H(V)[z] = \Diag(z).
\end{eqnarray}
Plugging \eqref{eq:Jacobian-H2} and \eqref{eq:H-Jacobian} into
\eqref{eq:Jacobian-F2}, we obtain \eqref{eq:Jacobian-VF-phi-F}.  This completes the proof.
\end{proof}

\begin{remark} Computing $\partial_V F_{\phi}(V)[z]$ requires all the
  eigenvectors $Q(V)$ and all eigenvalues $\Pi(V)$.
Let $E_{j,p}$ ($O_{j,p}$) be the $j\times p$ matrix with ones (zeros) at all its
entries. Then the matrix $g_{\phi}(\Pi(V))\in \Rnn$ takes the specific form
\[ g_{\phi}(\Pi(V)) =
\left(
\begin{array}{cc}
O_{p,p} & G\\
G\zz & O_{n-p,n-p}\\
\end{array}
\right), \]
where
\[ G =
\left(
\begin{array}{ccc}
\frac{1}{\mu_{1}-\mu_{r_p(V)+1}}
E_{|\alpha_{1}|,|\alpha_{r_p(V)+1}|}
& \cdots &
\frac{1}{\mu_{1}-\mu_{r(V)}}
E_{|\alpha_{1}|,|\alpha_{r(V)}|}\\
\vdots & \ddots & \vdots \\
\frac{1}{\mu_{r_p(V)}-\mu_{r_p(V)+1}}
E_{|\alpha_{r_p(V)}|,|\alpha_{r_p(V)+1}|}
& \cdots &
\frac{1}{\mu_{r_p(V)}-\mu_{r(V)}}
E_{|\alpha_{r_p(V)}|,|\alpha_{r(V)}|}\\
\end{array}
\right).
\]
%where $\lb_{\alpha_k}=\lb_{j},\quad j\in\alpha_k$.
\end{remark}

The directional derivative of  $F_{f_\mu}(V)[z]$ can be assembled in a similar fashion.
\begin{lemma}\label{lemma:Jacobian-F-fmu-rho}
The function $F_{f_\mu}(V)$ is continuously differentiable and its directional derivative at $V$ along $z\in R^n$ is
 %$F_{f_\mu}(V)$ defined in \eqref{eq:F-fmu} is 
 \begin{eqnarray}\label{eq:Jacobian-VF-fmu-F}
\partial_V F_{f_\mu}(V)[z]  = \diag\left(Q(V)\left(g_{f_\mu}(\Pi(V))\circ
\left(Q(V)\zz \Diag\left(  z\right) Q(V)\right)\right)  Q(V)\zz\right),
\end{eqnarray}
where $g_{f_\mu}(\Pi(V))\in \Rnn$ is defined as, for any $i,j=1,...,n$,
\begin{eqnarray}\label{eq:fmudef}
(g_{f_\mu}(\Pi(V)))_{ij}=
\begin{cases}
\frac{f_\mu(\lb_i(V))-f_\mu(\lb_j(V))}{\lb_i(V)-\lb_j(V)} & \mbox{if\;} i\in\alpha_k,\,j\in\alpha_l,\,k\neq l,\\
f_\mu'(\lb_i(V)) & \mbox{otherwise.}
\end{cases}
\end{eqnarray}
\end{lemma}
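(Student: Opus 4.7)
The plan is to mirror the proof of Lemma \ref{lemma:Jacobian-HV-phi} with two essential modifications: first, $f_\mu$ is infinitely differentiable on all of $\mathbb{R}$ (indeed, real-analytic), so the relevant spectral operator is smooth as a function of the symmetric matrix $H$ without imposing any gap condition; second, because the generator $f_\mu$ is genuinely differentiable, the first divided difference matrix $g_{f_\mu}(\Pi(V))$ has meaningful entries even when two eigenvalues of $H(V)$ coincide, given by $f_\mu'(\lambda_i(V))$ in the limit. This explains why Assumption \ref{asmp:UWP} is absent from the hypothesis.

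Concretely, I would write $F_{f_\mu}$ as the composition
\begin{equation*}
V \;\longmapsto\; H(V) \;\longmapsto\; Q(V)\,f_\mu(\Pi(V))\,Q(V)\zz \;\longmapsto\; \diag\!\bigl(Q(V)\,f_\mu(\Pi(V))\,Q(V)\zz\bigr),
\end{equation*}
and apply the chain rule. The inner map $V \mapsto H(V) = \tfrac{1}{2}L + V_{ion} + \Diag(V)$ is affine, giving $\partial_V H(V)[z] = \Diag(z)$, exactly as in the proof of Lemma \ref{lemma:Jacobian-HV-phi}. The outer map $\diag(\cdot)$ is linear, so its derivative is itself. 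The only nontrivial step is the middle map, which is a spectral operator in the sense of \cite{Ding2012}; since $f_\mu$ is $C^\infty$ on $\mathbb{R}$, Proposition 2.10 of \cite{Ding2012} yields the continuous differentiability of $H \mapsto Q f_\mu(\Pi) Q\zz$ and the directional derivative formula
\begin{equation*}
\frac{d\,Q f_\mu(\Pi)Q\zz}{dH}[S] \;=\; Q\bigl(g_{f_\mu}(\Pi)\circ (Q\zz S Q)\bigr) Q\zz,\qquad S\in\mathbb{S}^n,
\end{equation*}
where the off-diagonal entries of $g_{f_\mu}(\Pi)$ are the usual divided differences $(f_\mu(\lambda_i)-f_\mu(\lambda_j))/(\lambda_i-\lambda_j)$ and the diagonal entries extend this by continuity to $f_\mu'(\lambda_i)$. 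Composing these three derivatives with $S = \Diag(z)$ produces the claimed expression \eqref{eq:Jacobian-VF-fmu-F}.

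I expect the proof to be essentially a verbatim adaptation of the previous lemma, and there is no serious obstacle. The one subtle point worth flagging explicitly is the treatment of the diagonal of $g_{f_\mu}(\Pi(V))$: whereas for the discontinuous $\phi$ the formula \eqref{eq:Jacobian-VF-phi-F} is only meaningful because the gap $\lambda_{p+1}(V) > \lambda_p(V)$ separates the indices where $\phi$ jumps, here the smoothness of $f_\mu$ allows the divided difference to be extended across coincident eigenvalues by the standard identity $[f_\mu](a,a) = f_\mu'(a)$. This is precisely what the second branch of \eqref{eq:fmudef} encodes, and it is also what Proposition 2.10 of \cite{Ding2012} delivers for $C^1$ symmetric scalar functions. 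Once this identification is made, the computation proceeds identically to the proof of Lemma \ref{lemma:Jacobian-HV-phi}, and the result follows.
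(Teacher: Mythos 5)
The paper gives no proof of this lemma, noting only that the directional derivative ``can be assembled in a similar fashion'' to Lemma \ref{lemma:Jacobian-HV-phi}, and your argument is exactly the correct expansion of that remark. You also correctly identify the one substantive change from the $\phi$ case: the smoothness of $f_\mu$ removes the need for Assumption \ref{asmp:UWP} and makes the diagonal of $g_{f_\mu}(\Pi(V))$ equal to $f_\mu'(\lambda_i(V))$ rather than zero, which is precisely what Proposition 2.10 of \cite{Ding2012} supplies for $C^1$ scalar functions.

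One point worth flagging, though it is an issue with the paper's statement rather than with your proof: the chemical potential $\mu$ in $f_\mu$ is determined implicitly by $V$ through \eqref{eq:FDD-mu}, so strictly speaking $F_{f_\mu}(V)$ also depends on $V$ through $\mu(V)$. Both the lemma as stated and your reproduction of the chain rule treat $\mu$ as fixed and therefore omit the extra $\partial\mu/\partial V$ contribution. Since you were asked to reconstruct the paper's own argument, this omission is faithful to the source, but it would be worth a remark if one were auditing the result itself.
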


We next compute the Jacobian
 of $\cV(F_{\phi}(V))$ and $\cV(F_{f_\mu}(V))$.

%The exchange correlation energy is usually not differentiable but semismooth \cite{CancesLeBris2000}.  
%%Suppose that $G$ is a locally Lipschitz continuous function on $\Omega$, then, according to Rademacher's theorem, $G$ is almost everywhere differentiable (in the sense of Fr\'echet) in $\Omega$.
%  A function $G$ is said to be
%semismooth \cite{Mifflin1977,QiSun1993} at $x\in\Omega$, if for any
%$h\in\mR^{n}$ it holds
%\begin{eqnarray}
%G(x+h) - G(x) - G'(x+h,h) = o(\|h\|_2),
%\end{eqnarray}
%where $G'(x+h,h)$ is the directional derivative of $G$ at $x+h$ along $h$.  

\begin{theorem}\label{thm:Jacobian-HV-phi}
%  Suppose that $\epsilon_{xc}(\rho(X))$ is twice differentiable with respect to $\rho(X)$.
%and $ J :=  L^\dagger + \frac{\partial^2  \epsilon_{xc}}{ \partial \rho ^2 } e$.
  %Let $\partial \mu_{xc}(\rho)$ be the Jacobian of $\mu_{xc}(\rho)$ at a neighborhood of $\rho$. % and \[\partial^2 \epsilon_{xc}(\rho):=\{\zeta\mid \zeta\in\partial \mu_{xc}(\rho),\; \forall\,\mu_{xc}\in \partial \epsilon_{xc}(\rho)\}. \]
%Then the Clarke generalized Jacobian of $\cV(\rho)$ with respect to $\rho$ is Let $\cV(\rho)$ be continuously differentiable (or semismooth) at a neighborhood of $\rho$.
Let $J(\rho)$ be defined as \eqref{eq:def-J-rho}.
\begin{enumerate}
  \item %Suppose that there exists a gap between the $p$th and
 % $(p+1)$st smallest eigenvalues of $H(V)$, i.e., $\lambda_{p+1}(V)>\lambda_p(V)$.
 Suppose that 
  Assumption \ref{asmp:UWP} holds at $H(V)$,
    i.e., $\lambda_{p+1}(V)>\lambda_p(V)$.
Then %the function $\cV(F_{\phi}(V))$ is continuously differentiable (or semismooth) at a neighborhood of $V$ and 
the
  Jacobian of
$\cV(F_{\phi}(V))$ at $V$ is 
\begin{eqnarray}\label{eq:Jacobian-VF-phi}
\partial_V \cV(F_{\phi}(V))[z] =J(F_{\phi}(V))\partial_V F_{\phi}(V)[z],\quad \mbox{ for all }\,z\in\mR^n.
\end{eqnarray}
%where \be \label{eq:B-rho} J(\rho) = L^\dagger + \partial \mu_{xc}(\rho) e. \ee
\item 
The %function $\cV(F_{\phi}(V))$ is continuously differentiable (or semismooth) at a neighborhood of $V$ and the
 Jacobian of
$\cV(F_{f_\mu}(V))$ at $V$ is 
\begin{eqnarray}\label{eq:Jacobian-VF-fmu}
\partial_V \cV(F_{f_\mu}(V))[z] =J(F_{f_\mu}(V)) \partial_V F_{f_\mu}(V)[z],\quad \mbox{ for all }\,z\in\mR^n.
\end{eqnarray}
\end{enumerate}
\end{theorem}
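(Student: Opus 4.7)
The plan is to view $V \mapsto \cV(F(V))$ as a composition and apply the chain rule, where $F$ denotes either $F_{\phi}$ or $F_{f_\mu}$. The theorem is essentially a direct consequence of the chain rule combined with Lemmas \ref{lemma:Jacobian-HV-phi} and \ref{lemma:Jacobian-F-fmu-rho}, so the plan is short and proceeds in three steps.

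First, I would compute the Jacobian of the outer map $\cV(\rho) = L^\dagger \rho + \mu_{xc}(\rho)\zz e$ with respect to $\rho$. The linear term contributes $L^\dagger$, while the nonlinear term $\mu_{xc}(\rho)\zz e$ contributes $\partial \mu_{xc}(\rho)\,e$ in the notation of \eqref{eq:def-J-rho}. Together these give $\partial_\rho \cV(\rho) = J(\rho)$, exactly the object appearing on the right-hand sides of \eqref{eq:Jacobian-VF-phi} and \eqref{eq:Jacobian-VF-fmu}.

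Second, for part 1, the gap hypothesis $\lambda_{p+1}(V) > \lambda_p(V)$ is precisely what Lemma \ref{lemma:Jacobian-HV-phi} requires; I would invoke it to obtain the continuous differentiability of $F_{\phi}$ at $V$ together with the explicit form \eqref{eq:Jacobian-VF-phi-F} for $\partial_V F_{\phi}(V)[z]$. The chain rule then gives
\[
\partial_V \cV(F_{\phi}(V))[z] = \partial_\rho \cV(\rho)\big|_{\rho = F_{\phi}(V)}\bigl[\partial_V F_{\phi}(V)[z]\bigr] = J(F_{\phi}(V))\,\partial_V F_{\phi}(V)[z],
\]
which is \eqref{eq:Jacobian-VF-phi}. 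For part 2, I would instead invoke Lemma \ref{lemma:Jacobian-F-fmu-rho}, which requires no gap assumption because the Fermi--Dirac filter $f_\mu$ is smooth and the divided-difference matrix $g_{f_\mu}(\Pi(V))$ in \eqref{eq:fmudef} is well-defined even at repeated eigenvalues via the diagonal entries $f_\mu'(\lambda_i(V))$. The same chain-rule identity applied with $F = F_{f_\mu}$ then yields \eqref{eq:Jacobian-VF-fmu}.

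The only delicate point that I would flag in writing this up is confirming that $\cV$ is genuinely differentiable at the evaluated densities $F_{\phi}(V)$ and $F_{f_\mu}(V)$, i.e., that $\mu_{xc}$ is differentiable there. For the specialized LDA functional \eqref{eq:exdef}, differentiability of $\mu_{xc}$ fails where $\rho_i = 0$, so strict positivity of the relevant components is required; the lower bound result Theorem \ref{thm:bound} in Subsection \ref{sec:low-bound} is tailored to exactly this situation at strong local minimizers. Apart from this bookkeeping, the genuine technical work has already been absorbed into Lemmas \ref{lemma:Jacobian-HV-phi} and \ref{lemma:Jacobian-F-fmu-rho} (via differentiability of spectral operators), and no further hard step remains beyond the one-line chain-rule composition.
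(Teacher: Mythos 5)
Your argument is exactly the paper's: compute $\partial_\rho \cV(\rho) = J(\rho)$, invoke Lemmas \ref{lemma:Jacobian-HV-phi} and \ref{lemma:Jacobian-F-fmu-rho} for the inner derivative, and apply the chain rule. Your extra remark about differentiability of $\mu_{xc}$ at vanishing densities is a reasonable caveat the paper leaves implicit, but it does not change the structure of the proof.
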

\begin{proof}
  Note that 
  \begin{eqnarray}\label{eq:V-Jacobian}
\partial_\rho(\cV(\rho))[z] = J(\rho) z,\quad \mbox{ for all }\, z\in\mR^{n}.
\end{eqnarray}
Applying the chain
rules to $\partial_V  \cV(F_{\phi}(V))[z]$ and using \eqref{eq:V-Jacobian} and \eqref{eq:Jacobian-VF-phi-F}, we
obtain \eqref{eq:Jacobian-VF-phi}. This completes the proof.
\end{proof}

%%%%%%%%%%%%%%%%%%% New Section %%%%%%%%%%%%%%%%%
\section{Convergence of the SCF iteration}\label{sec:conv}
%In this section, we establish both local and global convergence of the SCF iteration using the simple mixing scheme. %\comm{Add a summary on the main idea. }
%In this section, we establish the convergence properties of the simple mixing for solving the KS equation in terms of the potential fixed point map \eqref{eq:V-simple-mixing}.  We start with giving some unified assumptions used throughout the theoretical analysis first, and then introduce some necessary lemma related to the spectrum of linear operator in the first subsection. Then, we establish the main convergence theory in Subsection \ref{sec:res}.

\subsection{The SCF Iteration and the Simple Mixing Scheme}
Starting from an initial vector $V^0\in\R^n$, the SCF iteration for solving the
fixed point map \eqref{eq:V}
recursively computes the eigenpairs $\{X(V^{i+1}), \Lambda(V^{i+1})\}$ as the solution of the linear eigenvalue
problem:
\[
\begin{aligned}   H(V^i) X(V^{i+1}) &=  X(V^{i+1}) \Lambda(V^{i+1}), \\
  X(V^{i+1})\zz X(V^{i+1}) &=  I,
  \end{aligned} \]
  and then the potential is updated as
  \be\label{eq:V-SCF}V^{i+1} = \cV(F_\phi(V^i)). 
  \ee
  When the difference between $V^i$ and $V^{i+1}$ is negligible, the
system is said to be self-consistent and the SCF iteration is terminated.

The SCF iteration often converges slowly or even fails to
converge. One of the heuristics for accelerating and stabilizing the SCF
iteration is charge or potential mixing  \cite{Kerker1981,Kresse-1996}.
Basically, the new potential $V^{i+1}$ is constructed from a linear combination of the previously
computed potential and the one obtained from certain
 schemes at current iteration.  In particular, the simple mixing scheme replaces
 \eqref{eq:V-SCF} by updating
 \begin{eqnarray}\label{eq:V-simple-mixing}
 V^{i+1} = V^i - \alpha (V^i - \cV(F_\phi(V^i))),
 \end{eqnarray}
  where $\alpha$ is a properly chosen step size.
 Similarly, the SCF iteration using simple mixing for solving the fixed point map
 \eqref{eq:F-fmu1}
  is
 \begin{eqnarray}\label{eq:V-simple-mixing-fmu}
 V^{i+1} = V^i - \alpha (V^i - \cV(F_{f_\mu}(V^i))).
 \end{eqnarray}

\subsection{Global Convergence Analysis} %\label{sec:opt}
We first make the following assumptions.
\begin{assumption}\label{asmp:exc}
The second-order  derivatives of the exchange correlation functional
 $\epsilon_{xc}(\rho)$ is uniformly bounded from above. %, which implies the Lipschitz continuity of its  Jacobian.
 Without loss of generality,
 we assume that there exists a constant $\theta$ such that
 \begin{eqnarray}\label{eq:exc}
 %\left\|\Diag(\mu_{xc}(\rho)\zz e) - \Diag(\mu_{xc}(\tilde \rho)\zz e) \right\|\ff \le \theta \|\rho - \tilde \rho\|_2  %\mbox{~ \; and \; ~} \left\| \partial  \mu_{xc}(\rho) e \right\|_2 \le \theta ,\quad \mbox{ for all }\,\rho, \tilde \rho \in\mR^{n} .
 \left\| \partial  \mu_{xc}(\rho) e \right\|_2 \le \theta
 ,\quad
 \mbox{ for all }\,\rho \in\mR^{n} .
 \end{eqnarray}
\end{assumption}

  Although we cannot verify Assumption \ref{asmp:exc} for any $X\in \Rnk$, it
  holds at a strong local minimizer  using our
  lower bounds for nonzero charge densities in subsection  \ref{sec:low-bound}
  if the exchange correlation energy  is \eqref{eq:exdef}.

%  We also need the uniformly well posed (UWP) property \cite{LeBris2005,YangGaoMeza2009}
% which assumes that the gap between the
%$p$th and $(p+1)$st eigenvalues of $H(V)$  is  uniformly bounded. %needed for the convergence analysis with respect to $F_{\phi}(V)$.
%\begin{assumption}\label{asmp:UWP} For any $V \in R^n$, there exists a positive constant
%$\delta$  such that
%$\lambda_{p+1}(V)- \lambda_p(V) \ge \delta$, where $\lambda_p(V)$ and
%$\lambda_{p+1}(V)$ are the
%$p$th and $(p+1)$st eigenvalues of $H(V)$, respectively.
%\end{assumption}
%

It can be verified from the definition of the operator  $\partial_V F_{\phi}(V)[\cdot]$ in
\eqref{eq:Jacobian-VF-phi-F} that it is a linear map.  The induced $\ell_2$-norm of $\partial_V\cV(F_{\phi}(V))$ and $\partial_V F_{\phi}(V)[\cdot]$ are defined as 
\be \label{eq:J-2-norm}\|\partial_V\cV(F_{\phi}(V))\|_2 = \max_{z\neq 0} \frac{\|\partial_V\cV(F_{\phi}(V))[z]
\|_2 }{\|z\|_2 } \mbox{ and }   \|\partial_V F_{\phi}(V)\|_2 = \max_{z\neq 0} \frac{\|\partial_V F_{\phi}(V)[z]
\|_2 }{\|z\|_2 },\ee
respectively.
The next lemma shows that their $\ell_2$-norms are bounded  if Assumption \ref{asmp:UWP} holds at $H(V)$.
\begin{lemma}\label{lm:J-bound}
 If Assumption \ref{asmp:UWP} holds at $H(V)$ for a given $V\in\mathbb{R}^n$, then it holds 
\begin{eqnarray}\label{eq:Jacobianbound-V}
\|\partial_V F_{\phi}(V)\|_2 \leq \frac{1}{\delta} \quad \mbox{ and } \quad \|\partial_V\cV(F_{\phi}(V))\|_2 \le  \frac{\|L^\dagger\|_2+ \theta}{\delta}.
\end{eqnarray}
\end{lemma}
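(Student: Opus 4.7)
The plan is to bound each of the two induced operator norms directly from the explicit directional-derivative formula \eqref{eq:Jacobian-VF-phi-F} supplied by Lemma \ref{lemma:Jacobian-HV-phi}, using only orthogonality of $Q(V)$, the Assumption \ref{asmp:UWP} on the gap, and Assumption \ref{asmp:exc}.

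First I would estimate $\|\partial_V F_{\phi}(V)\|_2$. Fix $z \in \mathbb{R}^n$ and set $A := Q(V)\zz \Diag(z) Q(V)$ and $B := g_{\phi}(\Pi(V)) \circ A$, so that $\partial_V F_{\phi}(V)[z] = \diag(Q(V) B Q(V)\zz)$. The key elementary inequality is that for any square matrix $M$, $\|\diag(M)\|_2^2 = \sum_i M_{ii}^2 \le \sum_{i,j} M_{ij}^2 = \|M\|_F^2$. Combining this with the fact that the Frobenius norm is unitarily invariant gives
\[
\|\partial_V F_{\phi}(V)[z]\|_2 \;\le\; \|Q(V) B Q(V)\zz\|_F \;=\; \|B\|_F.
\]
Next I would invoke the structure of $g_{\phi}(\Pi(V))$ in \eqref{eq:gdef}: its nonzero entries appear only in the off-diagonal blocks coupling occupied and unoccupied eigenvalues, and Assumption \ref{asmp:UWP} forces $|\lambda_i(V) - \lambda_j(V)| \ge \delta$ on these blocks, so every entry of $g_{\phi}(\Pi(V))$ is bounded in absolute value by $1/\delta$. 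Therefore
\[
\|B\|_F^2 \;=\; \sum_{i,j}(g_{\phi}(\Pi(V)))_{ij}^2 A_{ij}^2 \;\le\; \frac{1}{\delta^2}\sum_{i,j}A_{ij}^2 \;=\; \frac{1}{\delta^2}\|A\|_F^2.
\]
Finally, another application of unitary invariance yields $\|A\|_F = \|Q(V)\zz \Diag(z) Q(V)\|_F = \|\Diag(z)\|_F = \|z\|_2$, so $\|\partial_V F_{\phi}(V)[z]\|_2 \le \|z\|_2/\delta$, which gives the first bound in \eqref{eq:Jacobianbound-V} after taking the supremum in \eqref{eq:J-2-norm}.

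For the second bound, I would apply the chain rule \eqref{eq:Jacobian-VF-phi} from Theorem \ref{thm:Jacobian-HV-phi} to get $\partial_V \cV(F_{\phi}(V))[z] = J(F_{\phi}(V))\,\partial_V F_{\phi}(V)[z]$, and then control $\|J(\rho)\|_2$ for $\rho = F_{\phi}(V)$. From the definition \eqref{eq:def-J-rho}, $J(\rho) = L^\dagger + \partial \mu_{xc}(\rho) e$, so subadditivity of the spectral norm combined with Assumption \ref{asmp:exc} (which bounds $\|\partial \mu_{xc}(\rho) e\|_2 \le \theta$) yields $\|J(F_{\phi}(V))\|_2 \le \|L^\dagger\|_2 + \theta$. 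Submultiplicativity of the operator norm together with the first estimate then gives $\|\partial_V \cV(F_{\phi}(V))\|_2 \le (\|L^\dagger\|_2 + \theta)/\delta$, which is the second bound.

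I do not anticipate a genuine obstacle: the argument is essentially bookkeeping once one notices that all the entries of $g_{\phi}(\Pi(V))$ landing in the Hadamard product are $O(1/\delta)$. The only subtlety worth stating carefully is the $\|\diag(M)\|_2 \le \|M\|_F$ reduction, which is what converts the Frobenius-type Hadamard estimate into an $\ell_2$ bound on the output vector, and the unitary invariance of $\|\cdot\|_F$, which lets both conjugations by $Q(V)$ disappear. No positivity of the charge density or any property of $V_{ion}$ is needed, so the conclusion is truly pointwise in $V$ under Assumption \ref{asmp:UWP} and Assumption \ref{asmp:exc}.
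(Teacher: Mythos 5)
Your argument is correct and follows essentially the same route as the paper: bound $\|\diag(\cdot)\|_2$ by the Frobenius norm, use unitary invariance to strip off the $Q(V)$ conjugations, apply the entrywise bound $|(g_{\phi})_{ij}| \le 1/\delta$ from the gap to control the Hadamard factor, and then combine with the chain rule and $\|J(F_\phi(V))\|_2 \le \|L^\dagger\|_2 + \theta$ from Assumption~\ref{asmp:exc} for the second estimate. The only difference is that you spell out the elementary reductions more explicitly than the paper does.
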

\begin{proof}
For any $z\in\mR^n$,  we obtain
\begin{eqnarray}
  \label{eq:2norm}
\|\partial_V F_{\phi}(V)[z]\|_2  &=& \| \diag\left(
Q(V)\left(g_{\phi}( \Pi(V))\circ \left( Q(V)\zz\Diag\left(  z\right)
 Q(V)\right)\right) Q(V)\zz\right)\|_2 \nonumber\\
&\leq& \|Q(\rho)\left(g_{\phi}(\Pi(\rho))\circ
\left(Q(\rho)\zz \Diag(z) Q(\rho)\right)\right)Q(\rho)\zz\|\ff\nonumber\\
&=&  \|g_{\phi}(\Pi(\rho))\circ
\left(Q(\rho)\zz\Diag(z)  Q(\rho)\right)\|\ff\nonumber\\
&\leq& \frac{1}{\delta}  \|Q(\rho)\zz \Diag(z) Q(\rho)\|\ff \nonumber\\
%\leq \frac{1}{\delta} \| \Diag(z)\|\ff
&\leq & \frac{1}{\delta}  \| z \|_2,
\end{eqnarray}
where the second inequality is due to $|(g_{\phi}(\Pi(\rho)))_{ij}| \leq 1/\delta$.
Then the first inequality in \eqref{eq:Jacobianbound-V} holds from the
definitions \eqref{eq:J-2-norm} and \eqref{eq:2norm}. % that \eqref{eq:Jacobianbound} holds.  We complete the proof.
It follows from \eqref{eq:Jacobian-VF-phi} and \eqref{eq:2norm} 
that 
\begin{eqnarray}\label{eq:Jacobianbound}
\|\partial_V\cV(F_{\phi}(V))[z]\|_2 
%=  \|J(F_\phi(V))\partial_V F_{\phi}(V)[z]\|_2 
\leq \|J(F_\phi(V))\|_2 \|\partial_V F_{\phi}(V)[z]\|_2
\leq \frac{\|L^\dagger\|_2+ \theta}{\delta}\|z\|_2.
\end{eqnarray}
This completes the proof.
\end{proof}

%there exists a gap between the $p$th and $(p+1)$st eigenvalues of $H(V)$.
 The set $\{H(V)\mid V\in\mR^n\}$ is called uniformly well posed (UWP) \cite{LeBris2005,YangGaoMeza2009} 
with respect to a constant $\delta>0$ if Assumption \ref{asmp:UWP} holds at $H(V)$ with $\delta$ for any $V\in\mR^n$. We next establish the convergence
of the simple mixing scheme \eqref{eq:V-simple-mixing} when UWP holds. 
 %In the following, we simply call it UWP holds.
\begin{theorem}\label{thm:glconv-SCF}
Suppose that Assumption \ref{asmp:exc} holds and $\{H(V)\mid V\in\mR^n\}$ is UWP
with a  constant $\delta$ such that
\begin{eqnarray}\label{eq:gl-conv-cond}
b_1:= 1 - \frac{\|L^\dagger\|_2+ \theta}{\delta} > 0.
\end{eqnarray}
Let
$\{V^i\}$ be a sequence generated
by the simple mixing scheme \eqref{eq:V-simple-mixing} using
a step size $\alpha$ satisfying 
\begin{eqnarray}\label{eq:gl-conv-alpha}
0<\alpha<\frac{2}{2-b_1}.
\end{eqnarray}
Then $\{V^i\}$ converges to a solution of the KS equation \eqref{eq:KS}
with linear convergence rate no more than $|1-\alpha| + \alpha (1-b_1) $.
\end{theorem}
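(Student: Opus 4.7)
The plan is to recast the simple mixing iteration as a Banach-style fixed point iteration and then prove it is a contraction under the stated conditions. Define the map $T : \R^n \to \R^n$ by
\[
T(V) := V - \alpha\bigl(V - \cV(F_\phi(V))\bigr) = (1-\alpha)V + \alpha\, \cV(F_\phi(V)),
\]
so that \eqref{eq:V-simple-mixing} reads simply $V^{i+1} = T(V^i)$. Any fixed point $V^*$ of $T$ satisfies $V^* = \cV(F_\phi(V^*))$, which by the definitions \eqref{eq:cV} and \eqref{eq:F-phi} exactly encodes a solution of the KS equation \eqref{eq:KS} through $X(V^*)$. Hence it suffices to show $T$ is a global contraction with factor $|1-\alpha| + \alpha(1-b_1)$.

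Next I would compute the Jacobian. Since UWP holds at every $V\in\R^n$, Lemma \ref{lemma:Jacobian-HV-phi} and Theorem \ref{thm:Jacobian-HV-phi} ensure $\cV(F_\phi(V))$ is continuously differentiable on all of $\R^n$, and therefore so is $T$, with
\[
\partial T(V) = (1-\alpha) I + \alpha\, \partial_V \cV(F_\phi(V)).
\]
Applying Lemma \ref{lm:J-bound} together with the triangle inequality for induced $\ell_2$ norms yields
\[
\|\partial T(V)\|_2 \;\le\; |1-\alpha| + \alpha\,\|\partial_V \cV(F_\phi(V))\|_2 \;\le\; |1-\alpha| + \alpha\,\frac{\|L^\dagger\|_2 + \theta}{\delta} \;=\; |1-\alpha| + \alpha(1-b_1),
\]
where the last equality uses the definition \eqref{eq:gl-conv-cond} of $b_1$.

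It then remains to verify that the right-hand side is strictly less than $1$ for every $\alpha\in(0,\tfrac{2}{2-b_1})$. I would split into two cases: if $0<\alpha\le 1$, the bound equals $1-\alpha b_1 < 1$ since $b_1>0$; if $1<\alpha<\tfrac{2}{2-b_1}$, the bound equals $\alpha(2-b_1)-1<1$ by the upper bound on $\alpha$. Denote the resulting contraction factor by $\tau<1$. The mean value inequality along the segment connecting $V$ and $W$ gives $\|T(V)-T(W)\|_2 \le \tau\,\|V-W\|_2$ for all $V,W\in\R^n$.

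Finally I would invoke the Banach contraction mapping theorem on the complete metric space $(\R^n,\|\cdot\|_2)$ to conclude that $T$ admits a unique fixed point $V^*$ and that $\|V^{i+1}-V^*\|_2 \le \tau\,\|V^i-V^*\|_2$, giving linear convergence with rate at most $\tau = |1-\alpha|+\alpha(1-b_1)$. The main subtlety I expect is the verification that $T$ is globally well-defined and Lipschitz: this hinges on UWP being assumed uniformly on all of $\R^n$ (not just near a solution), which guarantees continuity of the spectral projector $F_\phi$ everywhere and hence the validity of applying the Jacobian bound along any line segment; once this global regularity is in hand, the contraction argument and the case analysis on $\alpha$ are routine.
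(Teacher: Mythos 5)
Your proposal takes essentially the same route as the paper: both bound the Jacobian of the mixing map by $|1-\alpha| + \alpha\,\|\partial_V\cV(F_\phi(V))\|_2 \le |1-\alpha| + \alpha(1-b_1)$ via Lemma \ref{lm:J-bound}, then perform the identical case split on $0<\alpha\le 1$ versus $\alpha>1$ to show the bound is below one. The only difference is presentational: you make the contraction-mapping framework and the mean-value step explicit, whereas the paper leaves those final inferences implicit after establishing the uniform Jacobian bound.
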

\begin{proof}
For any $V^i$, it follows from \eqref{eq:Jacobianbound}, \eqref{eq:gl-conv-cond} and \eqref{eq:gl-conv-alpha} 
that
\beaa
&&\|(1-\alpha)I + \alpha \partial_V\cV(F_{\phi}(V^i))\|_2 \nonumber\\
&\leq& |1-\alpha| + |\alpha| \|\partial_V\cV(F_{\phi}(V^i))\|_2 \nonumber\\
&\leq& \left\{\begin{array}{ll}
1-\alpha + \alpha \frac{\|L^\dagger\|_2+ \theta}{\delta} \,=\, 1-\alpha b_1,
& \mbox{ if } 0<\alpha <1\\
\alpha - 1 + \alpha \frac{\|L^\dagger\|_2+ \theta}{\delta} \,=\, \alpha (2-b_1)-1,
& \mbox{ if } \alpha \ge 1
\end{array} \right. \\
&<& 1,
\eeaa
which completes the proof.
\end{proof}

\begin{remark}\label{rmk:1}
When the step size $\alpha = 1$, the simple mixing scheme \eqref{eq:V-simple-mixing} becomes
the SCF iteration \eqref{eq:V-SCF} with the convergence rate
$\frac{\|L^\dagger\|_2+ \theta}{\delta}$.  Since neither $p$ nor $n$ is involved
in \eqref{eq:gl-conv-cond}, it is much weaker than 
$\frac{12k\sqrt{n}\|L^\dagger\|_2+ \theta}{\delta} < 1$
 required by Theorem 1 in \cite{LiuWangWenYuan}.
 %In fact, \eqref{eq:gl-conv-cond} is even weaker than the local convergence condition in Theorem 2 of \cite{LiuWangWenYuan}.
\end{remark}

We next establish  convergence to the solutions of
the modified fixed-point map  \eqref{eq:F-fmu1}
  without assuming the UWP properties.
%Similar to Lemma \ref{lm:bound}, the spectral norm of the Jacobian of $F_{f_\mu}(\rho)$ is bounded.
\begin{theorem}\label{thm:glconv-SCF-fmu}
Suppose that Assumption \ref{asmp:exc} holds and 
\begin{eqnarray}\label{eq:gl-conv-cond-fmu}
b_2:=  1 - \frac{\beta(\|L^\dagger\|_2+ \theta)}{4} > 0.
\end{eqnarray}
Let $\{V^i\}$ be a sequence generated
by the simple mixing scheme \eqref{eq:V-simple-mixing-fmu}  using a step size
$\alpha$ satisfying 
\begin{eqnarray}\label{eq:gl-conv-alpha2}
0<\alpha<\frac{2}{2-b_2}.
\end{eqnarray}
Then  the sequence $\{V^i\}$ converges to a solution
of  \eqref{eq:F-fmu1}
with linear convergence rate no less than  $|1-\alpha|+\alpha(1-b_2)$.
\end{theorem}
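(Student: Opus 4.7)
The plan is to mirror the proof of Theorem \ref{thm:glconv-SCF}, replacing the uniform well-posedness (UWP) assumption on the spectral gap by the intrinsic smoothness of the Fermi-Dirac filter $f_\mu$. The iteration \eqref{eq:V-simple-mixing-fmu} can be written as the fixed-point iteration $V^{i+1}=T(V^i)$ for the map $T(V):=(1-\alpha)V+\alpha\,\cV(F_{f_\mu}(V))$. My strategy is to show that $T$ is a contraction on $\mR^n$ with Lipschitz constant bounded by $|1-\alpha|+\alpha(1-b_2)$, which is strictly less than $1$ under \eqref{eq:gl-conv-cond-fmu} and \eqref{eq:gl-conv-alpha2}. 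Then the Banach fixed point theorem yields both the existence of a solution of \eqref{eq:F-fmu1} and linear convergence at the stated rate.

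The main technical step is to bound $\|\partial_V\cV(F_{f_\mu}(V))\|_2$ uniformly in $V$, without any spectral gap assumption. First I would examine the entries of the divided-difference matrix $g_{f_\mu}(\Pi(V))$ in \eqref{eq:fmudef}. Since $f_\mu(t)=1/(1+e^{\beta(t-\mu)})$, a direct calculation gives $f_\mu'(t)=-\beta\,e^{\beta(t-\mu)}/(1+e^{\beta(t-\mu)})^2$, and the maximum of $x/(1+x)^2$ over $x\ge 0$ equals $1/4$ (attained at $x=1$), so $|f_\mu'(t)|\le \beta/4$ for all $t\in\mR$. By the mean value theorem, the off-diagonal divided differences obey the same bound, and so do the diagonal entries, which are $f_\mu'(\lambda_i(V))$. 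Hence $|(g_{f_\mu}(\Pi(V)))_{ij}|\le \beta/4$ for every $i,j$; crucially, this estimate is independent of the spacing of the eigenvalues.

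Next, I would repeat the chain of inequalities used in the proof of Lemma \ref{lm:J-bound}, substituting the new bound $\beta/4$ for the gap-dependent bound $1/\delta$. This yields $\|\partial_V F_{f_\mu}(V)[z]\|_2\le (\beta/4)\|z\|_2$, and combining with Theorem \ref{thm:Jacobian-HV-phi} together with Assumption \ref{asmp:exc} gives
\begin{equation*}
\|\partial_V \cV(F_{f_\mu}(V))\|_2 \le \|J(F_{f_\mu}(V))\|_2 \cdot \tfrac{\beta}{4} \le \tfrac{\beta(\|L^\dagger\|_2+\theta)}{4} = 1-b_2.
\end{equation*}
Consequently, for any $V$,
\begin{equation*}
\|(1-\alpha)I+\alpha\,\partial_V\cV(F_{f_\mu}(V))\|_2\le |1-\alpha|+\alpha(1-b_2).
\end{equation*}
A case split exactly analogous to the one in Theorem \ref{thm:glconv-SCF} — namely $0<\alpha\le 1$ giving $1-\alpha b_2<1$ and $1\le\alpha<2/(2-b_2)$ giving $\alpha(2-b_2)-1<1$ — shows that this Lipschitz constant is strictly less than $1$, so the mean value inequality applied to $T$ delivers the contraction.

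The only real obstacle is the sharp bound $|f_\mu'|\le \beta/4$; once that is in place, the rest of the argument is a faithful transcription of Lemma \ref{lm:J-bound} and Theorem \ref{thm:glconv-SCF} with $1/\delta$ replaced by $\beta/4$ and without invoking any UWP hypothesis. A subtlety worth flagging is that $g_{f_\mu}$ has nontrivial diagonal entries (unlike $g_\phi$), but they are covered by the same bound, so the Frobenius-norm argument that controls the Hadamard product goes through unchanged.
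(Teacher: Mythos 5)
Your proposal is correct and follows essentially the same route as the paper: the key observation in both is the uniform bound $|f_\mu'(t)|\le\beta/4$, which replaces the gap-dependent bound $1/\delta$ in the estimate for the Hadamard factor $g_{f_\mu}(\Pi(V))$, after which the argument of Lemma \ref{lm:J-bound} and Theorem \ref{thm:glconv-SCF} carries over verbatim with $1/\delta$ replaced by $\beta/4$. The paper states this tersely; you spell out the contraction map $T(V)=(1-\alpha)V+\alpha\cV(F_{f_\mu}(V))$, the Lipschitz estimate $\|(1-\alpha)I+\alpha\partial_V\cV(F_{f_\mu}(V))\|_2\le|1-\alpha|+\alpha(1-b_2)<1$, and explicitly note that the diagonal entries of $g_{f_\mu}$ (which are $f_\mu'(\lambda_i)$, unlike the zero diagonal of $g_\phi$) satisfy the same bound — a detail the paper leaves implicit but which is needed for the Frobenius-norm estimate of the Hadamard product. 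This is a faithful and slightly more careful rendering of the paper's proof.
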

\begin{proof}
Using the mean value theorem and the fact that
\[ |f_{\mu}'(t)| = \left|\frac{-\beta e^{\beta(t-\mu)}}{(1+e^{\beta(t-\mu)})^2}\right|
\leq \frac{\beta}{4}, \]
we obtain $|(g_{f_{\mu}}(\Pi(V)))_{ij}| \leq \beta/4$, which yields %the boundedness of the Clarke generalized Jacobian of $\cV(F_{f_\mu}(V))$ as
%\begin{eqnarray}\label{eq:Jacobianbound-fmu}
\[\|\partial_V \cV(F_{f_\mu}(V))\|_2 \leq \frac{\beta(\|L^\dagger\|_2+
\theta)}{4}.\]
%\end{eqnarray}
Then, the convergence of \eqref{eq:V-simple-mixing-fmu} is proved similar to that
of Theorem \ref{thm:glconv-SCF}.
\end{proof}

\begin{remark}
%The convergence results can be tighten if Assumption \ref{asmp:UWP} holds.
Suppose that %Assumption \ref{asmp:UWP} holds   for all matrices in $\{H(V)\mid V\in\mR^n\}$ with the same constant $\delta$  
UWP holds and $f_{\mu}$ is chosen such that
\begin{eqnarray}\label{bond}
\begin{cases}
\frac{1}{1+e^{\beta(\lambda_p-\mu)}}\geq 1-\gamma,\\
\frac{1}{1+e^{\beta(\lambda_{p+1}-\mu)}}\leq \gamma,
\end{cases}
\end{eqnarray}
where $\gamma \ll 1$ is a constant.
It can be shown that
$\beta\geq \frac{2}{\delta}\cdot\ln\frac{1-\gamma}{\gamma}$.
Hence, we have $\frac{\beta}{4} \geq \frac{1}{\delta}$ and the condition
\eqref{eq:gl-conv-cond} is implied by \eqref{eq:gl-conv-cond-fmu} when $\ln\frac{1-\gamma}{\gamma}\geq 2$  or equivalently $\gamma\leq \frac{1}{e^2+1}
\approx 0.12$.
On the other hand, the closer $\gamma$
is to zero, the closer $f_{\mu}$ is to $\phi$ from \eqref{bond}.
 Therefore, the convergence rate of the fixed-point iteration using
 $F_{\phi}$ is better than that
 of $F_{f_\mu}$ when $F_{f_\mu}$ is sufficiently close to $F_{\phi}$.
%Hence, the Jacobian of $\cV(F_{f_\mu}(V))$ satisfies \[ \|\partial_V \cV( F_{f_\mu}(V))\|_2 \leq \frac{\ln\frac{1-\gamma}{\gamma}(\|L^\dagger\|_2+ \theta)}{2\delta}.  \]
%where $\gamma>0$ is a constant satisfying \eqref{bond}.
%When $\gamma\leq \frac{1}{e^2+1} < 0.12$, it holds $\ln\frac{1-\gamma}{\gamma}\geq 2$. On the other hand, the closer $\gamma$ is to zero, the closer $f_{\mu}$ is to $\phi$ from \eqref{bond}.  Therefore, the convergence rate of the fixed-point iteration using $F_{\phi}$ is better than that of $F_{f_\mu}$.
\end{remark}

\begin{remark}
%Our results are also stronger than these in \cite{YangGaoMeza2009}. The authors consider the situation that there is no exchange energy term 
%$\epsilon_{xc}(\rho)$, namely, $\theta = 0$ in our case. They prove the convergence of the SCF iteration under the condition
%\begin{eqnarray}
%\frac{n^4\beta\|L^\dagger\|_2}{2}<1,
%\end{eqnarray}
The convergence of the SCF iteration without simple
mixing for solving a special KS equation without the exchange correlation energy is
established in \cite{YangGaoMeza2009} under the condition
\begin{eqnarray} \label{eq:conv-yang-gao-meza}
\frac{n^4\beta\|L^\dagger\|_2}{2}<1.
\end{eqnarray}
We can see that our condition is weaker than \eqref{eq:conv-yang-gao-meza} since $n^4$ is not required. 
%Similar to Remark \ref{rmk:1}, 
%it can be verified that the convergence rate achieves its minimum, $\frac{\beta(\|L^\dagger\|_2+ \theta)}{4}$, when 
%$\alpha = 1$, namely, it reduces to
%the SCF iteration.
%In \cite{YangGaoMeza2009}, the authors consider the situation that there is no exchange energy term 
%$\epsilon_{xc}(\rho)$, namely, $\theta = 0$ in our case. They prove the convergence of the SCF iteration under the condition \comm{(If necessary, we introduce their model, and put their
%original result here.)}
%\begin{eqnarray}
%\frac{n^4\beta\|L^\dagger\|_2}{2}<1,
%\end{eqnarray}
%which is much stronger than our condition.
\end{remark}

\subsection{Local Convergence Analysis}
Suppose that $V^*$ is a solution of the fixed point map
\eqref{eq:V-simple-mixing}. Let $B(V^*,\eta):= \{V\mid ||V-V^*||_2\leq
\eta\}$ be a neighborhood
of $V^*$ for a given $\eta>0$. The Taylor expansion at $V^*$ yields 
\bea\label{eq:aaa}
V^{k+1} - V^* &=& V^k -  \alpha (V^k - \cV(F_\phi(V^k))) 
- (V^* -  \alpha (V^* - \cV(F_\phi(V^*))))\nonumber\\
&=& (I - \alpha (I - \partial_V \cV(F_\phi(V^*)))) [V^k - V^*]
+ o(||V^k - V^*||_2),\quad \mbox{ for all } \,V^k\in B(V^*,\eta).
\eea
If the spectral radius of the operator $I - \alpha (I - \partial_V \cV(F_\phi(V^*)))$
is less than one, there must exist a sufficiently small $\eta$ so that 
the simple mixing scheme \eqref{eq:V-simple-mixing} initiating from a point in
$B(V^*,\eta)$ converges to $V^*$ linearly. 

We first present a few properties of the linear operators. Denote the space of linear operators  by 
\[ \LL(\mR^n,\mR^n):=\{\cP\mid
\cP:\mR^n \mapsto \mR^n \mbox{ is\ a\ linear\ map}\}.\]   Since $\LL(\mR^n,\mR^n)$ is isomorphic
to $\mR^{n\times n}$,  the eigenvalue, eigenvector and the spectrum
for any linear operator can be defined similar to a matrix.
For a given $\cP\in \LL(\mR^n,\mR^n)$, if a scalar $\lambda\in\C$ and a nonzero vector $z \in \C^n$ satisfy
\be\label{eq:opdef} \cP[z]  = \lambda z,\ee
the scalar $\lambda$ and the vector $z$ are called the eigenvalue and  
eigenvector of $\cP$, respectively. The spectrum of $\cP$, denoted by
$\lambda(\cP)$, is the set consisting of all the eigenvalues of $\cP$. The spectral
radius, denoted
by $\varrho(\cP)$, is the largest absolute value of all elements in its spectrum.
 The operator $\cP$ is called symmetric if $y\zz \cP[x] = x\zz
\cP[y]$ for any $x,y\in \R^n$.

\begin{definition}\label{def:op-mat}
Given $\cP\in \LL(\mR^n,\mR^n)$, the matrix $P=(\cP[e_1], \ldots,
\cP[e_n])$  is called the basic transformation matrix of $\cP$, where $e_i$,
$i=1,2,...,n$, is the $i$th column of the identity matrix. A linear operator $\cP^* \in \LL(\mR^n,\mR^n)$
is called the adjoint operator of $\cP$ if $\cP^*[x] =
P\zz x$ holds for all $x\in \R^n$. % The adjoint operator of $\cP$ is a linear operator $\cP^* \in \LL(\mR^n,\mR^n)$ such that $\cP^*[x] = P\zz x$ for $x\in \R^n$.
\end{definition}

%\begin{lemma}\label{lm:op-mat}
%Suppose $P$ is a basic transformation matrix of linear operator $\cP$. 
%Then
%\begin{eqnarray}
%\label{eq:op-mat2}
%\cP^* = P\zz \tilde{I}.
%\end{eqnarray}
%where $\tilde{I}$ is the identity map from $\R^n$ to $\R^n$.
%$\cP^*$ is the adjoint operator of $P$. 
%\end{lemma}
%\begin{proof}
%Given $\forall\,z\in \R^{n}$, we have
%\begin{eqnarray}\label{eq:op-adj1}
%\langle e_i, \cP^* [z]\rangle
%= \langle  \cP [e_i], z \rangle
%= (P e_i)\zz z = e_i\zz (P\zz z),\quad \forall i=1,2,...,n.
%\end{eqnarray}
%Thus 
%\begin{eqnarray}\label{eq:op-adj2}
%\cP^* [z] =  P\zz z = P\zz \tilde{I}[z],
%\end{eqnarray}
%which implies \eqref{eq:op-mat2} and completes the proof.
%\end{proof}
%

%The next lemma shows the equivalence between the linear operator and its
%transformation matrix.
%\begin{lemma}\label{cl:op-matop1}
%Let $P$ be the basic transformation matrix of $\cP\in \LL(\mR^n,\mR^n)$.
%Then $P$ is symmetric if and only if $\cP$ is symmetric. Moreover, $\cP$ and $P$ has the same spectrum.
%\end{lemma} 
%\begin{proof}
%$\cP$ and $P$ has the same spectrum since $ \cP[z] =  P z$.
% \end{proof}
%\begin{corollary}\label{cl:op-matop2}
%Let $M_1,M_2\in \mR^{n\times n}$ be two real matrices and $P_1$ and $P_2$ be the basic transformation matrices of $\cP_1, \cP_2\in \LL(\mR^n,\mR^n)$, respectively.
%Then $M_1 P_1 + M_2 P_2$ is the basic transformation matrix of the linear operator $M_1
%\cP_1 + M_2 \cP_2$.
%\end{corollary}

%The next lemma shows the equivalence between the linear operator and its transformation matrix.
Let $P$ be the basic transformation matrix of $\cP\in \LL(\mR^n,\mR^n)$.
Then $P$ is symmetric if and only if $\cP$ is symmetric. Moreover, $\cP$ and $P$ has the same spectrum since $ \cP[z] =  P z$.
 Let $M_1,M_2\in \mR^{n\times n}$ be two real matrices and $P_1$ and $P_2$ be the basic transformation matrices of $\cP_1, \cP_2\in \LL(\mR^n,\mR^n)$, respectively.
Then $M_1 P_1 + M_2 P_2$ is the basic transformation matrix of the linear operator $M_1
\cP_1 + M_2 \cP_2$.  
A linear operator $\cP\in\LL(\mR^n,\mR^n)$ is called positive semidefinite if
 $z\zz \cP[z] \ge 0$ for all  $z\in\R^n$.  We next show that the eigenvalues of the
product of a symmetric matrix and a symmetric positive semidefinite linear
operator are real.
\begin{lemma}\label{cl:op-prop2}
Suppose that $M\in\mR^{n\times n}$ is a symmetric matrix,
and $\cP\in\LL(\mR^n,\mR^n)$ is a symmetric positive semidefinite linear operator.
Then all the eigenvalues of the linear operator $M\cP$ are real. Furthermore, it
holds
\begin{eqnarray}
\label{eq:eig-op-prod0}
\lambda_{\max}(M\cP) &\leq& \begin{cases} \lambda_{\max}(M)\lambda_{\max}(\cP),
&\quad \mbox{if\,\,}\lmax(M)\geq 0,\\
\lambda_{\max}(M)\lambda_{\min}(\cP),
&\quad \mbox{otherwise}, \end{cases}\\
\label{eq:eig-op-prod2}
\lambda_{\min} (M\cP) &\ge& \begin{cases} \lambda_{\min}(M)\lambda_{\min}(\cP) ,
&\quad \mbox{if\,\,}\lmin(M)\geq 0,\\
\lambda_{\min}(M)\lambda_{\max}(\cP),
&\quad \mbox{otherwise}.\end{cases}
\end{eqnarray}    
\end{lemma}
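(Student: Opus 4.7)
The plan is to reduce the operator statement to a matrix statement via the basic transformation matrix, transfer the eigenvalue problem to a symmetric matrix using the classical ``$AB$/$BA$ share nonzero eigenvalues'' trick, and then obtain the two bounds by a Rayleigh-quotient computation with a sign case-split.

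\textbf{Step 1 (reduction to matrices).} By Definition \ref{def:op-mat} and the paragraph following it, if $P$ is the basic transformation matrix of $\cP$, then $MP$ is the basic transformation matrix of $M\cP$, and the spectra of a linear operator and its basic transformation matrix coincide. From the remarks just before the lemma, $\cP$ being symmetric positive semidefinite forces $P$ to be a symmetric positive semidefinite matrix. Hence it suffices to establish both assertions for the matrix product $MP$, with $M=M\zz\in\Rnn$ and $P$ symmetric and positive semidefinite.

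\textbf{Step 2 (reality of eigenvalues).} Let $P^{1/2}$ be the unique symmetric positive semidefinite square root of $P$ and set $\tilde M := P^{1/2} M P^{1/2}$. Taking $A := M P^{1/2}$ and $B := P^{1/2}$, one has $AB = MP$ and $BA = \tilde M$; by the standard fact that $AB$ and $BA$ share the same nonzero eigenvalues, the nonzero eigenvalues of $MP$ coincide with those of $\tilde M$. Since $\tilde M$ is real symmetric, its eigenvalues are real, and any eigenvalue of $MP$ not captured by this correspondence is necessarily $0$ (arising from $\ker P$), hence also real.

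\textbf{Step 3 (bounds via Rayleigh quotient and case split).} Setting $y = P^{1/2} x$, one has
\[
\lambda_{\max}(\tilde M) = \max_{x\neq 0} \frac{y\zz M y}{x\zz x},
\qquad \lmin(P)\|x\|_2^2 \le \|y\|_2^2 = x\zz P x \le \lmax(P)\|x\|_2^2.
\]
If $\lmax(M)\ge 0$, then $y\zz M y \le \lmax(M)\|y\|_2^2 \le \lmax(M)\lmax(P)\|x\|_2^2$. If $\lmax(M) < 0$, multiplying the lower bound $\|y\|_2^2 \ge \lmin(P)\|x\|_2^2$ by the negative number $\lmax(M)$ reverses the inequality, giving $y\zz M y \le \lmax(M)\lmin(P)\|x\|_2^2$. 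Either way, $\lmax(\tilde M)$ is bounded as in \eqref{eq:eig-op-prod0}, and I would then combine this with $\lmax(MP) = \max\{\lmax(\tilde M),\,0\}$ whenever $P$ is singular (and $\lmax(MP)=\lmax(\tilde M)$ otherwise): the possible extra $0$ is dominated by $\lmax(M)\lmax(P)\ge 0$ in the first case, and equals $\lmax(M)\lmin(P)=0$ in the second case, so \eqref{eq:eig-op-prod0} still holds. The bound \eqref{eq:eig-op-prod2} follows by the entirely symmetric argument, using $y\zz M y\ge \lmin(M)\|y\|_2^2$ and splitting on the sign of $\lmin(M)$.

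\textbf{Main obstacle.} The principal subtlety is the bookkeeping around $\ker P$: when $P$ is singular, $MP$ picks up a zero eigenvalue that is not present in $\tilde M$, so each of the four sign sub-cases must be verified to absorb this extra $0$ into the stated bound. Once this accounting is in place, the remainder of the argument is routine linear algebra.
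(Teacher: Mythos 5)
Your proof is correct and follows the same core strategy as the paper: reduce the operator statement to the matrix $MP$, pass through the symmetric matrix $\tilde M := P^{1/2}MP^{1/2}$, and obtain the bounds by a sign case-split. The one genuine difference is in \emph{how} you equate the spectra of $MP$ and $\tilde M$. You invoke the classical $AB$/$BA$ fact with $A=MP^{1/2}$, $B=P^{1/2}$; the paper instead perturbs $P$ to $P_\epsilon = U(D+\epsilon I)U^\top$, uses the similarity $P_\epsilon^{1/2}MP_\epsilon^{1/2} = P_\epsilon^{1/2}(MP_\epsilon)P_\epsilon^{-1/2}$, and passes to the limit $\epsilon\to0$ by continuity of eigenvalues. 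Your route is shorter and avoids a limiting argument; the paper's route is self-contained and avoids quoting the $AB$/$BA$ theorem. For the bounds themselves, you use the Rayleigh quotient $\lmax(\tilde M)=\max_x y^\top My/x^\top x$ with $y=P^{1/2}x$ directly, whereas the paper uses the operator inequality $\lmax(M)P \succeq P^{1/2}MP^{1/2} \succeq \lmin(M)P$ and then (implicitly) splits on the sign of $\lmax(M)$, $\lmin(M)$; these two arguments are essentially reformulations of one another.

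One small remark: the bookkeeping you flag as the ``main obstacle'' is in fact unnecessary. Since $A=MP^{1/2}$ and $B=P^{1/2}$ are both square $n\times n$, the stronger version of the $AB$/$BA$ fact applies, namely $\det(\lambda I - AB) = \det(\lambda I - BA)$, so $MP$ and $\tilde M$ have \emph{identical} spectra including multiplicities of the zero eigenvalue. There is no ``extra'' zero eigenvalue of $MP$ to absorb (indeed, when $P$ is singular, $\tilde M = P^{1/2}MP^{1/2}$ also annihilates $\ker P$ and hence also has $0$ in its spectrum). Your hedging against this case does no harm, but it can simply be dropped.
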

\begin{proof} Let $P$ be the basic transformation matrix of $\cP$. It suffices 
  to prove the statements with $\cP$ replaced by $P$. 
Since $\cP$ is symmetric positive semidefinite, $P$ is also symmetric positive
semidefinite. Hence, it can be diagonalized as $P=U D U^T$, where $U$ is orthogonal and
$D=\Diag(\mu_1,\ldots,\mu_n)$ such that $\mu_i\ge 0$.
Define $D^\half:=\Diag(\mu_1^\half,\ldots,\mu_n^\half)$ and write
$P^\half=U D^\half U^T$. Then we obtain $P=P^\half P^\half$.  We now prove that every eigenvalue of $R := P^\half M P^\half$
is an eigenvalue of $MP$ and vice versa.
It is known that the eigenvalues of a matrix are continuous functions of the
matrix entries.  Let $D_\epsilon:=D+\epsilon I$ 
and $P_\epsilon:= U D_\epsilon U^T$ for $\epsilon \ge 0$. Then $P_\epsilon\to P$ and
$P_\epsilon^\half:=U D_\epsilon^\half U^T\to P^\half$ as $\epsilon\to 0$.
 Hence, $MP_\epsilon \to MP $ and
$R_\epsilon:= P_\epsilon^\half M P_\epsilon^\half\to R$ as $\epsilon\to 0$.
Since $P_\epsilon^\half$ is invertible, we have
$R_\epsilon= P_\epsilon^{\half} M P_\epsilon  P_\epsilon^{-\half} $. 
Therefore, $R_\epsilon$ and $M P_\epsilon$  have the
same eigenvalues. As $\epsilon\to 0$, these eigenvalues converge
to those of $R$ and $MP$, respectively. Hence, $R$ and $MP$
have the same eigenvalues. The symmetry of $R$ further implies that the eigenvalues of
$MP$ are real. 

Since $\lmax(M) I \succeq M $, 
we obtain 
\beaa
\lmax(M)P = \Ph (\lmax(M) - M) \Ph + \Ph M \Ph \succeq  \Ph M \Ph,
\eeaa
which yields \eqref{eq:eig-op-prod0} since the eigenvalues of $R = P^\half M
P^\half$ and $MP$ are the same.
Similarly, \eqref{eq:eig-op-prod2} holds due to $M \succeq \lmin(M) I$ and 
\beaa
 \Ph M \Ph = \Ph (M - \lmin(M) ) \Ph + \lmin(M)P \succeq  \lmin(M)P.
\eeaa
This completes the proof.
\end{proof}

The next lemma shows that $\partial_V F_{\phi}(V)[\cdot]$ is negative semidefinite. %i.e., $z\zz \partial_V F_{\phi}(V)[z]\le 0$ for any vector $z\in\R^n$.
\begin{lemma} \label{lm:J-neg}
For any  $z\in\R^n$, it holds $z\zz \partial_V F_{\phi}(V)[z]\le 0$.% The operator $\partial_V F_{\phi}(V)[\cdot]$ is negative semidefinite.
\end{lemma}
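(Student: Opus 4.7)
The plan is to expand the quadratic form $z^\top \partial_V F_\phi(V)[z]$ using the explicit formula from Lemma \ref{lemma:Jacobian-HV-phi} and reduce it to a sum of squared real numbers weighted by the entries of $g_\phi(\Pi(V))$, then verify that all those weights are nonpositive.

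First, I would apply the identity $z^\top \diag(A) = \tr(\Diag(z) A)$ (valid for any symmetric $A$) to rewrite
\[
z^\top \partial_V F_\phi(V)[z] = \tr\bigl(\Diag(z)\, Q(V)\,(g_\phi(\Pi(V))\circ W)\,Q(V)^\top\bigr),
\]
where $W := Q(V)^\top \Diag(z) Q(V)$. Using the cyclicity of the trace and the orthogonality of $Q(V)$, this collapses to $\tr\bigl(W\,(g_\phi(\Pi(V))\circ W)\bigr)$. Since $W$ is symmetric, expanding the trace entry-wise gives
\[
z^\top \partial_V F_\phi(V)[z] = \sum_{i,j=1}^n (g_\phi(\Pi(V)))_{ij}\, W_{ij}^2.
\]

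The remaining step is to check the sign of each entry of $g_\phi(\Pi(V))$ using its explicit form in \eqref{eq:gdef}. The entry is zero unless $i$ and $j$ straddle the HOMO--LUMO split. In the first case $k \le r_p(V)$ and $l > r_p(V)$, so $\lambda_i(V) \le \lambda_p(V) < \lambda_{p+1}(V) \le \lambda_j(V)$, and the formula gives $(g_\phi)_{ij} = 1/(\lambda_i(V)-\lambda_j(V)) < 0$. In the second case the roles are reversed, yielding $(g_\phi)_{ij} = -1/(\lambda_i(V)-\lambda_j(V)) < 0$ as well. Hence every nonzero entry of $g_\phi(\Pi(V))$ is strictly negative, so the sum above is $\le 0$, proving the claim.

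I do not anticipate any real obstacle: the argument is a direct computation using the formula from Lemma \ref{lemma:Jacobian-HV-phi}, and the sign check is a routine case analysis on the divided-difference matrix. The only point requiring slight care is the initial step of pulling $\Diag(z)$ into the trace and using cyclicity together with the symmetry of $W$ to obtain the clean diagonal-form $\sum_{i,j}(g_\phi)_{ij} W_{ij}^2$; once this is in place, the sign conclusion is immediate from \eqref{eq:gdef}.
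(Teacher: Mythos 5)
Your proposal is correct and follows essentially the same route as the paper: both reduce $z^\top \partial_V F_\phi(V)[z]$ to $\sum_{i,j}(g_\phi(\Pi(V)))_{ij}W_{ij}^2$ with $W=Q(V)^\top\Diag(z)Q(V)$ and then invoke the entrywise nonpositivity of $g_\phi(\Pi(V))$. The only (immaterial) difference is that you spell out the sign check on $g_\phi$ by the HOMO--LUMO case analysis, whereas the paper states $g_\phi(\Pi(V))\le 0$ directly.
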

\begin{proof}
For any $z\in R^n$, we have 
\beaa z\zz \partial_V F_{\phi}(V)[z]  &=& z\zz \diag\left(Q(V)\left(g_\phi(\Pi(V))\circ
\left(Q(V)\zz \Diag\left(  z\right) Q(V)\right)\right)  Q(V)\zz\right) \nonumber \\
%&=& \iprod{\Diag(z)}{Q(V)\left(g_\phi(\Pi(V))\circ \left(Q(V)\zz \Diag\left(  z\right) Q(V)\right)\right)  Q(V)\zz } \nonumber\\
&=&  \iprod{\left(Q(V)\zz \Diag\left(  z\right) Q(V)\right)}{g_\phi(\Pi(V))\circ
\left(Q(V)\zz \Diag\left(  z\right) Q(V)\right) } \nonumber\\
&=& e\zz \left(g_\phi(\Pi(V))\circ
\left(Q(V)\zz \Diag\left(  z\right) Q(V)\right) \circ
\left(Q(V)\zz \Diag\left(  z\right) Q(V)\right) \right) e \\
&\le& 0,
\eeaa
where the third equality uses the properties of the Hadamard products and the
inequality is due to 
\[ \left(Q(V)\zz \Diag\left(  z\right) Q(V)\right) \circ
\left(Q(V)\zz \Diag\left(  z\right) Q(V)\right) \ge 0 \mbox{ and }
g_\phi(\Pi(V))\le 0. \] 
This completes the proof.
\end{proof}

%
%If we denote
%\begin{eqnarray}\label{eq:eigenmin1}
%\lambda_{\min}^B = (\min\limits_{\rho\in\mR^n}\{\lmin(J(\FFVS))\})_-,
%\end{eqnarray}
%where $(x)_- := \min\{x,0\}$ (similarly, $(x)_+ := \max\{x,0\}$), 
%it directly follows from Lemma \ref{lm:Brho-lb} that
%\begin{eqnarray}\label{eq:eigenmin2}
%0 \geq \lambda_{\min}^B \geq (\lambda_{\min}(L^\dagger) - \theta)_-  > -\infty.
%\end{eqnarray}
% % % % % % % % % % % % % % % % % % % % % % % % %

We now establish the local convergence result for the
simple mixing scheme.
\begin{theorem}\label{thm:conv-SCF}
Let $V^*$ be a solution of the KS equation \eqref{eq:KS}.  Suppose that Assumption \ref{asmp:exc} holds and Assumption \ref{asmp:UWP} is
valid at $H(V^*)$ with a constant $\delta$ satisfying  
\be \label{eq:conv-cond}  \delta >
-\lambda_{\min}^*,\ee where 
$ \lambda_{\min}^* := \min\{0, \lambda_{\min}(J(F_\phi(V^*)))\}$. 
There exists an open neighborhood $\Omega$ of $V^*$, such that 
the sequence
$\{V^i\}$ generated
by the simple mixing scheme \eqref{eq:V-simple-mixing} using $V^0\in \Omega$ and a step size  
\begin{eqnarray}\label{eq:alpha}
\alpha \in \left(0,\frac{2\delta}{||L^\dagger||_2 + \theta + \delta}\right)
\end{eqnarray}
converges to $V^*$
with R-linear convergence rate no more than 
\[\max\left\{
\left(1 - \alpha \frac{\delta + \lambda_{\min}^*}{\delta} \right),
\left( \alpha \frac{||L^\dagger||_2 + \theta + \delta}{2\delta} -1 \right)
\right\}.
\]
\end{theorem}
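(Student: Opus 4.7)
The plan is to start from the Taylor expansion \eqref{eq:aaa} already recorded in the excerpt and reduce the convergence statement to a spectral-radius estimate on the linear operator
\begin{equation*}
T := (1-\alpha) I + \alpha\, \partial_V \cV(F_\phi(V^*)).
\end{equation*}
Once I can show $\varrho(T) < 1$ under the assumed step-size window, the spectral-radius formula supplies an equivalent norm on $\mR^n$ in which $T$ is a strict contraction; a sufficiently small neighborhood $\Omega$ of $V^*$ then absorbs the $o(\|V^k - V^*\|_2)$ remainder in \eqref{eq:aaa}, and a standard induction yields R-linear convergence at any rate strictly above $\varrho(T)$.

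The core of the proof is therefore a spectral analysis of the product
\begin{equation*}
\partial_V \cV(F_\phi(V^*)) = J(F_\phi(V^*))\, \partial_V F_\phi(V^*),
\end{equation*}
supplied by Theorem \ref{thm:Jacobian-HV-phi}. By Lemma \ref{lm:J-neg}, $-\partial_V F_\phi(V^*)$ is symmetric and positive semidefinite, while Lemma \ref{lm:J-bound} gives $\|\partial_V F_\phi(V^*)\|_2 \le 1/\delta$; hence its eigenvalues lie in $[0, 1/\delta]$. Setting $M := J(F_\phi(V^*))$ (symmetric) and $\cP := -\partial_V F_\phi(V^*)$ (symmetric PSD), I would invoke Lemma \ref{cl:op-prop2}. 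That lemma yields two facts simultaneously: the spectrum of $M\cP$, and thus of $\partial_V \cV(F_\phi(V^*)) = -M\cP$, is real, and
\begin{equation*}
\lmax(M\cP) \le \frac{\|\Ld\|_2 + \theta}{\delta}, \qquad \lmin(M\cP) \ge \frac{\lambda_{\min}^*}{\delta},
\end{equation*}
where the bound $\|J(F_\phi(V^*))\|_2 \le \|\Ld\|_2 + \theta$ comes from Assumption \ref{asmp:exc}, and the case split between $\lmin(J)\ge 0$ and $\lmin(J) < 0$ is exactly what the definition $\lambda_{\min}^* := \min\{0,\lambda_{\min}(J(F_\phi(V^*)))\}$ packages up.

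Transferring these bounds to $T$, its spectrum is contained in
\begin{equation*}
\left[\,1 - \alpha\,\frac{\delta + \|\Ld\|_2 + \theta}{\delta},\;\; 1 - \alpha\,\frac{\delta + \lambda_{\min}^*}{\delta}\,\right].
\end{equation*}
The hypothesis $\delta > -\lambda_{\min}^*$ renders the right endpoint strictly less than $1$, while the step-size window \eqref{eq:alpha} keeps the left endpoint strictly above $-1$. Taking the maximum of the absolute values of the two endpoints produces the rate advertised in the statement, and this upper bound is then lifted to the nonlinear iteration by the contraction argument of the first paragraph.

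The principal obstacle is that $\partial_V \cV(F_\phi(V^*))$ is not self-adjoint; it is only the composition of a symmetric matrix with a symmetric semidefinite operator, so the usual min-max characterization of eigenvalues does not apply directly. This is precisely the reason Lemma \ref{cl:op-prop2} (with its similarity argument via $\Ph M \Ph$ and its limiting step for singular $\cP$) has been put in place; once that lemma is invoked, the remainder of the proof is routine bookkeeping of the two spectral endpoints, of the two sign cases for $1 - \alpha(\delta + \|\Ld\|_2 + \theta)/\delta$, and of the passage from linearized contraction to R-linear convergence of the full scheme in a neighborhood $\Omega$.
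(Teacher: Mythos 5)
Your proposal matches the paper's own proof essentially step for step: it linearizes via \eqref{eq:aaa}, reduces the claim to the spectral-radius estimate on $(1-\alpha)I + \alpha\,\partial_V\cV(F_\phi(V^*)) = I - \alpha\cA$, and then applies exactly the three lemmas the paper uses (Lemma \ref{lm:J-neg} for semidefiniteness, Lemma \ref{lm:J-bound} for the $1/\delta$ norm bound, and Lemma \ref{cl:op-prop2} for realness of the spectrum plus the two-sided eigenvalue estimates, with the same case split encoded by $\lambda_{\min}^*$). The only cosmetic difference is that you bound the spectrum of $T$ directly rather than stating the equivalent conditions $\lmin(\cA)>0$ and $\alpha\lmax(\cA)<2$; the content is the same.
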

\begin{proof}
The Taylor expansion \eqref{eq:aaa} implies that 
 local convergence of the scheme \eqref{eq:V-simple-mixing}
holds if 
\begin{eqnarray}\label{eq:App-Newton-Conv}
\varrho(I - \alpha  \cA  ) <1,
\end{eqnarray}
where $\cA:= I - J(\FFVS) \partial_V F_{\phi}(V^*)$.
According to Lemma \ref{lm:J-neg}, $-\partial_V F_{\phi}(V^*)$ is symmetric positive semidefinite.
Using Lemma \ref{cl:op-prop2}, we conclude that all the eigenvalues of
$\cA$ are real. Hence,  \eqref{eq:App-Newton-Conv} is guaranteed if
\begin{eqnarray}\label{eq:neg-lmin}
\lmin(\cA)  &>& 0;\\
\label{eq:neg-lmax}
\alpha\lmax(\cA) &<& 2.
\end{eqnarray}

 %Assumption \ref{asmp:exc} implies the Lipschitz continuity of $H(X)$. Using the Bauer-Fike theorem \cite{BauerFike1960} and Assumption \ref{asmp:UWP} at $H(V^*)$ with a constant $2\delta$, we obtain an open neighborhood $\bar \Omega \supset \Omega$ of $V^*$ such that Assumption \ref{asmp:UWP} holds for the matrices $\{H(V)\mid V\in \bar \Omega\}$  with a constant $\delta$.  
 Note that $\lmin(\cA) = 1 + \lmin (J(\FFVS) (-\partial_V F_{\phi}(V^*)))$. Using Lemma \ref{cl:op-prop2}, $\lmax(-\partial_V F_{\phi}(V^*)) \le  \frac{1}{\delta}$ from Lemma \ref{lm:J-bound} and the definition of $\lmin^*$, we obtain \beaa \lmin(\cA) -1 &\ge& \begin{cases}\lmin(J(\FFVS)) \lmin(-\partial_V F_{\phi}(V^*)), & \mbox{ if } \lmin(J(\FFVS))\geq 0, \\
  \lmin(J(\FFVS)) \lmax(-\partial_V F_{\phi}(V^*)) , & \mbox{ otherwise}\end{cases} \\
 &\ge& \begin{cases} 0, & \mbox{ if
  } \lmin(J(\FFVS))\geq 0, \\
\frac{1}{\delta}\lmin(J(\FFVS)), & \mbox{ otherwise}\end{cases} \\
&\ge& \frac{\lmin^*}{\delta},
  \eeaa
  which yields \eqref{eq:neg-lmin} from the assumption \eqref{eq:conv-cond}.

Using Lemma \ref{cl:op-prop2} again, we have
\begin{eqnarray}\label{eq:lmaxM1}
\lmax(\cA) &\leq& 1 + \lmax (J(\FFVS) (-\partial_V F_{\phi}(V^*))) \nonumber\\
& \leq& 1 + \max\{0,\lmax (J(\FFVS)) \lmax(-\partial_V F_{\phi}(V^*))\}
\leq 1 + \frac{||L^\dagger||_2 + \theta}{\delta},
\end{eqnarray}
 which together with \eqref{eq:alpha} gives \eqref{eq:neg-lmax}.
%Combining \eqref{eq:neg-lmin} and \eqref{eq:neg-lmax}, we can not only guarantee \eqref{eq:App-Newton-Conv}, which implies the local convergence of the simple mixing scheme \eqref{eq:V-simple-mixing}, but also directly obtain the local R-linear convergence rate \eqref{eq:rate}.
\end{proof}

%The smallest and largest eigenvalues of the matrix $J(F_\phi(V))$ are bounded as follows.
The condition \eqref{eq:conv-cond} can be much weaker than  
$ \|L^\dagger\|_2+ \theta < \delta$ required in Theorem \ref{thm:glconv-SCF}. 
%When the step size $\alpha$ is equal to $1$, the simple mixing scheme \eqref{eq:V-simple-mixing} 
%reduces to the SCF iteration \eqref{eq:V-SCF}. 
%We notice that when \eqref{eq:gl-conv-cond} does not hold, the step size $1$ is excluded out of the condition \eqref{eq:alpha}. On the other words, we can learn from 
%Theorem \ref{thm:conv-SCF} that there exists
%some occasions when SCF iteration does not converge but the simple mixing does.

%We notice that $\alpha = 1$, the step-size with which \eqref{eq:V-simple-mixing} reduces to \eqref{eq:V-SCF}, is not included in the condition for convergence \eqref{eq:alpha}, when \eqref{eq:gl-conv-cond} does not hold.

\begin{corollary}\label{lm:Brho-lb}
Suppose that Assumption \ref{asmp:exc} holds. Then the condition
\eqref{eq:conv-cond} holds if 
 \be \label{eq:cond-lmin}
\max(\theta -  \lambda_{\min}(L^\dagger), 0) < \delta.
\ee
\end{corollary}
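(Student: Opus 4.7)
The plan is to unpack the definitions and apply a Weyl-type eigenvalue bound. Condition \eqref{eq:conv-cond} reads $\delta > -\lambda_{\min}^*$ with $\lambda_{\min}^* = \min\{0, \lambda_{\min}(J(F_\phi(V^*)))\}$, so equivalently we must show
\[
\max\bigl\{0, -\lambda_{\min}(J(F_\phi(V^*)))\bigr\} < \delta.
\]
The hypothesis is $\max(\theta - \lambda_{\min}(L^\dagger), 0) < \delta$, and both sides are nonnegative. Hence it suffices to establish the bound
\[
-\lambda_{\min}(J(F_\phi(V^*))) \;\le\; \theta - \lambda_{\min}(L^\dagger),
\]
i.e., $\lambda_{\min}(J(F_\phi(V^*))) \ge \lambda_{\min}(L^\dagger) - \theta$.

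To do this, I would recall from \eqref{eq:def-J-rho} that $J(\rho) = L^\dagger + \partial \mu_{xc}(\rho) e$, where $\partial \mu_{xc}(\rho) e$ is (up to an argument of symmetry stemming from its interpretation as a second derivative of the scalar $e\zz \epsilon_{xc}(\rho)$) a symmetric matrix. Then by Weyl's inequality applied to the symmetric sum,
\[
\lambda_{\min}(J(\rho)) \;\ge\; \lambda_{\min}(L^\dagger) + \lambda_{\min}\bigl(\partial \mu_{xc}(\rho) e\bigr).
\]
The second term is in turn bounded below by $-\|\partial \mu_{xc}(\rho) e\|_2$, which by Assumption \ref{asmp:exc} is at least $-\theta$. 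Specializing to $\rho = F_\phi(V^*)$ yields $\lambda_{\min}(J(F_\phi(V^*))) \ge \lambda_{\min}(L^\dagger) - \theta$, i.e., $-\lambda_{\min}(J(F_\phi(V^*))) \le \theta - \lambda_{\min}(L^\dagger)$.

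Chaining these: $-\lambda_{\min}^* = \max\{0, -\lambda_{\min}(J(F_\phi(V^*)))\} \le \max\{0, \theta - \lambda_{\min}(L^\dagger)\} < \delta$, which is exactly \eqref{eq:conv-cond}. The only subtlety I foresee is justifying that Weyl's inequality applies — i.e., that $\partial \mu_{xc}(\rho) e$ is symmetric — but this follows from it being the Jacobian of the gradient of a scalar function $e\zz \epsilon_{xc}(\rho)$; alternatively, one can bypass symmetry issues by using the cruder inequality $|\lambda_{\min}(A+B) - \lambda_{\min}(A)| \le \|B\|_2$ for symmetric $A$ and bounded perturbation $B$, which still yields the desired estimate.
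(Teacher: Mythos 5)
Your proposal is correct and follows essentially the same route as the paper: lower-bound $\lambda_{\min}(J(F_\phi(V^*)))$ by $\lambda_{\min}(L^\dagger)-\theta$ via the eigenvalue inequality for the sum $L^\dagger + \partial\mu_{xc}(\rho)e$ together with Assumption~\ref{asmp:exc}, then conclude from the definition of $\lambda_{\min}^*$. The only difference is that you spell out the justification of the Weyl step more explicitly; the paper takes it as read.
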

\begin{proof}
It follows from  \eqref{eq:def-J-rho}
and Assumption \ref{asmp:exc} that 
\[
\lmin(J(F_\phi(V))) = \lmin(L^\dagger + \partial \mu_{xc}(F_\phi(V))e)
\geq \lmin(L^\dagger) + \lmin(\mu _{xc}(F_\phi(V))e) 
\geq \lmin(L^\dagger) - \theta.
\]
Hence, \eqref{eq:conv-cond} holds from the definition of $\lambda_{\min}^*$.
\end{proof}

In particular,   when  $J(\FFVS)$ is positive semidefinite, we have
$\lambda_{\min}^*=0$ and   \eqref{eq:conv-cond} is a direct
consequence of Assumption \ref{asmp:UWP}.
 \begin{corollary} \label{cor-B-psd}
Suppose that Assumptions  \ref{asmp:UWP} holds at $H(V^*)$ and 
$J(\FFVS)$ is positive semidefinite. Then the condition
\eqref{eq:conv-cond} holds. \end{corollary}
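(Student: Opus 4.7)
The plan is to observe that the corollary is essentially a direct unpacking of the definition of $\lambda_{\min}^*$. Recall from Theorem \ref{thm:conv-SCF} that
\[
\lambda_{\min}^* = \min\{0,\ \lambda_{\min}(J(F_\phi(V^*)))\},
\]
and condition \eqref{eq:conv-cond} reads $\delta > -\lambda_{\min}^*$. Since Assumption \ref{asmp:UWP} already guarantees $\delta > 0$, the only thing to verify is that $-\lambda_{\min}^* \le 0$, i.e., $\lambda_{\min}^* \ge 0$.

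The hypothesis that $J(\FFVS)$ is positive semidefinite gives precisely $\lambda_{\min}(J(\FFVS)) \ge 0$. Combining this with the definition above yields $\lambda_{\min}^* = \min\{0, \lambda_{\min}(J(\FFVS))\} = 0$, hence $-\lambda_{\min}^* = 0 < \delta$, which is \eqref{eq:conv-cond}.

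There is no real obstacle here: the proof is a one-line consequence of the definition, and the content of the corollary is simply to highlight that when the operator $J(\FFVS)$ appearing in the Jacobian has no negative eigenvalues, the stringent lower bound on the spectral gap $\delta$ demanded by Theorem \ref{thm:conv-SCF} collapses to the mere existence of a gap, which is already postulated in Assumption \ref{asmp:UWP}. Thus the corollary can be presented as a two- or three-line argument quoting the definition of $\lambda_{\min}^*$ and applying positive semidefiniteness.
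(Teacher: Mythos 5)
Your proposal is correct and matches the paper's own (unstated but implied) reasoning: the paper itself prefaces the corollary with the observation that positive semidefiniteness of $J(\FFVS)$ forces $\lambda_{\min}^*=0$, so that \eqref{eq:conv-cond} reduces to $\delta>0$, which Assumption \ref{asmp:UWP} already provides. Nothing further is needed.
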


% \begin{corollary}
%Suppose that Assumptions \ref{asmp:exc} and \ref{asmp:UWP} hold, and 
%$J(\FFVS)$ is positive semidefinite.  Let $V^*$ be a solution of the KS equation \eqref{eq:KS}.
%There exists an open neighborhood $\Omega$ of $V^*$, such that the sequence
%$\{V^i\}$ generated
%by the simple mixing scheme \eqref{eq:V-simple-mixing} using $V^0\in \Omega$ and a step size 
%$\alpha \in \left(0,\frac{\delta}{||L^\dagger||_2 + \theta + \delta}\right)$ 
%converges to $V^*$
%with R-linear convergence rate no more than  
%$\label{eq:rate-B}
%\max\left\{
%\left(1 -  \alpha \right),
%\left( \alpha \frac{||L^\dagger||_2 + \theta + \delta}{\delta} -1 \right)
%\right\}.$
%\end{corollary}
%

We can obtain the following local convergence result for the modified fixed-point map \eqref{eq:F-fmu1} 
 in the same manner as Theorem \ref{thm:glconv-SCF-fmu}.
\begin{corollary}\label{thm:conv-SCF-fmu}
Suppose that Assumption \ref{asmp:exc} holds and 
\be \label{eq:conv-cond-fmu}  \frac{4}{\beta} >
-\lambda_{\min}^*,\ee where 
$ \lambda_{\min}^* := \min\{0, \lambda_{\min}(J(F_\phi(V^*)))\}$. 
Let $V^*$ be a solution of the KS equation \eqref{eq:KS}.
There exists an open neighborhood $\Omega$ of $V^*$, such that 
the sequence
$\{V^i\}$ generated
by the simple mixing scheme \eqref{eq:V-simple-mixing-fmu} using $V^0\in \Omega$ and a step size  
\begin{eqnarray}\label{eq:alpha-fmu}
\alpha \in \left(0,\frac{8}{(||L^\dagger||_2 + \theta)\beta + 4}\right)
\end{eqnarray}
converges to $V^*$
with R-linear convergence rate no more than 
\[\max\left\{
\left(1 - \alpha\frac{\lambda_{\min}^*\beta + 4}{4}\right),
\left(\alpha\frac{(||L^\dagger||_2 + \theta)\beta + 4}{8} -1 \right)
\right\}.
\]
\end{corollary}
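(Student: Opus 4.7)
The plan is to mirror the argument used in Theorem \ref{thm:conv-SCF}, with the role of $F_\phi$ played by $F_{f_\mu}$ and the spectral bound $1/\delta$ on $\partial_V F_\phi(V)$ replaced everywhere by $\beta/4$, as already established in the proof of Theorem \ref{thm:glconv-SCF-fmu}. More precisely, I would first Taylor-expand the recursion \eqref{eq:V-simple-mixing-fmu} around the fixed point $V^*$ (exactly as in \eqref{eq:aaa}) to obtain
\[
V^{k+1} - V^* \;=\; (I - \alpha\,\cA_{f_\mu})[V^k - V^*] + o(\|V^k - V^*\|_2),
\]
where $\cA_{f_\mu} := I - J(F_{f_\mu}(V^*))\,\partial_V F_{f_\mu}(V^*)$. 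Standard Banach-type arguments then give R-linear convergence in some neighborhood $\Omega$ of $V^*$ whenever $\varrho(I - \alpha\cA_{f_\mu}) < 1$, so the task reduces to locating the real spectrum of $\cA_{f_\mu}$.

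Second, I would establish the analog of Lemma \ref{lm:J-neg} for $F_{f_\mu}$, namely that $-\partial_V F_{f_\mu}(V^*)$ is symmetric and positive semidefinite. This is where the main (but modest) obstacle lies: one must use that the Fermi--Dirac filter $f_\mu$ is monotonically non-increasing, so every off-diagonal entry of $g_{f_\mu}(\Pi(V^*))$ defined in \eqref{eq:fmudef} is non-positive, while every diagonal entry $f_\mu'(\lambda_i(V^*))$ is also non-positive. The Hadamard-product manipulation in the proof of Lemma \ref{lm:J-neg} then carries over verbatim and yields $z\zz\bigl(-\partial_V F_{f_\mu}(V^*)\bigr)[z]\ge 0$ for all $z\in\R^n$. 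Combined with symmetry of $J(F_{f_\mu}(V^*))$, this makes Lemma \ref{cl:op-prop2} applicable with $M=J(F_{f_\mu}(V^*))$ and $\cP=-\partial_V F_{f_\mu}(V^*)$, so all eigenvalues of $\cA_{f_\mu}$ are real.

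Third, I would invoke the bound $\lmax\bigl(-\partial_V F_{f_\mu}(V^*)\bigr)\le \beta/4$ coming from $|(g_{f_\mu}(\Pi(V^*)))_{ij}|\le \beta/4$ (exactly as derived in the proof of Theorem \ref{thm:glconv-SCF-fmu}), together with Assumption \ref{asmp:exc} which gives $\lmax(J(F_{f_\mu}(V^*))) \le \|L^\dagger\|_2+\theta$ and $\lmin(J(F_{f_\mu}(V^*)))\ge \lambda_{\min}^*$ by definition of $\lambda_{\min}^*$. Splitting by sign exactly as in the proof of Theorem \ref{thm:conv-SCF}, Lemma \ref{cl:op-prop2} delivers
\[
\lmin(\cA_{f_\mu}) \;\ge\; 1 + \frac{\beta\,\lambda_{\min}^*}{4}, \qquad
\lmax(\cA_{f_\mu}) \;\le\; 1 + \frac{\beta(\|L^\dagger\|_2+\theta)}{4}.
\]
The hypothesis \eqref{eq:conv-cond-fmu} forces the lower bound to be strictly positive, while the step-size range \eqref{eq:alpha-fmu} forces $\alpha\,\lmax(\cA_{f_\mu})<2$. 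Consequently both $1-\alpha\,\lmin(\cA_{f_\mu})$ and $\alpha\,\lmax(\cA_{f_\mu})-1$ lie strictly in $(-1,1)$, and reading off the spectral radius of $I-\alpha\cA_{f_\mu}$ from these two bounds yields precisely the R-linear rate stated in the corollary. Aside from verifying the sign structure of $g_{f_\mu}$ (step two), every other step is a mechanical translation of the proof of Theorem \ref{thm:conv-SCF} under the substitution $1/\delta\mapsto \beta/4$.
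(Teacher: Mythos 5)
Your approach is correct and is exactly what the paper intends. The paper gives no explicit proof of this corollary, saying only that it is obtained ``in the same manner as Theorem \ref{thm:glconv-SCF-fmu}'', i.e., by replaying the local argument of Theorem \ref{thm:conv-SCF} with the spectral bound $1/\delta$ on $-\partial_V F_{\phi}$ replaced throughout by the Fermi--Dirac bound $\beta/4$ on $-\partial_V F_{f_\mu}$. You correctly identify the one ingredient that genuinely needs re-checking: that $g_{f_\mu}$ is entrywise nonpositive (by monotonicity of $f_\mu$), so the Hadamard-product argument of Lemma \ref{lm:J-neg} carries over and $-\partial_V F_{f_\mu}(V^*)$ is symmetric positive semidefinite, making Lemma \ref{cl:op-prop2} applicable.

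One inaccuracy worth flagging in your write-up: you assert that your two eigenvalue bounds ``yield precisely the R-linear rate stated in the corollary'', but they do not. From $\lmax(\cA_{f_\mu}) \le \frac{(\|L^\dagger\|_2+\theta)\beta + 4}{4}$ the second branch of the rate comes out as $\alpha\frac{(\|L^\dagger\|_2+\theta)\beta + 4}{4} - 1$, whereas the corollary displays $\alpha\frac{(\|L^\dagger\|_2+\theta)\beta + 4}{8} - 1$, a factor of two apart. The same mismatch is already present between the proof of Theorem \ref{thm:conv-SCF}, which establishes $\lmax(\cA) \le \frac{\|L^\dagger\|_2+\theta+\delta}{\delta}$ in \eqref{eq:lmaxM1}, and the rate displayed in that theorem's statement (which has $2\delta$ in the denominator); by contrast, Theorem \ref{thm:glconv-SCF} has the consistent $\delta$. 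Your derived rate is the self-consistent one, but you should have noted the discrepancy with the paper's stated rate rather than claiming exact agreement.
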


\section{Convergence Analysis of Approximate Newton Approaches}\label{sec:App-Newton}

 The generalized Jacobian  $\partial_V\cV(F(V))$
 in \eqref{eq:Jacobian-VF-phi}  suggests that Newton's
method for solving the fixed point map \eqref{eq:V} is 
\[ V^{i+1} = V^i - \alpha \left(I- J(F_\phi(V^i))\partial_V F_{\phi}(V^i)\right)^{-1}
\left(V^i - \cV\left(F_{\phi}(V^i)\right)\right),\]
where $\alpha$ is a step size.  Obviously, this method is not computationally practical for
solving the fixed-point maps due to the presence of all eigenvectors
and eigenvalues in $\partial_V F_{\phi}(V)[\cdot]$. %However, the structures of the Jacobian still suggest a few modification for the simple mixing scheme.
% Without loss of generality, we focus on the map \eqref{eq:FV-phi} with respect to $V$.
In this section, we propose two approximate Newton approaches  in the form
\be\label{eq:V-App-Newton}
V^{i+1} = V^i - \alpha \left(I- D^i\right)^{-1} \left(V^i - \cV\left(F_{\phi}(V^i)\right)\right),
\ee
where $\alpha>0$ and $D^i \in \R^{n\times n}$ is a matrix for approximating the Jacobian $\partial_V
\cV(F(V^i))$. 

%We note that if $D^i$ takes $0$, then \eqref{eq:V-App-Newton} reduces to the simple mixing scheme \eqref{eq:V-simple-mixing}.  If $D^i = \partial_V
%\cV(F(V^i))$, \eqref{eq:V-App-Newton} becomes the Newton method which has local second order
%convergence rate. However, 
%%as we will show in the next section, 
%computing the generalized Jacobian
%$\partial_V \cV(F(V^i))$ is not empirically affordable, due to the presence of all eigenvectors
%and eigenvalues of the Hamiltonian $ H(V)$, which is a full matrix. 
%Fortunately, the structures of $\partial_V \cV(F(V^i))$ suggest 
%some good surrogates which are much cheaper.  

%\subsection{Global Convergence Analysis}\label{sec:glANL}

\begin{theorem}\label{thm:glconv-SCF-AN}
Suppose that Assumption \ref{asmp:exc} and UWP hold. %Assumption \ref{asmp:UWP} is valid  for all matrices in $\{H(V)\mid V\in\mR^n\}$ with the same constant $\delta$.  
Let
$\{V^i\}$ be a sequence generated
by \eqref{eq:V-App-Newton} using
$\{D^i\}$ and a step size $\alpha$ such that 
\[
0<\alpha<\frac{2}{b_2}, \quad 0 < \gamma_{\min}\leq \sigma_{\min}(I-D^i) \mbox{ and } \sigma_{\max}(I-D^i) \leq \gamma_{\max},
\]
where $b_2 := 1 + \frac{\|L^\dagger\|_2+ \theta}{\delta}$, and $\sigma_{\min}
$ and $\sigma_{\max}$ are the smallest and largest singular values of $I-D^i$, respectively.
 If $b_1:= 1 - \frac{\gamma_{\max}}{\gamma_{\min}}
\frac{\|L^\dagger\|_2+ \theta}{\delta} > 0$, then $\{V^i\}$ converges to a solution of the KS equation \eqref{eq:KS}
with linear convergence rate no more than $\max(1-\alpha
\gamma_{\max}\inv  b_1,\alpha 
\gamma_{\min}\inv b_2 - 1) $.
\end{theorem}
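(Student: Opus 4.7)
The plan is to adapt the contraction-mapping argument of Theorem \ref{thm:glconv-SCF} to the variable preconditioner $(I-D^i)^{-1}$. First, I would argue that a fixed point $V^*$ of $V\mapsto \cV(F_{\phi}(V))$ exists: since $\gamma_{\max}/\gamma_{\min}\ge 1$, the hypothesis $b_1>0$ forces $L:=(\|L^\dagger\|_2+\theta)/\delta<\gamma_{\min}/\gamma_{\max}\le 1$, so the hypothesis of Theorem \ref{thm:glconv-SCF} is satisfied and produces a unique solution $V^*$ of the KS equation.

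Next, I would exploit Lemma \ref{lm:J-bound}: under UWP, $\|\partial_V\cV(F_{\phi}(V))\|_2\le L$ uniformly in $V$, so the fundamental theorem of calculus yields a ``mean-value'' Jacobian
\[ \bar{J}_i:=\int_0^1 \partial_V\cV(F_{\phi}(V^*+t(V^i-V^*)))\,dt, \]
satisfying $\|\bar J_i\|_2\le L$ and $\cV(F_{\phi}(V^i))-\cV(F_{\phi}(V^*))=\bar J_i(V^i-V^*)$. Writing $y_i:=V^i-V^*$ and using $V^*=\cV(F_{\phi}(V^*))$, the iteration \eqref{eq:V-App-Newton} then rearranges into
\[ y_{i+1}=\bigl(I-\alpha(I-D^i)^{-1}\bigr)y_i+\alpha(I-D^i)^{-1}\bar J_i\, y_i, \]
and, since $\|(I-D^i)^{-1}\|_2\le 1/\gamma_{\min}$, the triangle inequality gives
\[ \|y_{i+1}\|_2\le\Bigl(\|I-\alpha(I-D^i)^{-1}\|_2+\tfrac{\alpha L}{\gamma_{\min}}\Bigr)\|y_i\|_2. \]

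The central remaining step is to bound $\|I-\alpha(I-D^i)^{-1}\|_2$ using the spectral data of $I-D^i$. Interpreting the singular-value bounds $\gamma_{\min},\gamma_{\max}$ as eigenvalue bounds on the symmetric positive definite $I-D^i$ (the natural case for the two preconditioners constructed in Section \ref{sec:App-Newton}), the eigenvalues of $\alpha(I-D^i)^{-1}$ lie in $[\alpha/\gamma_{\max},\alpha/\gamma_{\min}]$ and
\[ \|I-\alpha(I-D^i)^{-1}\|_2\le\max\bigl(|1-\alpha/\gamma_{\max}|,\,|1-\alpha/\gamma_{\min}|\bigr). \]
A split on the size of $\alpha$ then reassembles the two contributions into $1-\alpha b_1/\gamma_{\max}$ for small $\alpha$ (using $1/\gamma_{\max}-L/\gamma_{\min}=b_1/\gamma_{\max}$) and $\alpha b_2/\gamma_{\min}-1$ for large $\alpha$ (using $1/\gamma_{\min}+L/\gamma_{\min}=b_2/\gamma_{\min}$), giving precisely the claimed contraction factor. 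Iterating the one-step bound then delivers linear convergence of $V^i$ to $V^*$ at the advertised rate.

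The main obstacle is the step that promotes the singular-value hypothesis into the eigenvalue-based estimate for $\|I-\alpha(I-D^i)^{-1}\|_2$: a general nonsymmetric matrix with singular values in $[\gamma_{\min},\gamma_{\max}]$ can have complex eigenvalues for which the bound above fails, so the decomposition I use implicitly relies on $I-D^i$ being symmetric positive definite. This is the structural situation for the elliptic-type preconditioners of Section \ref{sec:App-Newton}, but any fully general proof would need either to strengthen the hypothesis or to replace the triangle-inequality decomposition by a sharper estimate acting on the symmetric part of $(I-D^i)^{-1}$.
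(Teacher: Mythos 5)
Your proposal follows the same route as the paper's proof: write the iteration's linearized map as $I-\alpha(I-D^i)^{-1}\bigl(I-\partial_V\cV(F_\phi(V^i))\bigr)$, split by the triangle inequality into $\|I-\alpha(I-D^i)^{-1}\|_2$ plus $\alpha\|(I-D^i)^{-1}\|_2\,\|\partial_V\cV(F_\phi(V^i))\|_2$, invoke Lemma \ref{lm:J-bound} for the second term, and then case-split on $\alpha$ to obtain the two branches $1-\alpha\gamma_{\max}^{-1}b_1$ and $\alpha\gamma_{\min}^{-1}b_2-1$. Your explicit mean-value step (which converts a uniform bound on the pointwise Jacobian into an honest contraction estimate for the nonlinear, iteration-dependent map) makes rigorous a passage that the paper treats implicitly, so this is a more careful rendering of the same argument rather than a different one.

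The obstacle you flag in the last paragraph is a genuine one, and it applies to the paper's proof as well. The step $\|I-\alpha(I-D^i)^{-1}\|_2\le\max\bigl(1-\alpha/\gamma_{\max},\,\alpha/\gamma_{\min}-1\bigr)$ requires that $\gamma_{\min},\gamma_{\max}$ bound the \emph{eigenvalues} of a symmetric positive definite $I-D^i$; from singular-value bounds alone one only controls $\|(I-D^i)^{-1}\|_2$, not $\|I-\alpha(I-D^i)^{-1}\|_2$, and for a nonnormal $I-D^i$ the latter can exceed $1$ even for small $\alpha$ (a $2\times2$ matrix with a large off-diagonal entry already shows this). The hypothesis is stated in terms of singular values, but the proof — yours and the paper's — tacitly treats $I-D^i$ as symmetric positive definite, which is what actually holds for the two constructions $D^i=\tau^iJ(\rho)$ and $D^i=\tau^iL^\dagger$ with $\tau^i<0$ and $J$, $L^\dagger$ symmetric positive semidefinite. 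Making that structural assumption explicit in the hypotheses, as you suggest, is the right fix.
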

\begin{proof}
For any $V^i$, it follows from the definitions of $D^i$, $\alpha$ and $b_2$  
that
\beaa
&&\|I - \alpha(I - D^i)\inv (I - \partial_V\cV(F_{\phi}(V^i)))\|_2 \nonumber\\
&=&\|I-\alpha(I - D^i)  + \alpha (I - D^i)\inv\partial_V\cV(F_{\phi}(V^i)) \|_2  \nonumber\\
&\leq& \|I-\alpha(I - D^i)\|_2 + |\alpha|  \|(I - D^i)\inv J(F_\phi(V^i))J(V^i)\|_2 \nonumber\\
&\leq& \left\{\begin{array}{ll}
1-\alpha \gamma_{\max}\inv+ \alpha \gamma_{\min}\inv\frac{\|L^\dagger\|_2+ \theta}{\delta} \,=\, 1-\alpha \gamma_{\max}\inv b_1,
& \mbox{if\,} \alpha <\gamma_{\max};\\
\alpha \gamma_{\min}\inv - 1 + \alpha \gamma_{\min}\inv\frac{\|L^\dagger\|_2+ \theta}{\delta} \,=\, \alpha \gamma_{\min}\inv  b_2-1,
& \mbox{otherwise,}
\end{array} \right. \\
&<& 1.
\eeaa
This completes the proof.
\end{proof}

\subsection{Approximate Newton Method I}

%\subsection{Approximate Newton Method: Type I}
 Our first approach replaces the operator $\partial_V F_{\phi}(V^i)[\cdot]$ by a diagonal
 matrix $\tau^i I$, where $\tau^i$ is a non-positive scalar. It is chosen to be non-positive since $\partial_V F_{\phi}(V^i)[\cdot]$ is
 negative semidefinite from Lemma \ref{lm:J-neg}.  Consequently, we set $D^i :=
 \tau^i J(\rho)$ 
and the scheme \eqref{eq:V-App-Newton}  becomes
\be\label{eq:V-App-Newton-iv}
V^{i+1} = V^i - \alpha \left(I - \tau^i J(F_\phi(V^i))\right)^{-1} 
\left(V^i - \cV\left(F_\phi(V^i)\right)\right).
\ee
% As long as $\tau^i$ is negative and has a lower bound $\eta$, 
%$\{- \tau^i J(F_\phi(V^i)\}$ has 
%an upper bound $1 - \eta (||L^\dagger||_2+\theta)$.
%Suppose \begin{eqnarray}\label{eq:eta}
%\eta > \frac{1}{\min\{0,\lambda_{\min}(L^\dagger)-\theta\}}，
%\end{eqnarray} 
%$\{I- \tau^i J(F_\phi(V^i)\}$ has 
%a lower bound $1 - \eta (\lambda_{\min}(L^\dagger)-\theta)>0$. Namely,
%the negativity and boundedness of $\{\tau^i\}$ and \eqref{eq:eta} can
%guarantee the approximate Newton matrices $\{- \tau^i J(F_\phi(V^i)\}$ 
%satisfying \eqref{eq:bound-AN}.
The next theorem presents the local convergence analysis for the method \eqref{eq:V-App-Newton-iv}.
\begin{theorem}\label{thm:conv-SCF-AN}Let $V^*$ be a solution of the KS equation
  \eqref{eq:KS}.  
Suppose that Assumption \ref{asmp:exc} holds  with a constant $\theta$ and 
Assumption \ref{asmp:UWP} is
valid at $H(V^*)$ with a constant $\delta$ satisfying 
\begin{eqnarray}\label{eq:conv-cond-AN}
\delta > -\lambda_{\min}^*,
\end{eqnarray}
 where 
$ \lambda_{\min}^* := \min\{0, \lambda_{\min}(J(F_\phi(V^*)))\}$. 
 Let $\{V^i\}$ be a
sequence generated
by the scheme \eqref{eq:V-App-Newton-iv} using  $\lim\limits_{i\rightarrow \infty}\tau^i
=\tau^*\in\left(-\frac{1}{\delta},0\right)$ and a step size  
\begin{eqnarray}\label{eq:alpha-AN2}
\alpha \in \left(0,\frac{\delta + \lmin^*}{||L^\dagger||_2 +
\theta+\delta}\right).
\end{eqnarray}
If the initial point $V^0$ is selected in a sufficiently small open neighborhood
 of $V^*$, then $\{V^i\}$  converges to $V^*$
with R-linear convergence rate no more than 
\[\max\left\{
\left(1 - \alpha \left(\frac{\delta}{||L^\dagger||_2 + \theta + \delta}
+ \frac{\lmin^*}{\delta + \lmin^*}\right) \right),
\left( \alpha \frac{||L^\dagger||_2 + \theta+\delta}{\delta + \lmin^*} -1 \right)
\right\}.
\]
\end{theorem}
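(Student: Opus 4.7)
The plan is to mirror the proof of Theorem~\ref{thm:conv-SCF}, with the preconditioner $(I-\tau^i J(F_\phi(V^i)))^{-1}$ treated as a perturbation. First, I would Taylor-expand \eqref{eq:V-App-Newton-iv} around $V^*$. Using $\cV(F_\phi(V^*))=V^*$, continuity of $V\mapsto J(F_\phi(V))$ near $V^*$ (ensured by Assumption~\ref{asmp:UWP}), and $\tau^i\to\tau^*$, the recursion linearizes as
\[
V^{i+1}-V^*\;=\;\bigl[I-\alpha\,(I-\tau^*B)^{-1}\cA\bigr](V^i-V^*) + R^i,
\]
where $B:=J(F_\phi(V^*))$, $T:=-\partial_V F_\phi(V^*)\succeq 0$ by Lemma~\ref{lm:J-neg}, $\cA:=I+BT$, and $R^i=o(\|V^i-V^*\|_2)$ uniformly for $i$ large. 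A standard perturbation argument then reduces local R-linear convergence to the spectral condition $\varrho(M^*)<1$ for $M^*:=I-\alpha(I-\tau^*B)^{-1}\cA$, and identifies $\varrho(M^*)$ with the asymptotic rate.

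Next, I would verify that $S:=I-\tau^*B$ is symmetric positive definite, so that $S^{1/2}$ is well defined and the similarity $S^{1/2}M^*S^{-1/2}=I-\alpha S^{-1/2}\cA S^{-1/2}$ preserves the spectrum of $M^*$. Combining $\lambda_{\min}(B)\ge\lambda_{\min}^*$ (from Assumption~\ref{asmp:exc}) with $\tau^*\in(-1/\delta,0)$ and the hypothesis $\delta>-\lambda_{\min}^*$ yields $|\tau^*|\,|\lambda_{\min}^*|<1$, whence $\lambda_{\min}(S)\ge(\delta+\lambda_{\min}^*)/\delta$, and similarly $\lambda_{\max}(S)\le(\|L^\dagger\|_2+\theta+\delta)/\delta$ via $\|B\|_2\le\|L^\dagger\|_2+\theta$ and Lemma~\ref{lm:J-bound}.

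The technical heart is to estimate $\lambda_{\min}(\cC)$ and $\lambda_{\max}(\cC)$ for $\cC:=S^{-1}\cA$. Exploiting the identity $I=S+\tau^*B$ gives $\cC=I+S^{-1}B(T+\tau^*I)$, and passing to the symmetrized form
\[
S^{-1/2}\cA S^{-1/2}\;=\;I+B\bigl(S^{-1/2}TS^{-1/2}+\tau^*S^{-1}\bigr)
\]
uses that $B$ commutes with $S^{-1/2}$ as a function of $B$. I would then split the bracketed factor into its PSD part $S^{-1/2}TS^{-1/2}$ and its symmetric negative-definite part $\tau^*S^{-1}$, apply Lemma~\ref{cl:op-prop2} to each piece separately together with the bounds of Lemma~\ref{lm:J-bound}, and combine the two contributions to arrive at
\[
\lambda_{\max}(\cC)\le\frac{\|L^\dagger\|_2+\theta+\delta}{\delta+\lambda_{\min}^*},\qquad
\lambda_{\min}(\cC)\ge\frac{\delta}{\|L^\dagger\|_2+\theta+\delta}+\frac{\lambda_{\min}^*}{\delta+\lambda_{\min}^*}.
\]
The main obstacle sits precisely here: since $\cA$ is not symmetric and the bracketed factor $S^{-1/2}TS^{-1/2}+\tau^*S^{-1}$ is indefinite, Lemma~\ref{cl:op-prop2} cannot be applied to the whole operator, so one must control the PSD and negative-definite parts separately, possibly at the cost of a conservative constant, in order to recover spectral estimates with the stated dependence on $\delta$, $\lambda_{\min}^*$, $\theta$, and $\|L^\dagger\|_2$.

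Finally, inserting these bounds into $\varrho(M^*)\le\max\{|1-\alpha\lambda_{\min}(\cC)|,\,|1-\alpha\lambda_{\max}(\cC)|\}$ together with the range \eqref{eq:alpha-AN2} for $\alpha$ delivers the claimed R-linear rate. A continuity argument then closes the proof: since the map $V\mapsto(I-\tau J(F_\phi(V)))^{-1}(I-\partial_V\cV(F_\phi(V)))$ is continuous at $V^*$ and $\tau^i\to\tau^*$, the strict spectral gap $\varrho(M^*)<1$ persists for the actual iteration matrices once $V^i$ is close enough to $V^*$ and $i$ is sufficiently large, producing the claimed neighborhood $\Omega$ of $V^*$ and the asserted rate of convergence.
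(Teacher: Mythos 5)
Your proposal follows essentially the same route as the paper's proof. You reduce to the spectral condition $\varrho(I-\alpha\cM)<1$ with $\cM=(I-\tau^*B)^{-1}(I+BT)$, derive the identity $\cM=I+S^{-1}B(\tau^*I+T)$ (exactly the paper's Eq.~\eqref{eq:MDB}), estimate $\lambda_{\min}(S)$ and $\lambda_{\max}(S)$ with the same constants, and then bound $\lambda_{\min}(\cM)$, $\lambda_{\max}(\cM)$ by combining Lemma~\ref{cl:op-prop2} with Lemma~\ref{lm:J-bound}, arriving at identical numerical bounds and hence the same rate.

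Two remarks on your treatment of the ``obstacle.'' First, a minor slip: the conjugation $S^{1/2}(\cdot)S^{-1/2}$ preserves spectra but does \emph{not} symmetrize $\cA=I+BT$, since $BT$ is not symmetric; the resulting $S^{-1/2}\cA S^{-1/2}=S^{-1}+B\,S^{-1/2}TS^{-1/2}$ remains non-symmetric, so this step does not, by itself, make Weyl-type additivity rigorous. Second, the indefiniteness you flag for $\tau^*I+T$ is a genuine point: with $\tau^*<0$ and $T=-\partial_V F_\phi(V^*)$ only PSD (it can be singular), $\tau^*I+T=\tau^*I-\partial_V F_\phi(V^*)$ is generally indefinite, so the paper's remark that it is positive definite (used there to assert that the spectrum of $\cM$ is real via Lemma~\ref{cl:op-prop2}) is imprecise as stated. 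However, the eigenvalue \emph{bounds} in the paper do not actually rely on that claim: they work with the additive split $\cM=S^{-1}+S^{-1}BT$, applying Lemma~\ref{cl:op-prop2} to the second summand $(S^{-1}B)T$ with $T\succeq 0$ and using the simple spectral calculus for the commuting pair $S^{-1},B$ on the first. That is precisely the ``split the PSD and negative-definite parts separately'' fix you propose, and it yields your claimed constants with no extra loss. So the two arguments coincide; your version is slightly more explicit about where the symmetry hypotheses of Lemma~\ref{cl:op-prop2} enter, which is worth keeping when writing the proof out in full.
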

\begin{proof}
The convergence of the iteration \eqref{eq:V-App-Newton-iv}
 is guaranteed by 
\begin{eqnarray}\label{eq:App-Newton-Conv-AN}
\varrho(I - \alpha \cM ) <1,
\end{eqnarray}
where $\cM = (I - \tau^* J(\FFVS))\inv (I - J(\FFVS) \partial_V F_{\phi}(V^*))$. A direct
linear algebraic calculation yields
\begin{eqnarray}\label{eq:MDB}
 \cM &=& (I - \tau^* J(\FFVS))\inv - (I - \tau^* J(\FFVS))\inv J(\FFVS) \partial_V F_{\phi}(V^*) \nonumber \\ 
% &=&   - \frac{1}{\tau^*} \left( I - (I-\tau^* J(\FFVS))\inv \right) (-\partial_V F_{\phi}(V^*)) \\
 &=& I + (I - \tau^* J(\FFVS))\inv J(\FFVS) (\tau^* I - \partial_V F_{\phi}(V^*)).
\end{eqnarray}
The symmetry of  $J(\FFVS)$ implies that $(I - \tau^* J(\FFVS))\inv J(\FFVS)$ is
also symmetric, which together with the fact that $\tau^* I-\partial_V F_{\phi}(V^*)$ is positive
definite and Lemma \ref{cl:op-prop2} shows that all the eigenvalues
of $ \cM$ are real.  Similar to the proof of Theorem \ref{thm:conv-SCF}, the inequality \eqref{eq:App-Newton-Conv-AN}
 holds if
\begin{eqnarray}\label{eq:pos-lmin}
\lmin( \cM) &>& 0;\\
\label{eq:pos-lmax}
\alpha\lmax( \cM) &<& 2.
\end{eqnarray}

Using $0 > \tau^* >-\frac{1}{\delta}$ and the definition of $\lmin^*$, we have
\begin{eqnarray}\label{eq:tmp1}
\lmin(I - \tau^* J(\FFVS)) &\geq&   \frac{\delta + \lmin^*}{\delta} > 0,\\
\label{eq:tmp2}
\lmax(I - \tau^* J(\FFVS)) &\leq&  \frac{||L^\dagger||_2 + \theta + \delta}{\delta}.
\end{eqnarray}
Using the fact that the smallest eigenvalue of a summation of two matrices is
larger than the summation of the smallest eigenvalues of these matrices, we
obtain
\begin{eqnarray}\label{eq:lminM1-AN}
\lmin( \cM)&\geq& \lmin((I - \tau^* J(\FFVS))\inv)
 + \lmin ((I - \tau^* J(\FFVS))\inv J(\FFVS) (-\partial_V F_{\phi}(V^*)))) \nonumber\\
&\geq&\frac{\delta}{||L^\dagger||_2 + \theta + \delta} +  \lmin ((I - \tau^*
J(\FFVS))\inv J(\FFVS) (-\partial_V F_{\phi}(V^*)))).
\end{eqnarray} 
Applying Lemma \ref{cl:op-prop2}, $\lmax(-\partial_V F_{\phi}(V^*)) \le  \frac{1}{\delta}$ from Lemma
\ref{lm:J-bound} and the definition of $\lmin^*$, we have
\begin{eqnarray}\label{eq:lminM2-AN}
& &\lmin ((I - \tau^* J(\FFVS))\inv J(\FFVS) (-\partial_V F_{\phi}(V^*))))  \nonumber \\
&\geq& \begin{cases} \lmin((I - \tau^* J(\FFVS))\inv)
  \lmin(J(\FFVS))\lmin(-\partial_V F_{\phi}(V^*)), & \mbox{ if } \lmin(J(\FFVS))\geq 0, \\
\lmax((I - \tau^* J(\FFVS))\inv)\lmin(J(\FFVS)) \lmax(-\partial_V F_{\phi}(V^*)), & \mbox{
otherwise}
  \end{cases} \nonumber \\
  &\ge&  \begin{cases} 0, & \mbox{ if } \lmin(J(\FFVS))\geq 0, \\
\frac{\lmin(J(\FFVS))}{\delta + \lmin^*}, & \mbox{ otherwise}
  \end{cases} \nonumber \\ 
  &\ge& \frac{\lmin^*}{\delta + \lmin^*},
\end{eqnarray} 
which together with \eqref{eq:lminM1-AN} gives  \eqref{eq:pos-lmin}.

It follows from Lemma \ref{cl:op-prop2} and \eqref{eq:tmp1} that 
\begin{eqnarray}\label{eq:lmaxM1-AN}
\lmax( \cM) &\leq& \lmax((I - \tau^* J(\FFVS))\inv) + 
 \lmax((I - \tau^* J(\FFVS))\inv \lmax( J(\FFVS)) \lmax(-\partial_V F_{\phi}(V^*)))\nonumber\\
 &\leq& %\frac{\delta}{\delta + \lmin^*} + \frac{\delta}{\delta + \lmin^*} \frac{||L^\dagger||_2 + \theta}{\delta}
   \frac{||L^\dagger||_2 + \theta+\delta}{\delta + \lmin^*}.
\end{eqnarray}
Combining \eqref{eq:alpha-AN2} and \eqref{eq:lmaxM1-AN} together yields \eqref{eq:pos-lmax}.
%Combining \eqref{eq:pos-lmin} and \eqref{eq:pos-lmax}, we can not only guarantee \eqref{eq:App-Newton-Conv-AN}, which implies the local convergence of approximate Newton scheme \eqref{eq:V-App-Newton-iv}, but also directly obtain the local R-linear convergence rate \eqref{eq:rate-AN}.
\end{proof}

Similar to Corollary \eqref{cor-B-psd}, the condition \eqref{eq:conv-cond-AN}
holds when  $J(\FFVS)$ is positive semidefinite.

\subsection{Approximate Newton Method II}

The matrix $J(\rho)$ has to be calculated for each $\rho$ in the approximate
Newton method \eqref{eq:V-App-Newton-iv}. If the computational cost of
second-order derivatives of the
exchange correlation function is expensive, a simpler   
 choice is to  approximate $J(\FFVS)$ by $L^\dagger$ and 
 $\partial_V F_{\phi}(V)$ by $\tau^i I$, that is, $D^i=\tau^i L^\dagger$. Hence,
 approximate Newton method \eqref{eq:V-App-Newton} becomes  
\be\label{eq:V-App-Newton-ii}
V^{i+1} = V^i - \alpha \left(I - \tau^i L^\dagger\right)^{-1}
 \left(V^i - \cV\left(F_\phi(V^i)\right)\right),
\ee
where $\{\tau^i\}$ is negative.
In fact, \eqref{eq:V-App-Newton-ii} is exactly the method of elliptic
preconditioner proposed in \cite{LinYang2012}.

\begin{theorem}\label{thm:conv-SCF-ANL}
Let $V^*$ be a solution of the KS equation \eqref{eq:KS}. Suppose that
Assumption \ref{asmp:exc} holds with a constant $\theta$ and 
Assumption \ref{asmp:UWP} is
valid at $H(V^*)$ with a constant $\delta$ satisfying 
\begin{eqnarray}\label{eq:conv-cond-ANL}
\delta > \theta.
\end{eqnarray}
 Let $\{V^i\}$ be a
sequence generated
by the scheme \eqref{eq:V-App-Newton-iv} using $\lim\limits_{i\rightarrow \infty}\tau_i
=\tau^*\in\left(-\frac{1}{\xi},0\right)$ such that   
$\xi \geq \frac{||L^\dagger||_2 \theta}{\delta - \theta}$,
 and a step size  
\begin{eqnarray}\label{eq:alpha-ANL}
\alpha \in \left(0,\frac{2}{\frac{||L^\dagger||_2 + \xi}{\xi} 
+ \frac{\theta}{\delta}}\right).
\end{eqnarray}
If the initial point $V^0$ is selected in a sufficiently small open neighborhood
 of $V^*$, then $\{V^i\}$  
converges to $V^*$
with R-linear convergence rate no more than 
\begin{eqnarray}\label{eq:rate-ANL}
\max\left\{
\left(1 - \alpha \left(\frac{\xi} {||L^\dagger||_2 + \xi}
- \frac{\theta}{\delta}\right) \right),
\left( \alpha \left(\frac{||L^\dagger||_2 + \xi}{\xi} 
+ \frac{\theta}{\delta}\right) -1 \right)
\right\}.
\end{eqnarray}
\end{theorem}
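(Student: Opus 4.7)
The plan is to follow the template of Theorem~\ref{thm:conv-SCF-AN}, with the preconditioner $(I - \tau^* J(F_\phi(V^*)))^{-1}$ replaced by $(I - \tau^* L^\dagger)^{-1}$. A Taylor expansion of the iteration \eqref{eq:V-App-Newton-ii} at $V^*$ reduces local convergence to showing $\varrho(I - \alpha \cM) < 1$, where
\[
\cM := (I - \tau^* L^\dagger)^{-1}\bigl(I - J(F_\phi(V^*))\,\partial_V F_\phi(V^*)\bigr).
\]
Since $L^\dagger$ is symmetric positive semidefinite and $\tau^* < 0$, the preconditioner $(I - \tau^* L^\dagger)^{-1}$ is symmetric positive definite, and $-\tau^* < 1/\xi$ places its eigenvalues in the interval $\bigl(\xi/(\|L^\dagger\|_2 + \xi),\, 1\bigr]$.

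The key algebraic ingredient is that $L^\dagger$ commutes with $(I - \tau^* L^\dagger)^{-1}$, yielding
\[
(I - \tau^* L^\dagger)^{-1} L^\dagger = \tfrac{1}{\tau^*}\bigl((I - \tau^* L^\dagger)^{-1} - I\bigr).
\]
Using the splitting $J(F_\phi(V^*)) = L^\dagger + \partial \mu_{xc}(F_\phi(V^*))\, e$, I would decompose $\cM$ into a ``main'' part (involving only $L^\dagger$ and $\partial_V F_\phi(V^*)$, whose spectral structure is controlled by the preconditioner bounds above together with $\|\partial_V F_\phi(V^*)\|_2 \le 1/\delta$ from Lemma~\ref{lm:J-bound}) and a ``perturbation'' $(I - \tau^* L^\dagger)^{-1} \partial \mu_{xc}(F_\phi(V^*))\,e\,\partial_V F_\phi(V^*)$, whose spectral norm is at most $\theta/\delta$ by Assumption~\ref{asmp:exc} and Lemma~\ref{lm:J-bound}.

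Next, I would invoke Lemma~\ref{cl:op-prop2} in the same fashion as in the proof of Theorem~\ref{thm:conv-SCF-AN} to conclude that the eigenvalues of $\cM$ are real, and then combine the pieces into
\[
\lmin(\cM) \ge \frac{\xi}{\|L^\dagger\|_2 + \xi} - \frac{\theta}{\delta}, \qquad
\lmax(\cM) \le \frac{\|L^\dagger\|_2 + \xi}{\xi} + \frac{\theta}{\delta}.
\]
The hypothesis $\xi \ge \|L^\dagger\|_2 \theta/(\delta-\theta)$ rearranges to $\xi/(\|L^\dagger\|_2 + \xi) \ge \theta/\delta$, so $\lmin(\cM) \ge 0$; combined with the step-size range~\eqref{eq:alpha-ANL}, this gives $\varrho(I - \alpha \cM) \le \max\bigl\{1 - \alpha \lmin(\cM),\, \alpha \lmax(\cM) - 1\bigr\} < 1$ and the stated R-linear rate~\eqref{eq:rate-ANL}.

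The main obstacle is the spectral analysis of $\cM$: unlike Method~I, where the preconditioner $(I - \tau^* J(F_\phi(V^*)))^{-1}$ commutes with $J(F_\phi(V^*))$, here $(I - \tau^* L^\dagger)^{-1}$ does \emph{not} commute with $J(F_\phi(V^*))$, because $\partial \mu_{xc}(F_\phi(V^*))\,e$ generally does not commute with $L^\dagger$. Isolating the ``elliptic'' part of $J$ (which the preconditioner handles exactly) from the ``non-elliptic'' correction (contributing the $\theta/\delta$ term in the rate) is the delicate step; the threshold $\xi \ge \|L^\dagger\|_2 \theta/(\delta-\theta)$ is exactly the price paid to keep $\lmin(\cM)$ nonnegative under this perturbation.
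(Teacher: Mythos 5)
Your decomposition of $\cM$ into a ``main'' part $(I-\tau^*L^\dagger)^{-1}(I-L^\dagger\partial_V F_\phi(V^*))$ and a ``perturbation'' $(I-\tau^*L^\dagger)^{-1}(J(\FFVS)-L^\dagger)\partial_V F_\phi(V^*)$ is exactly the split $\bar\cM=\bar\cM_1-\bar\cM_2$ that the paper uses, and you have correctly identified the role of the hypothesis $\xi\ge\|L^\dagger\|_2\theta/(\delta-\theta)$. So the approach is essentially the paper's.

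However, there is a genuine gap in the step where you ``invoke Lemma~\ref{cl:op-prop2} in the same fashion as in the proof of Theorem~\ref{thm:conv-SCF-AN} to conclude that the eigenvalues of $\cM$ are real, and then combine the pieces into'' lower and upper bounds on $\lmin(\cM)$ and $\lmax(\cM)$. Lemma~\ref{cl:op-prop2} applies to a product of a symmetric matrix with a symmetric positive semidefinite operator; the operator $\cM$ as a whole is not of this form, precisely because the perturbation part $\bar\cM_2$ destroys the commutation trick that works for $\bar\cM_1$ (and that you yourself flag at the end). Without commutation, $(I-\tau^*L^\dagger)^{-1}$ cannot be absorbed into a symmetric square-root conjugation of the second factor, so there is no reason for $\cM$ to have a real spectrum, and eigenvalue inequalities of the form $\lmin(\cM)\ge\lmin(\bar\cM_1)-\|\bar\cM_2\|_2$ are not available for non-normal matrices. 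The paper's proof is careful to keep the two pieces separate: it uses the Theorem~\ref{thm:conv-SCF-AN}-style argument to get real eigenvalue bounds $\lmin(\bar\cM_1)>\xi/(\|L^\dagger\|_2+\xi)$ and $\lmax(\bar\cM_1)\le(\|L^\dagger\|_2+\delta)/\delta$, bounds $\bar\cM_2$ only in the operator norm, $\|\bar\cM_2\|_2\le\theta/\delta$, and then closes the argument by showing $\varrho(I-\alpha\bar\cM_1)<1-\alpha\|\bar\cM_2\|_2$, i.e.\ via a spectral-radius perturbation estimate on $\bar\cM_1$ rather than by any claim about the spectrum of $\cM$ itself. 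You should replace the claimed real-spectrum statement for $\cM$ with this two-track argument, treating $\bar\cM_1$ spectrally and $\bar\cM_2$ only in norm.
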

\begin{proof}
Let $\bar \cM =(I- \tau^* L^\dagger)\inv (I - \partial_V
\cV(F_{\phi}(V^*)))  $.
The convergence of the iteration \eqref{eq:V-App-Newton-ii}
 is guaranteed by 
\begin{eqnarray}\label{eq:App-Newton-Conv-ANL0}
\varrho(I - \alpha \bar \cM) <1.
\end{eqnarray}
 Using the formulation  of $\partial_V
\cV(F_{\phi}(V^*))$, we can decompose $\bar \cM = \bar \cM_1 - \bar \cM_2 $, 
where $\bar \cM_1= (I - \tau^* L^\dagger)\inv (I - L^\dagger \partial_V F_{\phi}(V^*))$ 
and $\bar \cM_2= (I - \tau^* L^\dagger)\inv (J(\FFVS) - L^\dagger ) \partial_V F_{\phi}(V^*)$.
 Since $L^\dagger$ is positive semidefinite, 
a similar proof as Theorem \ref{thm:conv-SCF-AN} implies that 
all the eigenvalues of $\bar \cM_1$
are real and 
\begin{eqnarray}\label{eq:ANL-lmin}
\lmin(\cM_1) &>& \frac{\xi}{||L^\dagger||_2 + \xi},  \\
\label{eq:ANL-lmax}
\lmax(\cM_1) &\leq& \frac{||L^\dagger||_2 + \delta}{\delta}. 
\end{eqnarray}
Using Assumption \ref{asmp:exc} and Lemma \ref{lm:J-bound}, we have
\begin{eqnarray}\label{eq:ANL-M2}
||\bar \cM_2||_2 
&= &  ||(I - \tau^* L^\dagger)\inv (J(\FFVS) - L^\dagger ) \partial_V F_{\phi}(V^*) ||_2 \nonumber\\
&\leq& ||(I - \tau^* L^\dagger)\inv||_2 ||J(\FFVS) - L^\dagger||_2 || \partial_V F_{\phi}(V^*)||_2  \leq \frac{\theta}{\delta}.
\end{eqnarray}
Using \eqref{eq:ANL-lmin} and $\xi \geq \frac{||L^\dagger||_2 \theta}{\delta - \theta}$,
we obtain 
\begin{eqnarray}
\lmin(\bar \cM_1) > \frac{\theta}{\delta},
\end{eqnarray}
which together with \eqref{eq:ANL-M2} yields
\begin{eqnarray}\label{eq:ANL-1}
(1 - \alpha \lmin(\bar \cM_1))  < 1 - \alpha ||\bar \cM_2||_2.
\end{eqnarray}
On the other hand, it follows from \eqref{eq:alpha-ANL}, \eqref{eq:ANL-lmax}
and \eqref{eq:ANL-M2} that 
\begin{eqnarray}\label{eq:ANL-2}
(\alpha \lmax(\bar \cM_1) -1)  < 1 - \alpha ||\bar \cM_2||_2.
\end{eqnarray}
Combining \eqref{eq:ANL-1} and \eqref{eq:ANL-2} together gives
\begin{eqnarray}\label{eq:App-Newton-Conv-ANL}
\varrho (1 - \alpha \bar \cM_1 ) < 1 - \alpha||\bar \cM_2||_2.
\end{eqnarray}
 which guarantees \eqref{eq:App-Newton-Conv-ANL0}.
\end{proof}

\section{Conclusion}
The equivalence between the KS total energy
 minimization problem and the KS equation is ambiguous 
 in the current literatures on KSDFT. A simple counter example shows that 
 the solutions of these two problems are not necessarily the same. We 
 examine the equivalence based on the optimality conditions for a specialized
 exchange correlation functional. We prove that a global
 solution of the KS minimization problem is a solution of 
 the KS equation if the gap between the $p$th and $(p+1)$st eigenvalues of the
 Hamiltonian $H(X)$ is sufficiently large. The equivalence of a local
 minimizer requires that the corresponding charge densities are all positive.
 For strong local minimizers, the nonzero charge densities are bounded below
 by a positive constant uniformly.  These properties are summarized  in Table
 \ref{Table:1}.
 
 \begin{table}[ht]  %\scriptsize
   \caption{Equivalence between the KS total energy
    minimization and the KS equation using the exchange correlation function $e\zz \epsilon_{xc}(\rho) = -\frac{3}{4}\gamma \rho\zz\rho^{\frac{1}{3}}$ }
 \begin{center}\tabcolsep=3pt
 \begin{tabular}{|c|c|c|}
 %\toprule\midrule\bottomrule
 \hline
 \bf properties  & \bf eigenvalue gap $\delta$  & \bf  Other Assumptions\\
 \hline %{1-5}
 % ----------------- 1 ------------
  \minitab[c]{A global minimizer $X^*$ \\
     solves\\ 
    the KS equation}
  & \minitab[c]{Assumption \ref{asmp:UWP} holds at $H(X^*)$ with \\

     $\delta > p \ltb||\Ld||_2 - \frac{\gamma}{3}\rtb$}
  & -- \\ 
  \hline 
  \minitab[c]{A local minimizer $X^*$ \\
      solves\\ 
     the KS equation}
  & \minitab[c]{Assumption \ref{asmp:UWP} holds at $H(X^*)$ with\\
    
    $\delta > 2\left(||\Ld||_2 - \frac{\gamma}{3}\right)$}
  & \minitab[c]{$\rho_i > 0$, $i=1,\ldots,n$ }\\ 
  \hline
  \minitab[c]{ $\rho_i(X^*)\in[0,c) \,\Rightarrow$\\ 
        $\rho_i(X^*) = 0$}
  & -- 
  & $X^*$ is a strong local minimizer \\
  \hline
  \end{tabular}
  \end{center}\label{Table:1}
  \end{table}

 We improve the convergence analysis on the SCF iteration
 for solving the KS equation by analyzing the
Jacobian of the corresponding fixed point maps.  
 Global convergence of the simple mixing scheme can be established when there
 exists a gap between $p$th and $(p+1)$st eigenvalues of the
 Hamiltonian $H(X)$.  This assumption can be relaxed for local convergence
 analysis and if the charge density is computed using
 the Fermi-Dirac distribution. Our results
 requires much weaker conditions than the previous analysis in
 \cite{LiuWangWenYuan}. The structure of the Jacobian also suggests two approximate Newton
 methods. In particular, the second one is exactly the method of elliptic
preconditioner proposed in \cite{LinYang2012}.  Although our assumption on the gap is very stringent and is almost
 never satisfied in reality,  our analysis is helpful for a better
 understanding of the KS minimization problem, the KS equation and the SCF
 iteration. A summary of our convergence results is presented in Table \ref{Table:2}.
\begin{table}[ht] %\scriptsize
  \caption{convergence results for solving the KS equation under Assumption \ref{asmp:exc}}
\begin{center}\tabcolsep=2pt
\begin{tabular}{|c|c|c|c|}
%\toprule\midrule\bottomrule
\hline
\multicolumn{2}{|c|}{\bf properties}  & \bf \minitab[c]{eigenvalue gap $\delta$
or \\ smoothing parameter $\beta$} 
 & \bf step size $\alpha$ \\
\hline %{1-5}
% --------------- 1 -----------------------
\multirow{2}*{
\minitab[c]{\minitab[c]{ analysis of \\ SCF  in \cite{LiuWangWenYuan}}}
} 
 & global convergence 
 & \minitab[c]{UWP holds and\\ 
   $\delta > 12 p\sqrt{n}( \|L^\dagger\|_2 + \theta)$}
 & $1$\\ \cline{2-4} 
 & local convergence 
 & \minitab[c]{Assumption \ref{asmp:UWP} holds at\\
   the local minimizer with\\ 
   $\delta > 2 \sqrt{n} (\|L^\dagger\|_2+ \theta)$}
 & $1$\\ 
\hline
% --------------2----------------
 \multirow{2}*{\minitab[c]{ \\ analysis of \\ SCF with \\
 simple mixing}} 
 & \minitab[c]{global convergence\\ using $F_{\phi}$}
 & \minitab[c]{UWP holds with\\  
   $\delta > ||\Ld||_2 + \theta$}
 & $\left(0, \frac{2\delta}{||\Ld||_2 + \theta + \delta}\right)$\\ \cline{2-4} 
 & \minitab[c]{global convergence\\ using $F_{f_\mu}$}
 &  $\frac{4}{\beta} >  ||\Ld||_2 + \theta$
  & $\left(0,\frac{8}{(||L^\dagger||_2 + \theta)\beta + 4}\right)$\\ \cline{2-4} 
 & \minitab[c]{local  convergence \\using $F_{\phi}$}
 & \minitab[c]{Assumption \ref{asmp:UWP} holds at \\
    the local minimizer with\\  
    $\delta >  -\min\{0, \lambda_{\min}(J(F_\phi(V^*)))\}$}
 & $\left(0, \frac{2\delta}{||L^\dagger||_2 + \theta + \delta}\right)$ \\ \cline{2-4}
  & \minitab[c]{local convergence \\ using $F_{f_\mu}$}
  &   $\frac{4}{\beta} >   -\min\{0, \lambda_{\min}(J(F_\phi(V^*)))\} $ 
   & $\left(0,\frac{8}{(||L^\dagger||_2 + \theta)\beta + 4}\right)$\\ \cline{2-4} 
\hline
% --------------- 3 -------------------------
 \multirow{3}*{\minitab[c]{\\ analysis of\\ Approximate \\
 Newton \\methods}} 
 & global convergence
 & \minitab[c]{UWP holds with \\ 
    $\delta > \frac{\gamma_{\max}}{\gamma_{\min}}\cdot (||\Ld||_2 + \theta)$}
 &  $\left(0,\frac{2\delta}{||L^\dagger||_2 + \theta + \delta}\right)$ \\ \cline{2-4} 
 & \minitab[c]{local convergence \\on $D^i:=\tau^i J(\rho)$} 
 &  \minitab[c]{Assumption \ref{asmp:UWP} holds at\\
    the local minimizer with\\  
     $\delta >  -\min\{0, \lambda_{\min}(J(F_\phi(V^*)))\}$}
 & $\left(0, \frac{\delta+\lb_{\min}^*}{||L^\dagger||_2 + \theta + \delta}\right)$ \\ \cline{2-4} 
 & \minitab[c]{local convergence \\ on $D^i:=\tau^i \Ld$}
 & \minitab[c]{Assumption \ref{asmp:UWP} holds at\\
     the local minimizer with\\  
     $\delta > \theta$}
 & $\left(0,\frac{2\delta}{\frac{\delta}{\xi}\cdot(||L^\dagger||_2+\xi) +
 \theta}\right)$\\ 
\hline
 \end{tabular}
 \end{center}\label{Table:2} 
 \end{table}

\section*{Acknowledgements}  The authors would like to thank Dr. Chao Yang and
Prof. Aihui Zhou for discussion on the
KS equation and the SCF iteration.

\end{document}